\def\final{1}
\colorlet{c1col}{red!80!darkgray}
\colorlet{c2col}{blue!80!darkgray}
\colorlet{c3col}{green!80!darkgray}
\tikzset{
  a/.style={draw=none,inner sep=-.1pt,outer sep=-.	1pt},
  a2/.style={draw=none,inner sep=-1pt,outer sep=-1pt},
  c1t/.style={draw=c1col,line width=5pt, line cap=round, line join=round},
  c2t/.style={draw=c2col,line width=5pt, line cap=round, line join=round},
  c1/.style={draw=c1col,line width=1.4pt, line cap=round},
  c2/.style={draw=c2col,line width=1.4pt, line cap=round},
  c3t/.style={draw=c3col,line width=5pt, line cap=round},
  c3/.style={draw=c3col,line width=1.4pt, line cap=round}
}
\newif\ifcomment\commentfalse
\def\commentON{\commenttrue}
\long\outer\def\bcl#1\ecl{{\ifcomment \sloppy  \textcolor{red}{{{#1}}}\fi }}
\long\outer\def\BCL#1\ECL{{\ifcomment \sloppy \par \textcolor{blue}{\#  \dotfill
{\textsc{#1}} \dotfill \#} \par \fi }}
\long\outer\def\BCA#1\ECA{{\ifcomment \sloppy \par \textcolor{green}{\#  \dotfill
{\textsc{#1}} \dotfill \#} \par \fi }}
\newcommand{\nnote}[1]{}
\newcommand{\anote}[1]{}
\newcommand{\lnote}[1]{}
\newcommand{\fnote}[1]{}
\newcommand{\todo}[1]{}
\newcommand{\modified}[1]{#1}
\newcommand{\nnote}[1]{\begingroup \color{blue!60!black} \em Neil: #1 \endgroup}
\newcommand{\anote}[1]{\begingroup \color{blue!60!black} \em Anke: #1 \endgroup}
\newcommand{\fnote}[1]{\begingroup \color{blue!60!black} \em Frans: #1 \endgroup}
\newcommand{\lnote}[1]{\begingroup \color{blue!60!black} \em Leen: #1 \endgroup}
\newcommand{\modified}[1]{\begingroup \color{purple!60!black} #1\endgroup}
\newcommand{\todo}[1]{\begingroup \color{red!80!black} \em TODO: #1 \endgroup}
\newtheorem*{remark}{Remark}
\newtheorem{theorem}{Theorem}
\newtheorem{claim}[theorem]{Claim}
\newtheorem*{claim*}{Claim}
\newtheorem{observation}{Observation}
\newtheorem{lemma}[theorem]{Lemma}
\newtheorem{proposition}[theorem]{Proposition}
\newenvironment{proof_claim}{\noindent {\it Proof of Claim: }}{\hspace*{\fill} $\diamond$}
\newcommand{\keywords}[1]{\medskip \noindent \textbf{Keywords: }#1.}
\newcommand{\subjclass}[1]{\medskip \noindent \textbf{MSC: }#1.}
\theoremstyle{definition}
\newtheorem{definition}{Definition}
\title{A Duality Based 2-Approximation Algorithm for Maximum~Agreement Forest
\footnote{This paper is based on the (substantially different) extended abstract~\cite{DBLP:conf/icalp/SchalekampZS16}.}}
\newcommand{\R}{\mathbb{R}}
\DeclareMathOperator{\lca}{lca}
\DeclareMathOperator{\minimize}{minimize}
\newcommand{\coneT}{C_{\mathcal{F}}}
\newcommand{\monochrome}{unicolored}
\newcommand{\multicolored}{multicolored}
\newcommand{\bicolored}{bicolored}
\newcommand{\tricolored}{tricolored}
\newcommand{\comp}[1]{#1-compatible}
\newcommand{\feas}[1]{#1-feasible}
\newcommand{\pcs}{root-of-infeasibility}%
\newcommand{\pcsplural}{roots-of-infeasibility}
\newcommand{\compatible}{compatible}
\newcommand{\RBcomp}{$R\cup B$-compatible}
\newcommand{\ptnstart}{\ptn^{(0)}}
\newcommand{\ptnRBcomp}{\ptn^{(1)}}
\newcommand{\ptnsplit}{\ptn^{(2)}}
\newcommand{\ptnfeas}{\ptn^{(3)}}
\newcommand{\ptncur}{\mathcal{Q}}
\newcommand{\pairslist}{{\tt pairslist}}
\newcommand{\redleaf}{\ensuremath{x_R}}
\newcommand{\blueleaf}{\ensuremath{x_B}}
\newcommand{\altredleaf}{\ensuremath{y_R}}
\newcommand{\whiteleaf}{\ensuremath{x_W}}
\newcommand{\nodeinTtwo}[1]{\ensuremath{\hat{#1}}}
\newcommand{\splittable}{splittable}
\newcommand{\leaves}{{\cal L}}
\newcommand{\compat}{{\cal C}}
\DeclareMathOperator{\load}{load}
\newcommand{\lbelow}[1]{\leaves(#1)}
\newcommand{\ptn}{\mathcal{P}}
\newcommand{\reachu}{\mathcal{B}}
\newcommand{\Lc}{1}
\newcommand{\Rc}{2}
\DeclareMathOperator{\rt}{rt}
\author[1,2]{Neil Olver}
\author[3]{Frans Schalekamp}
\author[5]{Suzanne van der Ster}
\author[2,1,6]{Leen Stougie}
\author[4]{Anke van Zuylen}
\affil[1]{Department of Econometrics \& Operations Research, Vrije Universiteit Amsterdam}
\affil[2]{Centrum Wiskunde \& Informatica}
\affil[3]{School of Operations Reseach and Information Engineering, Cornell University}
\affil[4]{Department of Mathematics, College of William and Mary}
\affil[5]{Albert Heijn Online}
\affil[6]{INRIA-Erable}
\begin{document}
\maketitle
\setcounter{footnote}{0}

\begin{abstract}
We give a 2-approximation algorithm for the Maximum Agreement Forest 
problem on two rooted binary trees. This NP-hard 
problem has been studied extensively in the past two decades, since it can be used to compute the rooted Subtree Prune-and-Regraft (rSPR) distance between two phylogenetic trees. Our algorithm is combinatorial and its running time is quadratic in the input size. To prove the approximation guarantee, we construct a feasible dual solution for a novel linear programming formulation. In addition, we show this linear program is stronger than previously known formulations, and we give a compact formulation, showing that it can be solved in polynomial time.
\end{abstract}

\keywords{Maximum agreement forest, phylogenetic tree, SPR distance, subtree prune-and-regraft distance, computational biology}

\subjclass{68W25, 90C27, 92D15}

\section{Introduction}

	Evolutionary relationships are often modeled by a rooted tree, where the leaves represent a set of species, and internal nodes are (putative) common ancestors of the leaves below the internal node. Such phylogenetic trees date back to Darwin~\cite{Darwin37}, who used them in his notebook to elucidate his thoughts on evolution. For an introduction to phylogenetic trees we refer to \cite{MathEvPhyl, SempleSteel2003} 

The topology of phylogenetic trees can be based on different sources of data, e.g., morphological data, behavioral data, genetic data, etc., which can lead to different phylogenetic trees on the same set of species. %

Such partly incompatible trees may actually be unavoidable: there exist non-tree-like evolutionary processes that preclude the existence of a phylogenetic tree, so-called reticulation events, such as hybridization, recombination and horizontal gene transfer \cite{HusonRuppScornavacca10, Nakhleh2009ProbSolv}. Irrespective of the cause of the conflict, the natural question arises to quantify the dissimilarity between such trees.  
Especially in the context of reticulation, a particularly meaningful measure of comparing phylogenetic trees is the Subtree Prune-and-Regraft distance for rooted trees (rSPR-distance), which provides a lower bound on a certain type of these non-tree evolutionary events. The problem of finding the exact value of this measure for a set of species motivated the formulation of the Maximum Agreement Forest Problem (MAF) by Hein, Jian, Wang and Zhang~\cite{Hein96}.  

In the definition of MAF by Hein et al. we are given two rooted binary trees, each having its leaves labeled with the same set of labels ${\cal L}$, where in each tree each leaf has one label and each label is assigned to one leaf. 
The problem is to find a minimum set of edges to be
deleted from the two trees, so that the rooted trees in the resulting two forests (where the choice of the root is the natural choice)  can be paired up into pairs of \emph{isomorphic} trees. 
Two rooted trees are isomorphic if their \emph{restrictions} are the same, where the restriction of a tree is obtained by considering the minimal tree spanning the same set of leaves from $\leaves$, and then contracting nodes with only a single child.

Since the introduction by Hein et al.\ in \cite{Hein96}, in which they also proved NP-hardness, MAF has been extensively studied, mostly in its version of two rooted binary input trees. After Allen and Steel~\cite{Allen2001} pointed out that the claim by Hein et al.\ that solving MAF on two rooted directed trees computes the rSPR-distance between the trees is incorrect, Bordewich and Semple~\cite{Bordewich04} presented a subtle redefinition of MAF, whose optimal value does coincide with the rSPR-distance. In this redefinition it is required that the two forests agree on the tree containing the original roots of the input trees. This has now become the standard definition of MAF, for which Bordewich and Semple~\cite{Bordewich04} showed that NP-hardness still holds, and Rodrigues~\cite{Rodrigues03} showed that it is in fact \modified{APX}-hard. 

The problem has attracted a lot of attention, and indeed has become a canonical problem in the field of phylogenetic networks. 
Many variants of MAF have been studied, including versions where the input consists of more than two trees~\cite{chataigner2005approximating,chen2016approximating}, and where the input trees are unrooted~\cite{Whidden09,Whidden13} or non-binary~\cite{Rodrigues07, vanIersel14}. 
We will concentrate on MAF in its classical form with two rooted binary input trees, and we will be concerned with the worst-case approximability of the problem.
The literature includes many other approaches to the problem, including fixed-parameter tractable algorithms (e.g.,~\cite{Whidden09,Whidden13}) and integer linear programming~\cite{Wu09,Wu10}.
But the quest for better approximation algorithms has become central within the MAF literature.

Our result improves over a sequence spanning 10 years of approximation algorithms, starting with the first correct  5-approximation~\cite{Bonet06}. This was followed by several 3-approximations~\cite{Bordewich08,Rodrigues07,Whidden13}, each one improving on the running time, and the last one giving a relatively simple and elegant proof. A 2.5-approximation followed~\cite{Shietal15}. 
In 2016, Chen et al.~\cite{chen2016improved} described a $7/3$-approximation.
Independently in the same year,  a subset of the present authors~\cite{DBLP:conf/icalp/SchalekampZS16} gave a factor 2 approximation algorithm.
Subsequently, Chen et al.~\cite{chen2017cubic} gave a different factor 2 approximation algorithm using very different methods, and with a cubic running time.
The 2-approximation algorithm presented in the current paper may be viewed as the full version of the algorithm in~\cite{DBLP:conf/icalp/SchalekampZS16}. %
However, while the algorithm presented here is similar in spirit, it differs in many details, and the exposition is entirely new.
Although the algorithm and analysis remain quite subtle, this version is significantly shorter and clearer.
Moreover, we show how our algorithm can, with some care, be implemented in quadratic time (\cite{DBLP:conf/icalp/SchalekampZS16} discuss only a polynomial time bound).
This improves over the cubic running time of Chen et al.~\cite{chen2017cubic}. %

Our 2-approximation algorithm differs from previous works in two key aspects. 
First of all, our algorithm takes a global approach; choices may depend on large parts of the instance.
All the previous algorithms that obtained a worse approximation ratio considered only local, constant-sized, substructures. 
Secondly, we introduce a novel integer linear programming formulation for the analysis.
Our approximation guarantee is proved by constructing a feasible solution to the dual of this linear program, rather than arguing locally about the objective of the optimal solution. 

While we provide a new integer linear programming formulation and exploit its linear relaxation in our analysis, we do not need to actually solve the relaxation as part of our algorithm.
In fact, the formulation has an exponential number of variables, and so it is not immediately clear that it can be efficiently optimized.
We show that it can be reformulated as a compact LP, with only a polynomial number of variables and constraints.
We believe that this is interesting for a number of reasons.
It implies that the linear relaxation can be solved efficiently (in polynomial time); this may be of future utility in obtaining better approximation guarantees using LP-rounding techniques, which do require an optimal solution to the relaxation.
Moreover, the compact formulation is amenable to use in commercial integer programming solvers.
There is a previous formulation due to Wu~\cite{Wu09}, but our formulation is significantly stronger: the integrality gap of the relaxation of Wu is at least $3.2$, whereas for ours we show it is at most $2$, and in fact the worst example that we are aware of has integrality gap $1.25$ (see the appendix).
Finally, we remark that our compact formulation can be easily adapted to handle other variants of MAF---for example, settings with more than two trees.

We have implemented and tested our algorithm, as well as the compact formulation~\cite{implementation}.
The implementation has been designed so that it is easy to step through the algorithm and explore its behaviour on a given instance; the reader may find it helpful when examining the technical details of the algorithm.

\paragraph*{Outline.} We define the problem and introduce necessary notation in Section~\ref{sec:prelim}. Section~\ref{sec:redblue} describes the algorithm, and proves that it produces a feasible solution to MAF. In Section~\ref{sec:analysis}, we introduce the linear program, and describe a feasible solution to its dual that can be maintained by the algorithm. We then show the objective value of this dual solution is always at least half the objective value of the MAF solution, which proves the approximation ratio of two. In Section~\ref{sec:compact}, we show a compact formulation of the (exponential sized) linear program used for the analysis. 
In the appendices, we show that our algorithm can be implemented to run in time quadratic in the size of the input, and we give an example that shows that a previously known integer linear program~\cite{Wu09} is not as strong as the formulation introduced here.

\section{Preliminaries}\label{sec:prelim}

The input to the Maximum Agreement Forest problem (MAF) consists of two rooted binary trees $T_1$ and $T_2$, where the leaves in each binary tree are labeled with the same label set $\leaves$. Each leaf has exactly one label, and each label in $\leaves$ is assigned to exactly one leaf in $T_1$, and one leaf in $T_2$. We will use $\leaves$ also to denote the leaves of the trees.

Let $V_1$ and $V_2$ denote the node set of $T_1$ and $T_2$ respectively, and let $V = V_1 \cup V_2$.
We call all nodes in $V \setminus \leaves$ \emph{internal nodes}.
We let $\lbelow{u}$ denote the set of leaves that are descendants of a node $u \in V$. 

We will use the following notational conventions: we use $u$ and $v$ to denote arbitrary nodes (including leaves), if the node we refer to is an internal node in $V_2$, we will use $\nodeinTtwo{u}$ and $\nodeinTtwo{v}$, and we use the letters $x, y$ and $w$ to refer to leaves. 

For $A\subset \leaves$ we use $V_i[A]$ to denote the set of (internal) nodes in $T_i$ that lie on a path between any two leaves in $A$ for $i \in \{1,2\}$, and define $V[A] := V_1[A] \cup V_2[A]$.

\begin{definition}
    We will say that a set $A \subseteq \leaves$\ \emph{covers} a node $u \in V$ if $u \in V[A]$.
    We say that $A, A' \subseteq \leaves$ \emph{overlap} if $V[A] \cap V[A'] \neq \emptyset$; we can also say that $A$ \emph{overlaps} $A'$ in $U$, for $U \subseteq V$, if $V[A] \cap V[A'] \cap U \neq \emptyset$.
    We say a partition $\ptn$ of $\leaves$ \emph{overlaps} in $U \subseteq V$ if there exist $A,A'\in \ptn$, $A\neq A'$ such that $A$ and $A'$ overlap in $U$.
\end{definition}

For $A\subseteq \leaves$, we let $\lca_i(A)$ denote the least common ancestor of $A$ in $T_i$. 
We will sometimes omit braces of explicit sets and write, e.g., $\lca_1(x_1,x_2,x_3)$ instead of $\lca_1(\{x_1,x_2,x_3\})$. 

For nodes $u,v$ in the same tree, we use $u\prec v$ to indicate that $u$ is a descendant of $v$ and $u \preceq v$ if $u$ is equal
to $v$ or a descendant of $v$.
\begin{definition}
A set $L\subseteq \leaves$ is \emph{\compatible{}} if for all $x_1,x_2,x_3\in L$
\[\lca_1(x_1,x_2)\prec \lca_1(x_1,x_2,x_3) \Leftrightarrow \lca_2(x_1,x_2)\prec \lca_2(x_1,x_2,x_3).\]
\end{definition}
We call a triple of leaves \emph{in\compatible{}} if it is not a \compatible{} set.
Note that $L \subseteq \leaves$ is compatible precisely if the subtrees induced by $L$ in $T_1$ and $T_2$ are isomorphic.

A feasible solution to MAF is a partition $\ptn = \{ A_1, A_2, \ldots, A_k\}$ of $\leaves$ %
such that every component $A_i$ is compatible, and $A_i$ does not overlap $A_j$, for each $i \neq j$. The cost of this solution is defined to be $|\ptn|-1$.
This cost corresponds to the number of edges that must be deleted from $T_1$, as well as the same number from $T_2$, so that in both of the resulting forests, each $A_i \in \ptn$ is the leaf set of a single tree. 

\begin{wrapfigure}[6]{r}{0.4\textwidth}
    \vspace{-.3cm}
\begin{tikzpicture}[scale=0.6]
\node (T1) at (0.5,2.5) {$T_1$};
\node (B1) at (0,0) {$x_1$};
\node (B2) at (1,0) {$x_2$};
\node (R1) at (2.5,0) {$x_3$};
\node (R2) at (3.5,0) {$x_4$};
\coordinate (l) at (0.5,1) ; \coordinate (ll) at (0.3,1.2);
\coordinate (r) at (3,1) ; \coordinate (rr) at (3.2,1.2);
\node (rho) at (4.25,2) {$\!\!\rho$};
\coordinate (u) at (1.75,2) ; \coordinate (uu) at (1.6,2.2);
\coordinate (top) at (3,3) ;
\path (B1) edge (l)
	(B2) edge (l)
	(R1) edge (r)
	(R2) edge (r)
	(l) edge (u)
	(r) edge (u)
	(u) edge (top)
	(top) edge (rho);
\end{tikzpicture}
\quad
\begin{tikzpicture}[scale=0.6]
\node (T2) at (0.5,2.5) {$T_2$};
\node (B1) at (0,0) {$x_1$};
\node (B2) at (1,0) {$x_3$};
\node (R1) at (2.5,0) {$x_2$};
\node (R2) at (3.5,0) {$x_4$};
\coordinate (l) at (0.5,1) ; \coordinate (ll) at (0.3,1.2);
\coordinate (r) at (3,1) ; \coordinate (rr) at (3.2,1.2);
\node (rho) at (4.25,2) {$\!\!\rho$};
\coordinate (u) at (1.75,2) ; \coordinate (uu) at (1.6,2.2);
\coordinate (top) at (3,3) ;
\path (B1) edge (l)
	(B2) edge (l)
	(R1) edge (r)
	(R2) edge (r)
	(l) edge (u)
	(r) edge (u)
	(u) edge (top)
	(top) edge (rho);
\end{tikzpicture}
\end{wrapfigure}
\paragraph*{Remark.} 
In order for MAF to correspond to the rSPR distance, it is necessary to add an additional node $\rho$ to $\leaves$, as a unique sibling to the root in both $T_1$ and $T_2$.
This is the distinction between the original definition of MAF by Hein~\cite{Hein96} and the correction by Bordewich and Semple~\cite{Bordewich04}.
We simply assume that this addition is already included in the input instance, after which there is no need to distinguish this additional leaf from the others.

To describe our algorithm, it will be convenient to extend the notion of compatibility.
\begin{definition}
Given $K\subseteq \leaves$, we say a set $L\subseteq \leaves$ is \emph{\comp{$K$}} if $L\cap K$ is compatible. A partition $\ptn = \{ A_1, A_2, \ldots, A_k\}$ of $\leaves$  is \emph{\comp{$K$}} if $A_i$ is \comp{$K$} for all $i=1,2,\ldots,k$.
\end{definition}

\section{The Red-Blue Algorithm}\label{sec:redblue}

The algorithm maintains a partition $\ptn$ of $\leaves$, which at the end of the algorithm will correspond to a feasible solution to MAF.
The algorithm will maintain the invariant that $\ptn$ does not overlap in $V_2$. %
Observe that this is equivalent to defining $\ptn$ to be the leaf sets of the trees in a forest, obtained by deleting edges from $T_2$.
Initially $\ptn = \{\leaves\}$.

The algorithm works towards feasibility by iteratively refining $\ptn$, focusing each iteration on a set of leaves $\lbelow{u}$ for some $u\in V_1$, for which the current partition is infeasible in some (quite narrowly defined) way. At the end of the iteration the solution is feasible if we restrict our attention to $\lbelow{u}$, and even if we consider $\lbelow{u}\cup \{w\}$ for any arbitrary $w\in \leaves\setminus \lbelow{u}$.

We use the following definition to specify which sets $\lbelow{u}$ the algorithm considers.

\begin{definition}\label{def:pcs}
Given an infeasible partition $\ptn$ that does not overlap in $V_2$, we call $u\in V_1$ a \emph{\pcs} if at least one of the following holds:
\begin{enumerate}[(a)]
\item $\ptn$ is not \comp{$\lbelow{u}$};
\item $\ptn$ overlaps in $V_1[\lbelow{u}]$;
\item there exists a component $A$ in $\ptn$ such that $A\setminus \lbelow{u}\neq\emptyset$, and $A\cap \lbelow{u} \cup \{w\}$ is not compatible for all $w\in A\setminus \lbelow{u}$.
\end{enumerate}
\end{definition}
Observe that if $u\in V_1$ is  a \pcs{}, then any ancestor of $u$ is a \pcs.
We will say an internal node $u$ in tree $T_i$ is the ``lowest'' node with property $\Gamma$ if property $\Gamma$ does not hold for any of $u$'s descendants in $T_i$.
The algorithm will identify a lowest node $u\in V_1$ that is a \pcs. 

Given a \pcs\ $u\in T_1$, we partition $\leaves$ into $R, B, W$, where $R=\lbelow{u_r}$ and $B=\lbelow{u_\ell}$ for the two children $u_\ell$ and $u_r$ of $u$. We will refer to this partition as a \emph{coloring} of the leaves; we will refer to the leaves in $R$ as red leaves, the leaves in $B$ as blue leaves and the leaves in $W$ as white leaves. We call a component of $\ptn$ \emph{\tricolored{}} if it has a nonempty intersection with $R, B$ and $W$, and \emph{\bicolored{}} if it has a nonempty intersection with exactly two of the sets $R, B, W$. A component is called \emph{\multicolored{}} if it is either \tricolored{} or \bicolored{}, and \emph{\monochrome} otherwise.

\begin{observation}\label{obs:coloring}
Let $u$ be a lowest \pcs{} for $\ptn$, and consider the coloring $R, B, W$, where $R=\lbelow{u_r}$ and $B=\lbelow{u_\ell}$ for the two children $u_\ell$ and $u_r$ of $u$. Then the set of \multicolored{} components of $\ptn$ consists of either at most two \bicolored{} components or exactly one \tricolored{} component.
\end{observation}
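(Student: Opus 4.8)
The plan is to prove something slightly stronger and cleaner: the \multicolored{} components are contained in a set of at most two components, one ``carrying'' red and one ``carrying'' blue, after which the claim follows by a short case distinction. Everything rests on the fact that $u_r$ and $u_\ell$ are \emph{not} \pcsplural{} (because $u$ is a lowest one), and specifically on condition~(b) of Definition~\ref{def:pcs}. First I would record the key geometric fact. Suppose a component $A$ contains a red leaf $r\in R$ and a non-red leaf $y\notin R$. Since $y$ is not a descendant of $u_r$, we have $\lca_1(r,y)\succ u_r$, so the path in $T_1$ from $r$ to $y$ passes through $u_r$; hence $u_r\in V_1[A]$. Moreover, assuming $|R|\ge 2$, the node $u_r=\lca_1(R)$ lies on a path between two red leaves in the two subtrees below it, so $u_r\in V_1[R]$. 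Consequently, if two distinct components $A\neq A'$ each contained a red leaf together with a non-red leaf, both would contain $u_r$ in their $T_1$-hull, giving $u_r\in V_1[A]\cap V_1[A']\cap V_1[R]$; that is, $\ptn$ would overlap in $V_1[R]=V_1[\lbelow{u_r}]$, and condition~(b) would make $u_r$ a \pcs{}, contradicting minimality of $u$. (If instead $|R|=1$, then $R$ is a single leaf lying in exactly one component, so the conclusion is immediate.) Thus \emph{at most one component contains both a red and a non-red leaf}; call it $P_R$ if it exists. Symmetrically, at most one component contains both a blue and a non-blue leaf; call it $P_B$.

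Next I would observe that every \multicolored{} component contains a red or a blue leaf: a component meeting only $W$ among $R,B,W$ is \monochrome{}, so a component meeting at least two of the three colors necessarily meets $R$ or $B$, and then it automatically also contains a non-red (respectively non-blue) leaf. Hence any \multicolored{} component equals $P_R$ or $P_B$, and there are at most two \multicolored{} components in total.

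Finally I would finish by a case analysis on whether $P_R$ and $P_B$ coincide. If $P_R=P_B$, there is exactly one \multicolored{} component, and it contains both a red and a blue leaf; it is therefore \bicolored{} (of type $R,B$) or \tricolored{}, matching ``at most two \bicolored{}'' or ``exactly one \tricolored{}'' respectively. If $P_R\neq P_B$, then $P_R$ cannot contain a blue leaf---otherwise it would be a component containing blue and a non-blue leaf, hence the unique such component $P_B$, forcing $P_R=P_B$---so $P_R$ meets only $R$ and $W$ and is \bicolored{}; symmetrically $P_B$ is \bicolored{}. This yields at most two \bicolored{} components and no \tricolored{} one. In every case the set of \multicolored{} components consists of at most two \bicolored{} components or exactly one \tricolored{} component.

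The main obstacle I anticipate is simply identifying the correct tool: conditions~(a) and~(c) of the \pcs{} definition look relevant but lead into compatibility bookkeeping, whereas the entire count is governed by condition~(b) via the single geometric observation that a component mixing red with a non-red leaf is forced to contain $u_r$ in its hull. The one technical point requiring care is the degenerate case $|R|=1$ (or $|B|=1$), where $u_r$ is a leaf and $V_1[R]=\emptyset$, which must be handled separately as indicated above.
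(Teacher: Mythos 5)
Your proof is correct and follows essentially the same route as the paper: the paper argues that, since $u$ is a lowest \pcs{}, at most one component covers $\lca_1(R)$ and at most one covers $\lca_1(B)$, and every \bicolored{} component covers one of these while a \tricolored{} one covers both --- which is exactly your ``$P_R$/$P_B$'' argument in different words. The only addition is your explicit treatment of the degenerate case $|R|=1$, which the paper's two-line proof silently absorbs.
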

\begin{proof}
    If $u$ is a lowest \pcs, $\ptn$ does not overlap in $V_1[R]$ and $V_1[B]$, and so at most one component of $\ptn$ covers $\lca_1(R)$, and at most one covers $\lca_1(B)$.
    The observation thus follows immediately, since any \bicolored{} component covers either $\lca_1(R)$ or $\lca_1(B)$, and any \tricolored{} component covers both $\lca_1(R)$ and $\lca_1(B)$.
\end{proof} 
We note that the above observation can be refined; it is possible to show that $\ptn$ contains  either one \tricolored{} component or exactly two \bicolored{} components; see Lemma~\ref{lem:coloring} in Section~\ref{subsec:obj}.

In Figure~\ref{fig:example}, we give an example of an input $T_1, T_2$ and a coloring of the leaves, where $R_1$ can be a single leaf in $R$, or it can be a tree with leaves labeled by a compatible subset $R_1\subset R$ with likewise interpretations for the other capital leaves in this and future figures; the caption gives three partitions such that $u$ satisfies exactly one of the three conditions of Definition~\ref{def:pcs}:
If $\ptn=\{\leaves\}$, $u$ satisfies (a). Note that $u$ is indeed a lowest \pcs, since $\{R_1,R_2,W_3\}$ and $\{B_1,B_2,W_3\}$ are compatible sets, so $u_\ell$ and $u_r$ do not satisfy (c) (nor (a) and (b)).
If $\ptn = \{\{B_1\}, \{B_2, W_1\}, \{R_1, R_2, W_2, W_3\}\}$, node $u$ satisfies (b). Again, $u$ is a lowest \pcs (clearly $u_\ell$ and $u_r$ does not satisfy (a) and (b);  they also do not satisfy (c) since $\{B_1,W_1\}$ is compatible, as is $\{R_1,R_2,W_3\}$). 
Finally, if $\ptn=\{\{R_1\}, \{B_1, B_2, W_1, R_2, W_2\}, \{W_3\}\}$, node $u$ satisfies (c). Observe that in this case $u$ is again a lowest \pcs. %

\begin{figure}
\begin{center}
\begin{tikzpicture}[scale=0.6]
\node (T1) at (0,3.5) {$T_1$};
\node[text=blue] (B1) at (0,0) {$B_1$};
\node[text=blue] (B2) at (1,0) {$B_2$};
\node[text=red] (R1) at (2.5,0) {$R_1$};
\node[text=red] (R2) at (3.5,0) {$R_2$};
\node (W1) at (5,0) {$W_1$};
\node (W2) at (6,0) {$W_2$};
\node (W3) at (8,0) {$W_3$};
\coordinate (l) at (0.5,1) ; \node (ll) at (0.3,1.2) {$u_\ell$};
\coordinate (r) at (3,1) ; \node (rr) at (3.2,1.2) {$u_r$};
\coordinate (w) at (5.5,1) ;
\coordinate (u) at (1.75,2) ; \node (uu) at (1.6,2.2) {$u$};
\coordinate (uw) at (3.625,3) ;
\coordinate (root) at (5.8125,3.5) ;
\path (B1) edge (l)
	(B2) edge (l)
	(R1) edge (r)
	(R2) edge (r)
	(l) edge (u)
	(r) edge (u)
	(W1) edge (w)
	(W2) edge (w)
	(w) edge (uw)
	(u) edge (uw)
	(uw) edge (root)
	(W3) edge (root);
\end{tikzpicture}
\qquad
\begin{tikzpicture}[scale=0.6]
\node (T2) at (0,3.5) {$T_2$};
\node[text=blue] (B1) at (0,0) {$B_1$};
\node[text=red] (B2) at (1,0) {$R_1$};
\node[text=blue] (R1) at (2.5,0) {$B_2$};
\node (R2) at (3.5,0) {$W_1$};
\node[text=red] (W1) at (5,0) {$R_2$};
\node (W2) at (6,0) {$W_2$};
\node (W3) at (8,0) {$W_3$};
\coordinate (l) at (0.5,1) ; \coordinate (r) at (3,1) ; 
\coordinate (w) at (5.5,1) ;
\coordinate (u) at (1.75,2) ; \coordinate (uw) at (3.625,3) ;
\coordinate (root) at (5.8125,3.5) ;
\path (B1) edge (l)
	(B2) edge (l)
	(R1) edge (r)
	(R2) edge (r)
	(l) edge (u)
	(r) edge (u)
	(W1) edge (w)
	(W2) edge (w)
	(w) edge (uw)
	(u) edge (uw)
	(uw) edge (root)
	(W3) edge (root);
\end{tikzpicture}
\end{center}
\caption{If $\ptn=\{\leaves\}$, then node $u$ satisfies case (a) of Definition~\ref{def:pcs}; if $\ptn = \{\{B_1\}$, $\{B_2,W_1\}$, $\{R_1,  R_2, W_2, W_3\}\}$, it satisfies case (b) and if $\ptn=\{\{R_1\},$ $\{B_1, B_2, W_1, R_2, W_2\},$ $\{W_3\}\}$, it satisfies (c).}\label{fig:example}
\end{figure}

\begin{center}\fbox{\begin{minipage}[b]{.84\textwidth}
\vspace*{.5\baselineskip}
{\large \quad \sc Red-Blue Algorithm}

\hrulefill
\begin{algorithmic}
\State $\ptn \gets \{\leaves\}$. 
\State $\pairslist \gets \emptyset$.
\While{$\ptn$ is not feasible}
	\State \makebox[0pt][r]{$\star$\quad}Let $u\in T_1$ be a lowest \pcs, with children $u_\ell$ and $u_r$. 
	\State Let $R=\lbelow{u_r}$, $B=\lbelow{u_\ell}$  and $W=\leaves\setminus(R\cup B)$.
	\State {\sc Make-\RBcomp}($\ptn,  (R, B, W)$).%
	\State {\sc Make-Splittable}($\ptn, (R, B, W)$).
	\State {\sc Split}($\ptn, (R, B, W)$).
	\State {\sc Find-Merge-Pair}($\pairslist, \ptn, (R, B, W)$).
\EndWhile
\State {\sc Merge-Components}($\pairslist, \ptn$).

\end{algorithmic}
\end{minipage}}\end{center}
An overview of the algorithm is given above. The procedures will be described in detail in the subsequent subsections, along with with lemmas regarding the properties they ensure. 

We will refer to a pass through the main while-loop of the algorithm as an ``iteration''.
In order to simplify the statement of the lemmas, we will make statements like ``let $\ptn'$ be the partition after {\sc ProcedureName}$(\ptn, (R, B, W))$''. 
 This implicitly assumes that $(R, B, W)$ was a coloring chosen in  the beginning of the current iteration of the Red-Blue algorithm (and thus, that $\lca_1(R\cup B)$ was a lowest \pcs\ at that moment), and that $\ptn'$ is the partition resulting from  calling {\sc ProcedureName}$(\ptn, (R, B, W))$  in the current iteration. 
 
Finally, the $\star$ in front of certain lines will be used to refer to these lines in the analysis in Section~\ref{subsec:dual}.

\subsection{\sc Make-\RBcomp}
\begin{center}\fbox{\begin{minipage}[b]{.84\textwidth}
\begin{algorithmic}
\Procedure{Make-\RBcomp}{$\ptn, (R, B, W)$}

\While{$\exists A\in \ptn$ that is not \RBcomp{}}
	\State \makebox[0pt][r]{$\star$\quad}Let $\nodeinTtwo{u}$ be a lowest internal node in $V_2[A]$ for which $A \cap \lbelow{\nodeinTtwo{u}}$ intersects both $R$ and $B$.
	\State $\ptn \leftarrow \ptn \setminus \{A\} \cup \{A\cap \lbelow{\nodeinTtwo{u}}, A\setminus \lbelow{\nodeinTtwo{u}}\}$.
\EndWhile
\EndProcedure
\end{algorithmic}
\end{minipage}}\end{center}

\begin{figure}
\begin{center}
\begin{tikzpicture}[scale=0.6]
\node (T2) at (0,3.5) {$T_2$};
\node[text=blue] (B1) at (0,0) {$B_1$};
\node[text=red] (B2) at (1,0) {$R_1$};
\node[text=blue] (R1) at (2.5,0) {$B_2$};
\node (R2) at (3.5,0) {$W_1$};
\node[text=red] (W1) at (5,0) {$R_2$};
\node (W2) at (6,0) {$W_2$};
\node (W3) at (8,0) {$W_3$};
\coordinate (l) at (0.5,1) ; \coordinate (r) at (3,1) ; 
\coordinate (w) at (5.5,1) ;
\coordinate (u) at (1.75,2) ; \coordinate (uw) at (3.625,3) ;
\coordinate (root) at (5.8125,3.5) ;
\path (B1) edge (l)
	(B2) edge (l)
	(R1) edge (r)
	(R2) edge (r)
	(l) edge (u)
	(r) edge (u)
	(W1) edge (w)
	(W2) edge (w)
	(w) edge (uw)
	(u) edge (uw)
	(uw) edge (root)
	(W3) edge (root);
\end{tikzpicture}
\newlength{\nl}
\settowidth{\nl}{$\xrightarrow{\text{\sc Make-\RBcomp}}$}
\begin{minipage}[b][2.5cm][c]{\nl}
\begin{center}
$\xrightarrow{\text{\sc Make-\RBcomp}}$
\end{center}
\end{minipage}
\begin{tikzpicture}[scale=0.6]
\node (T2) at (0,3.5) {$T_2$};
\node[text=blue] (B1) at (0,0) {$B_1$};
\node[text=red] (B2) at (1,0) {$R_1$};
\node[text=blue] (R1) at (2.5,0) {$B_2$};
\node (R2) at (3.5,0) {$W_1$};
\node[text=red] (W1) at (5,0) {$R_2$};
\node (W2) at (6,0) {$W_2$};
\node (W3) at (8,0) {$W_3$};
\coordinate (l) at (0.5,1) ; \coordinate (r) at (3,1) ; 
\coordinate (w) at (5.5,1) ;
\coordinate (u) at (1.75,2) ; \coordinate (uw) at (3.625,3) ;
\coordinate (root) at (5.8125,3.5) ;
\node (uprime) at (0.3,1.2) {$\nodeinTtwo{u}$};
\path (B1) edge (l)
	(B2) edge (l)
	(R1) edge (r)
	(R2) edge (r)
	(l) edge[dashed] (u)
	(r) edge (u)
	(W1) edge (w)
	(W2) edge (w)
	(w) edge (uw)
	(u) edge (uw)
	(uw) edge (root)
	(W3) edge (root);
\end{tikzpicture}
\end{center}
\caption{Illustration of {\sc Make-\RBcomp}($\ptn,(R,B,W)$). Because $\ptn$ and $\ptn'$ do not overlap in $V_2$, we can represent these as the leaf sets of trees in a forest obtained by deleting edges from $T_2$. In this figure and the following figures the dashed edges represent deleted edges. 
\\
In this example $\ptn=\{\leaves\}$. Then $\nodeinTtwo{u}=\lca_2(R_1, B_1)$, and {\sc Make-\RBcomp}($\ptn,(R,B,W)$) refines the partition to $\{\{B_1, R_1\}, \{B_2, W_1, R_2, W_2,W_3\}\}$, which is \RBcomp{}.}\label{fig:examplecompatible}
\end{figure}
An example is given in Figure~\ref{fig:examplecompatible}. We note that in general, the choice of $\nodeinTtwo{u}$ does not have to be unique, and that multiple refinements may be needed  to make the partition \RBcomp{}. 

As observed above, for any partition $\ptn$ that does not overlap in $V_2$, there is a set of edges in $T_2$, such that $\ptn$ consists of the leaf sets of the trees in the forest obtained after deleting these edges. Our refinement is equivalent to deleting the edge from $\nodeinTtwo{u}$ towards the root in $T_2$ and hence the resulting partition does not overlap in $V_2$ if the original partition did not overlap in $T_2$.

\begin{lemma}\label{lem:RBcomp}
Let $\ptn'$ be the partition after {\sc Make-\RBcomp}$(\ptn, (R, B,W))$. Then $\ptn'$ is a refinement of $\ptn$ that does not overlap in $V_2$ and is \RBcomp{}.
\end{lemma}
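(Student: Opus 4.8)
The statement bundles three claims about the partition $\ptn'$ produced when the while-loop of {\sc Make-\RBcomp} terminates: that it refines $\ptn$, that it does not overlap in $V_2$, and that it is \RBcomp{}. The plan is to treat the first two as loop invariants preserved by each individual split, and to obtain the third directly from the loop's exit condition. The real work is therefore to prove that the loop terminates, which reduces to showing that every split is \emph{proper}, i.e.\ that it partitions a component into two \emph{nonempty} parts. So the main obstacle is a structural claim about the chosen node $\nodeinTtwo{u}$, and everything else is bookkeeping.

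First I would dispatch the two easy invariants for a single split of a component $A$ at a node $\nodeinTtwo{u}$. The step replaces $A$ by $A\cap\lbelow{\nodeinTtwo{u}}$ and $A\setminus\lbelow{\nodeinTtwo{u}}$, which partition $A$; provided the split is proper, this refines the partition, and composing over all iterations shows $\ptn'$ refines $\ptn$. For non-overlap in $V_2$ I would use the forest representation noted in the text: since $\ptn$ does not overlap in $V_2$, it is the family of leaf sets of the trees of a forest obtained by deleting edges from $T_2$, and $A$ is the leaf set of one such tree $T_A$ with $\nodeinTtwo{u}$ an internal node of $T_A$. Granting that $\nodeinTtwo{u}\prec\lca_2(A)$ (shown below), the edge from $\nodeinTtwo{u}$ to its parent is an edge of $T_A$, and deleting it splits $T_A$ into two trees whose leaf sets are exactly $A\cap\lbelow{\nodeinTtwo{u}}$ and $A\setminus\lbelow{\nodeinTtwo{u}}$ (using that the descendants of $\nodeinTtwo{u}$ in $T_A$ are precisely its $T_2$-descendants lying in $A$). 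Thus the new partition is again the leaf sets of a forest of $T_2$ and does not overlap in $V_2$; both invariants hold initially and are preserved.

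The crux, which I expect to be the main obstacle, is to show each split is proper: if $A$ is not \RBcomp{}, then the lowest node $\nodeinTtwo{u}\in V_2[A]$ with $A\cap\lbelow{\nodeinTtwo{u}}$ meeting both $R$ and $B$ satisfies $\nodeinTtwo{u}\prec\lca_2(A)$, so $A\setminus\lbelow{\nodeinTtwo{u}}\neq\emptyset$. Since $u$ is a lowest \pcs{}, its children are not \pcsplural{}, so by Definition~\ref{def:pcs}(a) the partition is $R$-compatible and $B$-compatible; as subsets of \compatible{} sets are \compatible{}, this is preserved under refinement, so throughout the procedure $A\cap R$ and $A\cap B$ are \compatible{}. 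Hence any incompatible triple in $A\cap(R\cup B)$ must contain both a red and a blue leaf, whence $A$ meets both $R$ and $B$ and $\lca_2(A)$ is itself such a node in $V_2[A]$. Now suppose no proper descendant of $\lca_2(A)$ in $V_2[A]$ qualifies. Then neither child of $\lca_2(A)$ in $T_2$ can have both a red and a blue leaf of $A$ below it, since the $\lca_2$ of such a red--blue pair would be a strictly lower qualifying node; so all red leaves of $A$ lie below one child and all blue leaves below the other. Consequently the $T_2$-restriction of $A\cap(R\cup B)$ is the tree with root $\lca_2(A)$ and the red- and blue-restrictions as its two child-subtrees---structurally identical to the $T_1$-restriction, where $u$ separates $R$ from $B$. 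Since $A\cap R$ and $A\cap B$ are \compatible{}, the corresponding red- and blue-restrictions agree across $T_1$ and $T_2$, so the full restrictions are isomorphic and $A\cap(R\cup B)$ is \compatible{}, contradicting that $A$ is not \RBcomp{}.

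Finally, with every split proper, each iteration increases $|\ptn|$ by exactly one while $|\ptn|\le|\leaves|$, so the loop terminates after finitely many steps. At termination the while-condition fails, meaning no component is non-\RBcomp{}; this is precisely the assertion that $\ptn'$ is \RBcomp{}, and together with the two invariants established above this proves all three claims. The delicate point throughout is the separation argument of the previous paragraph, which is exactly where the role of $u$ as a \emph{lowest} \pcs{}---guaranteeing within-color compatibility of $A\cap R$ and $A\cap B$---is essential.
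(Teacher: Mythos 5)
Your proof is correct and follows essentially the same route as the paper: the refinement and non-overlap claims are handled via the forest-of-$T_2$ representation, \RBcomp{}-ness comes from the loop exit condition, and termination rests on the structural fact that $\ptn$ is \comp{$R$} and \comp{$B$} (since $u$'s children are not \pcsplural) together with the observation that a node of $T_2$ separating the red from the blue leaves of $A$ yields a compatible red--blue set. The only cosmetic difference is that the paper deploys this observation at the chosen node $\nodeinTtwo{u}$ to show the newly created piece $A\cap\lbelow{\nodeinTtwo{u}}$ is \RBcomp{} (so the count of \RBcomp{} components increases), whereas you deploy it at $\lca_2(A)$ to show each split is proper (so $|\ptn|$ increases); both give the same $O(|\leaves|)$ bound on the number of iterations.
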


\begin{proof}
First, observe $\ptn$ is \comp{$R$} and \comp{$B$}, since $u$'s children are not \pcsplural.  
If $\ptn$ is \RBcomp{} then $\ptn$ is not modified by the procedure, and the lemma is vacuously true. Otherwise, the procedure refines $\ptn$, and we already mentioned above that the resulting partition $\ptn'$ does not overlap in $V_2$ provided that $\ptn$ does not overlap in $V_2$.
The procedure ends when there are no sets in $\ptn$ that are not \RBcomp{}, so the only thing left to show is that this procedure halts. Because $\nodeinTtwo{u}$ was chosen to be the lowest internal node in $V_2[A]$ such that $A \cap \lbelow{\nodeinTtwo{u}}$ intersects both $R$ and $B$, the children of $\nodeinTtwo{u}$, say $\nodeinTtwo{u}_r$ and $\nodeinTtwo{u}_\ell$, are so that $A \cap \lbelow{\nodeinTtwo{u}_r}$ and $A \cap \lbelow{\nodeinTtwo{u}_\ell}$ can only intersect one of $R$ and $B$. Therefore  $A\cap \lbelow{\nodeinTtwo{u}}$ is \RBcomp{}, where $A$ was not, and thus the number of \RBcomp{} components in $\ptn$ increases, which can only happen  at most  $|\leaves|$ times. 
\end{proof}

Observe that if $\ptn$ is \RBcomp{}, then any refinement of $\ptn$ is also \RBcomp{}, hence we may assume that the partition at any later point in the current iteration of the Red-Blue Algorithm is \RBcomp{}.

\subsection{\sc Make-Splittable}
The goal of the next two procedures is to further refine the partition so that there is no overlap in $V_1[R\cup B]$. We will do this in two steps, the first of which will achieve the following property.

\begin{definition}
Given a coloring $(R,B,W)$ of $\leaves$. A set $A \subseteq \leaves$ is {\em \splittable} if $A \cap R$, $A\cap B$ and $A\cap W$ do not overlap in $V_2$.
\end{definition}

\begin{center}\fbox{\begin{minipage}[b]{.84\textwidth}
\begin{algorithmic}
\Procedure{Make-Splittable}{$\ptn, (R, B, W)$}

\While{$\exists A\in \ptn$ that is not \splittable{}}
	\State \makebox[0pt][r]{$\star$\quad}Let $\nodeinTtwo{u}$ be a lowest internal node in $V_2[A]$ such that $A \cap \lbelow{\nodeinTtwo{u}}$ is \bicolored{} and $A\setminus \lbelow{\nodeinTtwo{u}}$ intersects precisely the same colors as $A$.
	\State $\ptn \gets \ptn \setminus \{A\} \cup \{A\cap \lbelow{\nodeinTtwo{u}}, A\setminus \lbelow{\nodeinTtwo{u}}\}$.
\EndWhile
\EndProcedure
\end{algorithmic}
\end{minipage}}\end{center}

First, note that by the same arguments as in the previous subsection, the partition that results from {\sc Make-Splittable} does not overlap in $V_2$ if the original partition did not overlap in $V_2$.
 It is easy to see that if $A$ is \bicolored{} and not \splittable{}, then there exists $\nodeinTtwo{u}\in V_2[A]$ such that both $A\cap \lbelow{\nodeinTtwo{u}}$ and $A\setminus \lbelow{\nodeinTtwo{u}}$ are \bicolored{}:  just take $\nodeinTtwo{u}$ to be a lowest node in $V_2[A\cap C_1] \cap V_2[A\cap C_2]$ for distinct $C_1,C_2\in \{R, B, W\}$. We prove below in Lemma~\ref{lem:splittablewelldefined} that if $A$ is \tricolored{}, we can additionally ensure that $A\setminus\lbelow{\nodeinTtwo{u}}$ is \tricolored{}. For this to hold, we need that $\ptn$ is \RBcomp{}, which by Lemma~\ref{lem:RBcomp} is indeed true when {\sc Make-Splittable} is called. 

\begin{figure}
\begin{center}
\begin{tikzpicture}[scale=0.6]
\node (T2) at (0,3.5) {$T_2$};
\node[text=blue] (B1) at (0,0) {$B_1$};
\node[text=red] (B2) at (1,0) {$R_1$};
\node[text=blue] (R1) at (2.5,0) {$B_2$};
\node (R2) at (3.5,0) {$W_1$};
\node[text=red] (W1) at (5,0) {$R_2$};
\node (W2) at (6,0) {$W_2$};
\node (W3) at (8,0) {$W_3$};
\coordinate (l) at (0.5,1) ; \coordinate (r) at (3,1) ; 
\coordinate (w) at (5.5,1) ;
\coordinate (u) at (1.75,2) ; \coordinate (uw) at (3.625,3) ;
\coordinate (root) at (5.8125,3.5) ;
\path (B1) edge (l)
	(B2) edge[dashed] (l)
	(R1) edge (r)
	(R2) edge (r)
	(l) edge (u)
	(r) edge (u)
	(W1) edge (w)
	(W2) edge (w)
	(w) edge (uw)
	(u) edge (uw)
	(uw) edge (root)
	(W3) edge[dashed] (root);
\end{tikzpicture}
\settowidth{\nl}{$\xrightarrow{\text{\sc Make-\RBcomp}}$}
\begin{minipage}[b][2.5cm][c]{\nl}
\begin{center}
$\xrightarrow{\text{\sc Make-Splittable}}$
\end{center}
\end{minipage}
\begin{tikzpicture}[scale=0.6]
\node (T2) at (0,3.5) {$T_2$};
\node[text=blue] (B1) at (0,0) {$B_1$};
\node[text=red] (B2) at (1,0) {$R_1$};
\node[text=blue] (R1) at (2.5,0) {$B_2$};
\node (R2) at (3.5,0) {$W_1$};
\node[text=red] (W1) at (5,0) {$R_2$};
\node (W2) at (6,0) {$W_2$};
\node (W3) at (8,0) {$W_3$};
\coordinate (l) at (0.5,1) ; \coordinate (r) at (3,1) ; 
\coordinate (w) at (5.5,1) ;
\coordinate (u) at (1.75,2) ; \coordinate (uw) at (3.625,3) ;
\coordinate (root) at (5.8125,3.5) ;
\node (uprime) at (3.3,1.2) {$\nodeinTtwo{u}$};
\path (B1) edge (l)
	(B2) edge[dashed] (l)
	(R1) edge (r)
	(R2) edge (r)
	(l) edge (u)
	(r) edge[dashed] (u)
	(W1) edge (w)
	(W2) edge (w)
	(w) edge (uw)
	(u) edge (uw)
	(uw) edge (root)
	(W3) edge[dashed] (root);
\end{tikzpicture}
\end{center}
\caption{Illustration of {\sc Make-Splittable}($\ptn,(R,B,W)$). $\ptn=\{\{R_1\}, \{B_1, B_2, W_1, R_2, W_2\}, \{W_3\}\}$, and the set $A=\{B_1, B_2, W_1, R_2, W_2\}$ is not \splittable{}. {\sc Make-Splittable}($\ptn$) would choose $\nodeinTtwo{u}=\lca_2(B_2, W_1)$ and replace $A$ by $\{B_2,W_1\}$ and $\{B_1, R_2, W_2, W_3\}$.
}\label{fig:examplesplittable}
\end{figure}
As a first example of {\sc Make-Splittable}, consider $\ptn=\{\{B_1, R_1\}, \{B_2, W_1, R_2, W_2,W_3\}\}$ that was the output of {\sc Make-\RBcomp} depicted in Figure~\ref{fig:examplecompatible}. In this example $\ptn$ is already \splittable{}. In Figure~\ref{fig:examplesplittable} a more interesting example is given.

\begin{lemma}\label{lem:splittablewelldefined}
    \textsc{Make-\splittable} is well-defined, in that a node $\nodeinTtwo{u}$ satisfying the desired properties in line $\star$ can always be found. %
\end{lemma}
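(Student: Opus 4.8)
The plan is to show two things for \textsc{Make-\splittable}: first, that whenever a set $A$ is not \splittable{}, a node $\nodeinTtwo{u}$ exists satisfying the stated conditions; and second, that the \emph{lowest} such node additionally has the property that when $A$ is \tricolored{}, $A\setminus\lbelow{\nodeinTtwo{u}}$ is also \tricolored{}. The existence part is already sketched in the text: if $A$ is \bicolored{} (intersecting, say, colors $C_1$ and $C_2$) and not \splittable{}, then $A\cap C_1$ and $A\cap C_2$ overlap in $V_2$, so there is a node in $V_2[A\cap C_1]\cap V_2[A\cap C_2]$; taking a lowest such node $\nodeinTtwo{u}$ gives both $A\cap\lbelow{\nodeinTtwo{u}}$ and $A\setminus\lbelow{\nodeinTtwo{u}}$ \bicolored{}, so in particular $A\setminus\lbelow{\nodeinTtwo{u}}$ intersects precisely the same colors as $A$.

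First I would handle the \tricolored{} case, which is the substantive one. Suppose $A$ is \tricolored{} and not \splittable{}. Then at least two of the three color classes overlap in $V_2$, so a candidate node certainly exists and we may let $\nodeinTtwo{u}$ be a \emph{lowest} internal node in $V_2[A]$ such that $A\cap\lbelow{\nodeinTtwo{u}}$ is \bicolored{} and $A\setminus\lbelow{\nodeinTtwo{u}}$ intersects the same colors as $A$. I must verify two points: that some node makes $A\cap\lbelow{\nodeinTtwo{u}}$ \bicolored{} while keeping $A\setminus\lbelow{\nodeinTtwo{u}}$ \tricolored{}, and that choosing the lowest such node is consistent. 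For the first, observe that since $\ptn$ is \RBcomp{} (by Lemma~\ref{lem:RBcomp}, as noted in the text), the red and blue leaves of $A$ do not overlap in $V_2$: $A\cap(R\cup B)$ is compatible, so $V_2[A\cap R]\cap V_2[A\cap B]=\emptyset$. Hence any $V_2$-overlap among the three color classes must involve $W$, i.e.\ it is a red--white or blue--white overlap. I would take $\nodeinTtwo{u}$ to be a lowest node in some $V_2[A\cap C]\cap V_2[A\cap W]$ for $C\in\{R,B\}$; then $A\cap\lbelow{\nodeinTtwo{u}}$ is \bicolored{} (colors $C$ and $W$).

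The main obstacle is arguing that $A\setminus\lbelow{\nodeinTtwo{u}}$ retains all three colors. Since $A\cap\lbelow{\nodeinTtwo{u}}$ contains only colors $C$ and $W$ and is missing the third color $C'\in\{R,B\}\setminus\{C\}$, the set $A\setminus\lbelow{\nodeinTtwo{u}}$ automatically keeps all of $A$'s $C'$-leaves, so it intersects $C'$. The delicate part is showing it still intersects $C$ and $W$. Here is where the choice of a \emph{lowest} overlap node together with \RBcomp{}ness does the work: because $\nodeinTtwo{u}$ is a lowest node witnessing an overlap between $A\cap C$ and $A\cap W$, below $\nodeinTtwo{u}$ the classes $A\cap C$ and $A\cap W$ cannot overlap in the two subtrees rooted at the children of $\nodeinTtwo{u}$; combined with the fact that $\lca_2$ of all of $A\cap C$ (resp.\ $A\cap W$) need not lie at $\nodeinTtwo{u}$, one of the color classes $C$ or $W$ must have a representative strictly outside $\lbelow{\nodeinTtwo{u}}$. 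I would make this precise by a case analysis on whether $\nodeinTtwo{u}=\lca_2(A\cap C)$ or $\nodeinTtwo{u}=\lca_2(A\cap W)$ (it cannot equal both, since then the two classes would overlap strictly below $\nodeinTtwo{u}$, contradicting minimality); whichever least common ancestor lies strictly above $\nodeinTtwo{u}$ supplies a leaf of that color outside $\lbelow{\nodeinTtwo{u}}$. Together with the third color $C'$ being wholly preserved, this shows $A\setminus\lbelow{\nodeinTtwo{u}}$ is \tricolored{}, completing the argument that the lowest valid node is well-defined in both the \bicolored{} and \tricolored{} cases.
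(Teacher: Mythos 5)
Your overall strategy is the same as the paper's: dispose of the \bicolored{} case as in the text, and for a \tricolored{} $A$ use the fact that $A$ is \RBcomp{} to get $V_2[A\cap R]\cap V_2[A\cap B]=\emptyset$, then take $\nodeinTtwo{u}$ to be a lowest node of $V_2[A\cap C]\cap V_2[A\cap W]$ for some $C\in\{R,B\}$. But the execution of the \tricolored{} case has two genuine gaps. First, you assert that $A\cap\lbelow{\nodeinTtwo{u}}$ misses the third color $C'\in\{R,B\}\setminus\{C\}$ (and hence that all $C'$-leaves survive into $A\setminus\lbelow{\nodeinTtwo{u}}$), but the disjointness of $V_2[A\cap R]$ and $V_2[A\cap B]$ does not imply this: a single $C'$-leaf, or even all of $A$'s $C'$-leaves, could sit below $\nodeinTtwo{u}$ without putting $\nodeinTtwo{u}$ into $V_2[A\cap C']$. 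The paper excludes this by a direct compatibility argument: a $C'$-leaf below $\nodeinTtwo{u}$ together with a $C$-leaf below and a $C$-leaf outside $\lbelow{\nodeinTtwo{u}}$ forms an in\compatible{} triple, contradicting that $A$ is \RBcomp{}.

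Second, and more seriously, your closing case analysis only yields that \emph{one} of $C$ and $W$ has a leaf outside $\lbelow{\nodeinTtwo{u}}$; together with $C'$ that gives only two colors in $A\setminus\lbelow{\nodeinTtwo{u}}$, whereas the procedure requires it to intersect all three. Indeed, in your branch where $\nodeinTtwo{u}=\lca_2(A\cap C)$, every $C$-leaf of $A$ lies below $\nodeinTtwo{u}$, so $A\setminus\lbelow{\nodeinTtwo{u}}$ contains no $C$-leaf and the conclusion that it is \tricolored{} is false. The repair is that the minimality of $\nodeinTtwo{u}$ in fact rules out \emph{both} equalities $\nodeinTtwo{u}=\lca_2(A\cap C)$ and $\nodeinTtwo{u}=\lca_2(A\cap W)$, by essentially the argument you give for ruling out their conjunction: if, say, $\nodeinTtwo{u}=\lca_2(A\cap C)$, then the child of $\nodeinTtwo{u}$ whose subtree contains a $W$-leaf either also lies in $V_2[A\cap C]\cap V_2[A\cap W]$ (contradicting minimality) or has all of $A\cap W$ below it (contradicting $\nodeinTtwo{u}\in V_2[A\cap W]$). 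This gives what the paper's proof asserts, namely that both $C$ and $W$ have leaves on each side of the split.
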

\begin{proof}
As noted above the existence of $\nodeinTtwo{u}$ is clear when $A$ is \bicolored{}. 
So suppose $A$ is \tricolored{} and not \splittable{}. Note that $V_2[A\cap R]$ and $V_2[A\cap B]$ cannot intersect because $A$ is \RBcomp{}. Assume without loss of generality that $V_2[A\cap R]\cap V_2[A\cap W]\neq \emptyset$, and let $\nodeinTtwo{u}$ be a lowest node in $V_2[A\cap R]\cap V_2[A\cap W]$.  Note that both $A\cap \lbelow{\nodeinTtwo{u}}$ and $A\setminus \lbelow{\nodeinTtwo{u}}$ must intersect $W$ and $R$, and that $A\cap \lbelow{\nodeinTtwo{u}}$ cannot intersect $B$, since then $A$ is not \RBcomp{}. So $A\cap \lbelow{\nodeinTtwo{u}}$ is \bicolored{}%
, and $A\setminus \lbelow{\nodeinTtwo{u}}$ is \tricolored{}.
\end{proof}

\begin{lemma}\label{lem:splittable}
Let $\ptn'$ be the partition after {\sc Make-\splittable}$(\ptn, (R, B,W))$. Then $\ptn'$ is a refinement of $\ptn$ that does not overlap in $V_2$ and  in which every component is \splittable{}.
\end{lemma}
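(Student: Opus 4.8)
The plan is to mirror the structure of the proof of Lemma~\ref{lem:RBcomp}. Three things must be established: that $\ptn'$ refines $\ptn$, that it does not overlap in $V_2$, and that every component is \splittable{}. The first is immediate, since each pass of the while-loop only replaces a single component $A$ by the two sets $A\cap\lbelow{\nodeinTtwo{u}}$ and $A\setminus\lbelow{\nodeinTtwo{u}}$, so the result is a refinement of $\ptn$. For the second, I would invoke the observation already made in the text preceding the lemma: splitting off $A\cap\lbelow{\nodeinTtwo{u}}$ is exactly the deletion of the $T_2$-edge from $\nodeinTtwo{u}$ towards its parent. Since $\ptn$ does not overlap in $V_2$, it can be realized as the leaf sets of the forest obtained from $T_2$ by deleting a set of edges, and deleting one further edge preserves this property; hence $\ptn'$ does not overlap in $V_2$ either.

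The crux is the third claim, and exactly as in Lemma~\ref{lem:RBcomp} it reduces to showing that the while-loop halts, since the loop exits precisely when no component fails to be \splittable{}. By Lemma~\ref{lem:splittablewelldefined} a node $\nodeinTtwo{u}$ with the required properties always exists, and by Lemma~\ref{lem:RBcomp} we may assume $\ptn$ is \RBcomp{} throughout the procedure. I would first argue that the newly created component $A\cap\lbelow{\nodeinTtwo{u}}$ is itself \splittable{}. Because $\nodeinTtwo{u}$ is chosen as a \emph{lowest} node in $V_2[A]$ for which $A\cap\lbelow{\nodeinTtwo{u}}$ is \bicolored{}, each of the two children $\nodeinTtwo{c}$ of $\nodeinTtwo{u}$ satisfies that $A\cap\lbelow{\nodeinTtwo{c}}$ meets at most one color (otherwise $\nodeinTtwo{c}$ or a descendant would already be \bicolored{}, contradicting the choice). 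Since $A\cap\lbelow{\nodeinTtwo{u}}$ meets exactly two colors, these two colors are separated into the two distinct child-subtrees of $\nodeinTtwo{u}$. Consequently $V_2$ restricted to each of the two nonempty color-classes of $A\cap\lbelow{\nodeinTtwo{u}}$ is contained in a different child subtree (paths between leaves of a subtree stay inside that subtree), so the classes do not overlap in $V_2$; the third color is absent. Hence $A\cap\lbelow{\nodeinTtwo{u}}$ is \splittable{}.

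With this in hand, termination follows by a counting argument. Each iteration replaces the non-\splittable{} component $A$ by the \splittable{} component $A\cap\lbelow{\nodeinTtwo{u}}$ together with $A\setminus\lbelow{\nodeinTtwo{u}}$, and both parts are nonempty: $A\cap\lbelow{\nodeinTtwo{u}}$ is \bicolored{}, and $A\setminus\lbelow{\nodeinTtwo{u}}$ meets precisely the same colors as $A$ by the choice of $\nodeinTtwo{u}$ (note that a \monochrome{} component is trivially \splittable{}, so a non-\splittable{} $A$ is \multicolored{} and meets at least two colors). Splitting one component does not affect the \splittable{} status of any other component, which depends only on that component's own intersections with $R,B,W$. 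Therefore the number of \splittable{} components strictly increases at each iteration, which can happen at most $|\leaves|$ times; the procedure halts, and upon termination every component is \splittable{}.

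The main obstacle is the one genuinely nonroutine step, namely verifying that $A\cap\lbelow{\nodeinTtwo{u}}$ is \splittable{}: this is precisely where the ``lowest \bicolored{} node'' choice is used, forcing the two present colors into distinct child subtrees of $\nodeinTtwo{u}$ in $T_2$. Once this separation is established, the refinement, no-overlap, and counting steps are routine and parallel Lemma~\ref{lem:RBcomp} closely.
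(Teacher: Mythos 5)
Your proof is correct, and its skeleton (refinement is immediate, non-overlap in $V_2$ via the edge-deletion picture, termination via a monotone counting argument) matches the paper's. The one place you diverge is the termination argument, where you do more work than necessary: the paper simply observes that each pass of the while-loop strictly increases the number of components of $\ptn$ --- both pieces are nonempty, since $A\cap\lbelow{\nodeinTtwo{u}}$ is \bicolored{} and $A\setminus\lbelow{\nodeinTtwo{u}}$ meets the same colors as the (necessarily \multicolored{}) non-\splittable{} component $A$ --- and this count is bounded by $|\leaves|$. You instead track the number of \splittable{} components, which forces you to prove the additional structural fact that $A\cap\lbelow{\nodeinTtwo{u}}$ is \splittable{}. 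That fact is true and your argument for it goes through, but note that your justification quietly replaces ``lowest node satisfying both conditions of line $\star$'' by ``lowest node with $A\cap\lbelow{\nodeinTtwo{u}}$ \bicolored{}''; this substitution is harmless only because the second condition (that $A\setminus\lbelow{\cdot}$ meets the same colors as $A$) is automatically inherited when passing to a descendant of $\nodeinTtwo{u}$, as $A\setminus\lbelow{\nodeinTtwo{c}}\supseteq A\setminus\lbelow{\nodeinTtwo{u}}$. With that one-line remark your argument is complete; the paper's version of the counting step simply sidesteps the issue.
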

\begin{proof}
    By Lemma~\ref{lem:splittablewelldefined}, and since each iteration increases the number of components in $\ptn$, {\sc Make-\splittable} must terminate, and by its definition, the final partition $\ptn'$ contains only splittable components.
    Clearly $\ptn'$ is a refinement of $\ptn$; it does not overlap in $V_2$ by the same arguments as used in the proof of Lemma~\ref{lem:RBcomp}.
\end{proof}

Before continuing, we summarize the properties of the partition that is the result after {\sc Make-Splittable} that will be useful in the proof of the approximation guarantee in Section~\ref{sec:analysis}. 
To describe these, we need the notion of a \emph{top component}. 
\begin{definition}
\label{def:topcomponent}
Given the partition $\ptn$ at the start of the current iteration, and $\ptn'$ another partition encountered in the current iteration, let ${\cal D}$ be the components that were created during the current iteration, i.e., ${\cal D}=\ptn'\setminus \ptn$. Then $A\in {\cal D}$ is a \emph{top component} if there exists no $A'\in {\cal D}$ such that $\lca_2(A)\prec \lca_2(A')$. 
\end{definition}

\begin{lemma}\label{lem:properties}
Let $\ptnstart$ denote the partition at the start of a given iteration, and $(R,B,W)$ the coloring of the leaves that is selected, let $\ptnRBcomp$  denote the partition after {\sc Make-\RBcomp($\ptnstart, (R, B, W)$)}, and let $\ptnsplit$ denote the partition after
{\sc Make-Splittable($\ptnRBcomp,(R, B, W)$)},
\begin{enumerate}
\item\label{itm:multi0} Only \multicolored{} components are subdivided by the iteration, i.e., if $A\in\ptnstart\setminus \ptnsplit$, then $A$ is \multicolored.
\item\label{itm:multi1} Only \multicolored{} components are created by {\sc Make-\RBcomp} and {\sc Make-Splittable}, i.e., if $A\in \ptnsplit\setminus \ptnstart$, then $A$ is \multicolored.
\item\label{itm:trinum} 
The number of \tricolored{} components in $\ptnsplit$ is the same as in $\ptnRBcomp$.
\item\label{itm:trinotop}
    Any \tricolored{} component in $\ptnRBcomp$ or $\ptnsplit$ that is not a top component contains no \compatible{} \tricolored{} triple.
\item\label{itm:binotop}
    Any \bicolored{} component $A$ in $\ptnsplit$ that is not a top component satisfies that $\lca_2(A)$ is not overlapped by $A\cap C$ for any color $C\in \{R, B, W\}$. In other words, $\lbelow{\nodeinTtwo{u}_\ell} \cap A$ and $\lbelow{\nodeinTtwo{u}_r} \cap A$ are \monochrome{} where $\nodeinTtwo{u}_\ell$ and $\nodeinTtwo{u}_r$ are the children of $\lca_2(A)$.  
\item\label{itm:wtop}
    If $x_W$ is in a top component $A$ in $\ptnstart$ and $x_W$ is not a descendant of $\lca_2(A\cap (R\cup B))$, then $x_W$ is in a top component in $\ptnsplit$. \end{enumerate}
\end{lemma}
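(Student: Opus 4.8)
The six statements are largely independent, and the plan is to isolate four structural facts about a single iteration and then read off each claim. \textbf{(i)} Neither \textsc{Make-\RBcomp} nor \textsc{Make-\splittable} ever subdivides a \monochrome{} set: a split requires a descendant $\nodeinTtwo{u}$ with $A\cap\lbelow{\nodeinTtwo{u}}$ meeting two colors, which is impossible inside a \monochrome{} set; since subsets of \monochrome{} sets are \monochrome, such a set and all its would-be sub-pieces are untouched. \textbf{(ii)} Because $\ptnstart$ does not overlap in $V_2$, distinct components of $\ptnstart$ span disjoint subtrees of $T_2$; hence for created components top-ness is decided \emph{within} each original $A_0\in\ptnstart$, and the unique piece containing $\lca_2(A_0)$ is the only top one (all others have strictly smaller $\lca_2$). \textbf{(iii)} Consequently a created component $A\subseteq A_0$ is non-top iff $\lca_2(A)\prec\lca_2(A_0)$, which happens iff the $T_2$-edge directly above $\lca_2(A)$ is cut during the iteration; that cut is a split performed at the node $\lca_2(A)$, with $A$ contained in the lower piece. \textbf{(iv)} An \textsc{Make-\RBcomp} split at $\nodeinTtwo{u}$ separates $R$ and $B$ cleanly at $\nodeinTtwo{u}$ (one child has no red, the other no blue), so its lower piece $P$ satisfies $\lca_2(P\cap(R\cup B))=\nodeinTtwo{u}$; a \textsc{Make-\splittable} split always produces a \bicolored{} lower piece.

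Given these, claims~\ref{itm:multi0} and~\ref{itm:trinum} are immediate: \ref{itm:multi0} is the contrapositive of \textbf{(i)}, and for \ref{itm:trinum} one checks with Lemma~\ref{lem:splittablewelldefined} that every \textsc{Make-\splittable} split sends a \tricolored{} component to one \bicolored{} plus one \tricolored{} piece and a \bicolored{} component to two \bicolored{} pieces, so the number of \tricolored{} components is invariant. For claim~\ref{itm:trinotop} I would first prove the reduction that a \tricolored, \RBcomp{} component $A$ contains a \compatible{} \tricolored{} triple iff some white leaf of $A$ is not a descendant of $m_A:=\lca_2(A\cap(R\cup B))$: in $T_1$ every \tricolored{} triple has the red and blue leaves as the close pair, and since $A$ is \RBcomp{} we have $\lca_2(x_R,x_B)=m_A$ for every such pair, so the triple is \compatible{} exactly when the white leaf lies above $m_A$. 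By \textbf{(iii)} a non-top \tricolored{} $A$ has its top edge cut at $\lca_2(A)$; by \textbf{(iv)} this cut cannot be a \textsc{Make-\splittable} split (its lower piece is \bicolored, but here it contains the \tricolored{} $A$), so it is an \textsc{Make-\RBcomp} split and $m_A=\lca_2(A)$; thus every white leaf lies below $m_A$ and no \compatible{} \tricolored{} triple exists. Claim~\ref{itm:wtop} is the dual statement: every node at which a cut is made has two colors below it and hence lies at or below the red--blue least common ancestor, so a white leaf strictly above $m_{A_0}$ is never contained in a lower piece, stays in the piece containing $\lca_2(A_0)$, and by \textbf{(ii)} that piece is top.

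For claim~\ref{itm:multi1} the plan is to maintain, by induction over the individual splits, the invariant that every component not in $\ptnstart$ is \multicolored. By \textbf{(i)} only \multicolored{} components are split, so it suffices to show both pieces of each split are \multicolored. For \textsc{Make-\splittable} this is built into line~$\star$ (the lower piece is \bicolored{} and the upper piece repeats the colors of its parent). For \textsc{Make-\RBcomp} the lower piece meets both $R$ and $B$, so the only danger is a \monochrome{} upper piece $A\setminus\lbelow{\nodeinTtwo{u}}$; such a piece cannot be all white (that would require $R$ and $B$ to already be cleanly separated below $\nodeinTtwo{u}$, i.e.\ $A$ \RBcomp{} and not split), so it is all red or all blue, say all red. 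Then $\nodeinTtwo{u}\prec\lca_2(A\cap R)\preceq\lca_2(A_0\cap R)=:\rho_0$. Since $u$ is a lowest \pcs, its child $u_r$ is not a \pcs; condition~(a) gives that $A_0\cap R$ is \compatible, and then condition~(c) gives a non-red leaf $w\in A_0$ with $(A_0\cap R)\cup\{w\}$ \compatible{}---which, as $w$ lies outside $\lbelow{u_r}$ in $T_1$, is equivalent to $w\not\preceq\rho_0$ in $T_2$. As $\nodeinTtwo{u}\prec\rho_0$ and high leaves are never peeled into lower pieces, $w$ remains in $A$ and lies above $\nodeinTtwo{u}$, so $w$ is in the upper piece, contradicting that it is all red. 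The delicate point here---and the step I expect to be the main obstacle---is the bookkeeping showing that the witness $w$ really survives in the component $A$ being split (and not in some earlier lower piece), which I would handle by applying the argument to the \emph{first} split that could create a \monochrome{} piece.

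Finally, claim~\ref{itm:binotop} also rests on \textbf{(iii)} and \textbf{(iv)}. Writing $C_1,C_2$ for the two colors of the \bicolored{} $A$, the \splittable{} property gives a single $T_2$-edge separating $A\cap C_1$ from $A\cap C_2$; the children of $\lca_2(A)$ are \monochrome{} unless $\lca_2(A\cap C_1)$ and $\lca_2(A\cap C_2)$ are comparable, say $\lca_2(A\cap C_2)=\lca_2(A)$. If $\{C_1,C_2\}=\{R,B\}$ this is excluded directly because $A$ is \RBcomp{} (which separates $R$ and $B$ cleanly at $\lca_2(A)$). If instead $C_2=W$, I would use that the top edge of the non-top $A$ is cut at $\lca_2(A)$ by \textbf{(iii)} and analyse that cut via \textbf{(iv)}: in both the \textsc{Make-\RBcomp} and \textsc{Make-\splittable} cases, the cleanness of the responsible split together with the \splittable{} property of the final $A$ forces the colour that reaches $\lca_2(A)$ to be confined to one child. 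This white-spanning case is the second place where care is needed, since one must track how the white leaves on the ``wrong'' side of $\lca_2(A)$ are removed by the \textsc{Make-\splittable} separations.
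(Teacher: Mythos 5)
Your facts (i) and (iv) are sound, and your reduction of property~\ref{itm:trinotop} to ``in an \RBcomp{} component every red--blue pair has the same $\lca_2$, namely $m_A:=\lca_2(A\cap(R\cup B))$'' is exactly the one hint the paper gives. The genuine gap is that your facts (ii) and (iii) --- the entire characterization of top components, on which your proofs of properties~\ref{itm:trinotop}, \ref{itm:binotop} and~\ref{itm:wtop} rest --- are asserted rather than proved, and they do not follow from non-overlap in $V_2$. First, a single cut of $A$ at $\nodeinTtwo{u}$ can produce two pieces with \emph{incomparable} $\lca_2$'s, neither covering $\lca_2(A)$: this happens precisely when $A\cap\lbelow{c}\subseteq\lbelow{\nodeinTtwo{u}}$ for the child $c$ of $\lca_2(A)$ above $\nodeinTtwo{u}$. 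Iterating, there need not be ``the piece containing $\lca_2(A_0)$'' at all, and several incomparable pieces can be top, so ``non-top iff the edge above $\lca_2(A)$ is cut'' fails in both directions. Ruling this out is the real content of the lemma: one must show (a) that every cut node lies strictly below $m_A$ --- for \textsc{Make-Splittable} your claim-\ref{itm:wtop} sentence ``two colors below it, hence at or below the red--blue lca'' is not a proof, since one of the two colors may be white; the correct argument is that a cut node at or above $m_A$ would leave an all-white upper piece, violating the same-colors requirement --- \emph{and} (b) that whenever a component is cut, it contains a leaf strictly above $m_{A_0}$, which by (a) survives in every upper piece and pins the $\lca_2$ of the final upper piece to $\lca_2(A_0)$. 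Ingredient (b) is exactly where Lemma~\ref{lem:coloring} and the fact that $u$'s children are not \pcsplural{} must enter (the white leaf $\whiteleaf$ in Case~1; the blue leaves on the far side of $m_{A_0}$ in Case~3; a witness $w\not\preceq\lca_2(A_B\cap B)$ extracted from condition~(c) failing at $u_\ell$ in Case~2), and you never invoke it for this purpose. Second, ``top-ness is decided within each original $A_0$'' is simply false: in Case~2 the components $A_B,A_R$ can satisfy $\lca_2(A_B)\prec\lca_2(A_R)$ without overlapping in $V_2$, so a piece of $A_R$ can render the per-$A_B$ top piece globally non-top; your proof of property~\ref{itm:binotop} then has nothing to say about that piece, which is merely \splittable{} and need not be clean at $\lca_2(A_B)$ in the sense required.

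A secondary issue: in your argument for property~\ref{itm:multi1}, the witness $w$ from condition~(c) satisfies $w\not\preceq\lca_2(A_0\cap R)$ but may still lie below $m_{A_0}$ and hence below an \emph{earlier} cut node, so the bookkeeping you flag as ``the main obstacle'' is a real one and your proposed fix (apply the argument to the first offending split) does not obviously close it; the clean route is again the surviving leaf strictly above $m_{A_0}$. For comparison, the paper proves essentially none of this --- it declares the lemma verified ``by inspection'' and records only the $\lca_2(r,b)=\nodeinTtwo{u}$ observation --- so you are attempting strictly more than the paper does, but the attempt is not yet a proof.
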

\begin{proof}
Each of the properties is easily verified by inspection of the {\sc Make-\RBcomp} and {\sc Make-Splittable} procedures.
\modified{For example, point \ref{itm:trinotop} follows from the fact that a node $\nodeinTtwo{u}$ picked in {\sc Make-\RBcomp} is always chosen as low as possible.
This implies that for the newly created component $A'$, and any $r \in A' \cap R, b \in A' \cap B$, $\lca_2(r,b) = \nodeinTtwo{u}$.}
\end{proof}

\subsection{\sc Split}
The next procedure  will refine $\ptn$ so that the resulting partition does not overlap in $V_1[R\cup B]$. Since by Lemma~\ref{lem:splittable}, $\ptn$ is \splittable{}, we can simply intersect each component with $R$, $B$ and $W$, to achieve this property. However, we will need to be slightly more careful in order to achieve the approximation guarantee; in particular, we will sometimes need to perform what we call a {\sc Special-Split}.

\begin{center}\fbox{\begin{minipage}[b]{.84\textwidth}
\begin{algorithmic}
\Procedure{Split}{$\ptn, (R, B, W)$}
\For{each \multicolored{} component $A$}
	\If{$A$ is \tricolored{}, and there exists a \tricolored{} triple in $A$ that is \compatible{}}
		\State {\sc Special-Split($A, \ptn, (R,B,W)$)}
	\Else{} 
		\State $\ptn \gets \ptn \setminus \{A\} \cup \{A\cap R, A\cap B, A\cap W\}$ (where empty sets are not added)
 	\EndIf
\EndFor					 
\EndProcedure
\end{algorithmic}
\end{minipage}}
\end{center}
 \begin{remark}
Our analysis in Section~\ref{sec:analysis} needs the {\sc Special-Split}, {\sc Find-Merge-Pair} and {\sc Merge-Components} procedures only in one (of three) cases that will be described in Lemma~\ref{lem:coloring}. Without these procedures, it is trivial to see  that the resulting partition is feasible, and we will see in Section~\ref{sec:analysis} that the proof of the approximation ratio is quite simple in these cases. 
\end{remark}

We now describe the property that the outcome partition of {\sc Split} will have, which goes beyond merely being \RBcomp{} and non-overlapping in $V_2\cup V_1[R\cup B]$. We first define that property, and give  necessary and sufficient conditions for a partition that does not overlap in $V_2$ to have this property. 
\begin{definition}
    Let $K \subseteq \leaves$. A partition $\ptn$ is \emph{\feas{$K$}} if for all $w\in \leaves$, $\ptn$ is \comp{$K\cup \{w\}$}, and no two components in $\ptn$ overlap in $V_2\cup V_1[K]$.
\end{definition}
We will simply say $\ptn$ is {\em feasible} if it is \feas{$\leaves$}, which we note does indeed coincide with the definition of a feasible solution to MAF. We make two additional remarks about the notion of $K$-feasibility:
\begin{itemize}
\item
Being \feas{$K$} requires something stronger than simply not overlapping in $V_2\cup V_1[K]$ and $K$-compatibility. The stronger compatibility notion will be used in Lemma~\ref{lem:merge1} to show that if $\ptn$ is \feas{$R\cup B$}, then future iterations of the Red-Blue algorithm will not further subdivide (the partition induced on) the leaves in $R\cup B$. This is not necessarily true if $\ptn$ is only \RBcomp{} and does not overlap in $V_2\cup V_1[R\cup B]$.\footnote{{For example, if $\ptn$ has only one \multicolored{} component, which is \RBcomp{} but not \splittable{}, and $\ptn$ does not overlap in $V_2\cup V_1[R\cup B]$, then  the Red-Blue algorithm will further subdivide the partition induced on $R\cup B$.} }

\item
If $u\in V_1$ is a \pcs{} for $\ptn$, then $\ptn$ is not \feas{$\lbelow{u}$}. The converse is not true, however: if $\ptn$ contains a single component containing $\lbelow{u}$ which is \comp{$\lbelow{u}$}, but this component contains both $w\in \leaves\setminus\lbelow{u}$ such that $\lbelow{u}\cup \{w\}$ is compatible, and $w'\in \leaves\setminus\lbelow{u}$ such that $\lbelow{u}\cup \{w'\}$ is not compatible, then $\ptn$ is {\em not} \feas{$\lbelow{u}$}, but $u$ is not a \pcs.
The stronger notion of a $u$ being a \pcs{} versus not being \feas{$\lbelow{u}$} is needed when we prove the approximation guarantee in Section~\ref{sec:analysis}.
\end{itemize}

The following technical lemma gives conditions to check if a partition does not overlap in $V_1[R\cup B]$.

\begin{lemma}\label{lem:ugh}
Let $\ptn$  be the partition and $(R, B, W)$ be the coloring at the start of an iteration.
    Let $\ptn'$ be a refinement of $\ptn$ that does not overlap in $V_2$ and that is \RBcomp{}. Then
    $\ptn'$ does not overlap in $V_1[R\cup B]$ if     
    \begin{itemize}
    \item[(i)]$\ptn'$ has at most one \multicolored{} component $A^*$;
    \item[(ii)]if $A^*$ exists, and $\lca_2(A^*) \prec \lca_2(R\cup B)$, then any node $v'$ with $\lca_2(A^*) \prec v' \preceq \lca_2(R\cup B)$ is covered only by components in $\ptn'$ that are subsets of $W$, or that are also components of $\ptn$.
    \end{itemize}
\end{lemma}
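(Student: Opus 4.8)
The plan is to localize the claim to the three regions of $V_1[R\cup B]$ and dispatch each by a different mechanism, invoking conditions (i) and (ii) only for the single \multicolored{} component. Write $u=\lca_1(R\cup B)$ with children $u_\ell,u_r$, so $R=\lbelow{u_r}$ and $B=\lbelow{u_\ell}$. Since the iteration began at a lowest \pcs{}, neither $u_\ell$ nor $u_r$ is a \pcs{}, so $\ptn$ does not overlap in $V_1[R]$ nor in $V_1[B]$; I will use this repeatedly. First I would record the disjoint decomposition $V_1[R\cup B]=V_1[R]\cup\{u\}\cup V_1[B]$, where $V_1[R]$ is exactly the set of internal nodes weakly below $u_r$ (each such node is $\lca_1$ of two leaves of $R$, as all leaves below $u_r$ lie in $R$), and symmetrically for $V_1[B]$. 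From the positions of the leaves I then extract three coverage facts: a component that is a subset of $W$ covers no node of $V_1[R\cup B]$; a \monochrome{} subset of $R$ (resp.\ $B$) covers only nodes of $V_1[R]$ (resp.\ $V_1[B]$); and only a \multicolored{} component can cover $u$. Condition (i) disposes of the node $u$ at once: at most one component (the unique \multicolored{} one, if present) covers it.

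It remains to forbid two distinct components covering a common $v\in V_1[R]$ (the case $v\in V_1[B]$ being symmetric). By the coverage facts any such component is a \monochrome{} subset of $R$ or the unique \multicolored{} component $A^*$. I would first handle two \monochrome{} sets $A_1,A_2\subseteq R$. If they lie in different components $A\neq A'$ of $\ptn$, then $V_1[A_1]\cap V_1[A_2]\subseteq V_1[A]\cap V_1[A']\cap V_1[R]=\emptyset$ since $\ptn$ does not overlap in $V_1[R]$. If they lie in one $\ptn$-component $A$, I would use that $A\cap R$ is \compatible{} (as $\ptn$ is \comp{$R$}), so $T_1$ and $T_2$ restricted to $A\cap R$ are isomorphic; because $\ptn'$ does not overlap in $V_2$ the sets $A_1,A_2$ have disjoint spans in $T_2$, hence are non-crossing there, and transporting non-crossing across the isomorphism gives $V_1[A_1]\cap V_1[A_2]=\emptyset$.

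The crux is the interaction between $A^*$ and a \monochrome{} $R$-component $C$. The new phenomenon is that $A^*$, being \multicolored{}, has a leaf outside $\lbelow{u_r}$, so its span inside $V_1[R]$ is the \emph{entire} upward closure of $A^*\cap R$, not merely $V_1[A^*\cap R]$. Applying the mechanism of the previous paragraph to $A^*\cap R$ against $C$ first rules out $v\in V_1[A^*\cap R]$, so a clash forces $v$ to be a proper ancestor of some leaf $x\in A^*\cap R$ while $C$ straddles $v$ in $T_1$. The plan is to translate this $T_1$-enclosure into $T_2$: using that $A^*$ is \RBcomp{} together with the fact that $\ptn'$ is a $T_2$-forest, I would exhibit a node $v'$ with $\lca_2(A^*)\prec v'\preceq\lca_2(R\cup B)$ that is covered by $C$ (in particular witnessing $\lca_2(A^*)\prec\lca_2(R\cup B)$, so that (ii) is non-vacuous). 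Condition (ii) then forces $C$ to be a subset of $W$ or a component of $\ptn$; as $C\subseteq R$ it must be an unchanged component of $\ptn$. Writing $A\supseteq A^*$ for the $\ptn$-component containing $A^*$, both $A$ and $C$ cover $v\in V_1[R]$ and $A\neq C$ (since $C$ is \monochrome{} and cannot contain the \multicolored{} $A^*$), contradicting that $\ptn$ does not overlap in $V_1[R]$.

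I expect the last translation to be the main obstacle: producing the $T_2$-node $v'$ on the path from $\lca_2(A^*)$ to $\lca_2(R\cup B)$ that $C$ covers. The difficulty is that the triple witnessing the enclosure mixes leaves of $A^*$ and of $C$, which need not be \compatible{}, so the restriction-isomorphism used for \monochrome{}--\monochrome{} pairs is unavailable here. Instead I would argue directly in $T_2$ about the relative positions of $\lca_2(A^*)$, $\lca_2(R\cup B)$ and the span of $C$, using that the non-$R$ leaf of $A^*$ pins $x$ strictly below $\lca_2(R\cup B)$ and that the $T_2$-forest separates $C$ from $A^*$; the care needed is in showing that $C$'s span must then reach strictly above $\lca_2(A^*)$ while staying weakly below $\lca_2(R\cup B)$.
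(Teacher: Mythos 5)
Your reduction of the overlap question to $V_1[R]$ (and symmetrically $V_1[B]$), your treatment of the node $\lca_1(R\cup B)$ via condition (i), and your handling of two \monochrome{} red components are all sound and essentially match the paper's argument (the paper phrases the same-$\ptn$-parent case via an explicit four-leaf $\lca$ computation rather than the restriction isomorphism, but the two are interchangeable). The problem is that you stop exactly at the step the lemma is really about: actually producing the node $v'$ with $\lca_2(A^*)\prec v'\preceq \lca_2(R\cup B)$ that is covered by $C$, so that condition (ii) can be invoked. You flag this as ``the main obstacle'' and offer only a plan to ``argue directly in $T_2$ about relative positions,'' with no argument; as written the proof is incomplete at its crux.

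Moreover, your stated reason for abandoning the compatibility route there --- that the witnessing triple ``mixes leaves of $A^*$ and of $C$, which need not be compatible'' --- is a misdiagnosis. Once the case where $C$ and $A^*$ lie in different components of $\ptn$ is dispatched (any common covered node in $V_1[R]$ would then be a $V_1[R]$-overlap of $\ptn$ itself, impossible since $u_r$ is not a \pcs), both lie in a single component $A$ of $\ptn$, and $A\cap R$ is \compatible. Take $v$ a \emph{lowest} node of $V_1[R]$ below which both $C$ and $A^*$ have leaves; minimality puts their leaves under different children of $v$ and forces $C$ to have a leaf outside $\lbelow{v}$. Choosing $x'\in C\cap\lbelow{v}$, $x''\in A^*\cap\lbelow{v}$ and $y'\in C\setminus\lbelow{v}$, all three leaves lie in $R$ (everything below $v\preceq u_r$ is red and $C\subseteq R$) and in $A$, so $\{x',x'',y'\}$ \emph{is} compatible. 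Hence $\lca_1(x',x'')=v\prec\lca_1(x',x'',y')$ transfers to $\lca_2(x',x'')\prec\lca_2(x',x'',y')$, which places $\lca_2(x',x'')$ in $V_2[C]$; and since $x''\in A^*$ lies below $\lca_2(x',x'')$ while $\ptn'$ does not overlap in $V_2$, all of $A^*$ must lie strictly below $\lca_2(x',x'')$. Thus $v'=\lca_2(x',x'')$ satisfies $\lca_2(A^*)\prec v'\preceq\lca_2(R)\preceq\lca_2(R\cup B)$ and is covered by $C$ --- exactly the node you were missing, and exactly how the paper closes this case.
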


\begin{proof}
    Suppose the conditions of the lemma hold for $\ptn'$.
First, observe that by (i), $\ptn'$ contains at most one component covering $\lca_1(R\cup B)$. 
Suppose for a contradiction that $A',A''\in \ptn'$ overlap in $V_1[R]\cup V_1[B]$. 

Since $\lca_1(R\cup B)$ was chosen as a lowest \pcs, $\lca_1(R)$ and $\lca_1(B)$ were not \pcsplural\ for $\ptn$. This implies that  no two components of $\ptn$ overlap in $V_1[R]\cup V_1[B]$, so it must be the case that $A'$ and $A''$ were both part of a single component in $\ptn$. 
Furthermore, $\ptn$ must have been \comp{$R$} and \comp{$B$}, so $(A'\cup A'')\cap R$ and $(A'\cup A'') \cap B$ are compatible sets. We will show that these facts imply that if $A'$ and $A''$ overlap in $V_1[R]$ or $V_1[B]$, then they must overlap in $V_2[R]$ or $V_2[B]$ respectively, thus contradicting that $\ptn'$ does not overlap in $V_2$.

Let $v$ be a lowest node in $V_1[R\cup B]$ such that $A'\cap \lbelow{v} \neq \emptyset$ and $A''\cap \lbelow{v}\neq\emptyset$ (where we note that $v$ exists since $A',A''$ overlap in some node in $V_1[R\cup B]$). 
Observe that a child of $v$ cannot be in both $V_1[A']$ and $V_1[A'']$, as this contradicts the choice of $v$. Hence $v$ can be in $V_1[A']$ and $V_1[A'']$ only if $A'$ and $A''$ also contain leaves in $\leaves\setminus \lbelow{v}$.
Let $x',x''$ be in $A'\cap \lbelow{v}$ and $A''\cap\lbelow{v}$ respectively, and choose $y',y''$ in $A' \setminus \lbelow{v}$ and $A''\setminus \lbelow{v}$. 

First, assume both $A'$ and $A''$ are \monochrome{}, and thus they are each either red or blue (since otherwise they would not overlap in $V_1[R]\cup V_1[B]$).
Note that $\lca_1(x',x'') = v \prec \lca_1(x',x'',y')$ and similarly $\lca_1(x',x'') \prec \lca_1(x',x'',y'')$. 
Since $\{x',x'',y',y''\}$ is a compatible set, we must also have $\lca_2(x',x'') \prec \lca_2(x',x'',y')$ and $\lca_2(x',x'') \prec \lca_2(x',x'',y'')$. But then $\lca_2(x',x'')$ is on the path from $x'$ to $y'$ as well as on the path from $x''$ to $y''$. Hence, $A'$ and $A''$ overlap in $\lca_2(x',x'')\in V_2$, contradicting that   $\ptn'$ does not overlap in $V_2$.

Now, suppose $A'$ is \monochrome, and $A''$ is the unique \multicolored{} component $A^*$, and $A',A^*$ overlap in $V_1[R]\cup V_1[B]$. Without loss of generality $A'\subseteq R$. Then we still know that $\{x',x'',y'\}$ is compatible, and thus that $\lca_2(x',x'') \prec \lca_2(x',x'',y')$, so that $\lca_2(x',x'')\in V_2[A']$.
Now, $x''\in A^*$ is a descendant of $\lca_2(x',x'')$, so if $A^*$ also has a leaf that is not a descendant of $\lca_2(x',x'')$ then $A'$ and $A^*$ overlap in $\lca_2(x',x'')$, again contradicting that $\ptn'$ does not overlap in $V_2$.
So it must be the case that $\lca_2(A^*)$ is a descendant of $\lca_2(x',x'')$. But then $V_2[A']$ intersects the path from $\lca_2(A^*)$ to $\lca_2(R\cup B)$. But we already showed above that $A'\cup A^*$ was part of a single component in $\ptn$, which implies that $A'$ is not a component of $\ptn$, contradicting (ii).
\end{proof}

We now describe the {\sc Special-Split} procedure. 
Recall that this is only called if $A$ is \tricolored{}, and there is at least one \tricolored{} compatible triple in $A$.

\begin{center}\fbox{\begin{minipage}[b]{.84\textwidth}
\begin{algorithmic}
\Procedure{Special-Split}{$A,\ptn, (R, B, W)$}
\If{every tricolored triple in $A$ is \compatible{}}
	\State $\ptn \gets \ptn \setminus \{A\} \cup \{A\cap R, A\setminus R\}$.	
\Else{}
	\State \makebox[0pt][r]{$\star$\quad}Let $\nodeinTtwo{u} = \lca_2(A\cap (R\cup B))$.
	\State $\ptn \gets \ptn \setminus \{A\} \cup \{A\setminus \lbelow{\nodeinTtwo{u}},A'\cap R, A'\cap B, A'\cap W\}$ where $A'= A\cap \lbelow{\nodeinTtwo{u}}$.	
\EndIf
\EndProcedure
\end{algorithmic}
\end{minipage}}\end{center}

\begin{figure}
\begin{center}
\begin{tikzpicture}[scale=0.6]
\node (T2) at (0,3.5) {$T_2$};
\node[text=blue] (B1) at (0,0) {$B_1$};
\node[text=red] (B2) at (1,0) {$R_1$};
\node[text=blue] (R1) at (2.5,0) {$B_2$};
\node (R2) at (3.5,0) {$W_1$};
\node[text=red] (W1) at (5,0) {$R_2$};
\node (W2) at (6,0) {$W_2$};
\node (W3) at (8,0) {$W_3$};
\coordinate (l) at (0.5,1); 
\coordinate (r) at (3,1); 
\coordinate (w) at (5.5,1);
\coordinate (u) at (1.75,2); 
\coordinate (uw) at (3.625,3);
\coordinate (root) at (5.8125,3.5);
\path (B1) edge (l)
	(B2) edge[dashed] (l)
	(R1) edge (r)
	(R2) edge (r)
	(l) edge (u)
	(r) edge[dashed] (u)
	(W1) edge (w)
	(W2) edge (w)
	(w) edge (uw)
	(u) edge (uw)
	(uw) edge (root)
	(W3) edge[dashed] (root);
\end{tikzpicture}
\settowidth{\nl}{$\xrightarrow{\text{\sc Make-\RBcomp}}$}
\begin{minipage}[b][2.5cm][c]{\nl}
\begin{center}
$\xrightarrow{\text{\qquad \sc Split \qquad}}$
\end{center}
\end{minipage}
\begin{tikzpicture}[scale=0.6]
\node (T2) at (0,3.5) {$T_2$};
\node[text=blue] (B1) at (0,0) {$B_1$};
\node[text=red] (B2) at (1,0) {$R_1$};
\node[text=blue] (R1) at (2.5,0) {$B_2$};
\node (R2) at (3.5,0) {$W_1$};
\node[text=red] (W1) at (5,0) {$R_2$};
\node (W2) at (6,0) {$W_2$};
\node (W3) at (8,0) {$W_3$};
\coordinate (l) at (0.5,1);
\coordinate (r) at (3,1) ; 
\coordinate (w) at (5.5,1) ;
\coordinate (u) at (1.75,2) ; \coordinate (uw) at (3.625,3) ;
\coordinate (root) at (5.8125,3.5) ;
\path (B1) edge[dashed] (l)
	(B2) edge[dashed] (l)
	(R1) edge[dashed] (r)
	(R2) edge (r)
	(l) edge (u)
	(r) edge[dashed] (u)
	(W1) edge[dashed] (w)
	(W2) edge (w)
	(w) edge (uw)
	(u) edge (uw)
	(uw) edge (root)
	(W3) edge[dashed] (root);
\end{tikzpicture}
\end{center}
\vspace*{\baselineskip}

\begin{center}
\begin{tikzpicture}[scale=0.6]
\node (T2) at (0,3.5) {$T_2$};
\node[text=blue] (B1) at (0,0) {$B_1$};
\node[text=red] (B2) at (1,0) {$R_1$};
\node[text=blue] (R1) at (2.5,0) {$B_2$};
\node (R2) at (3.5,0) {$W_1$};
\node[text=red] (W1) at (5,0) {$R_2$};
\node (W2) at (6,0) {$W_2$};
\node (W3) at (8,0) {$W_3$};
\coordinate (l) at (0.5,1) ; \coordinate (r) at (3,1) ; 
\coordinate (w) at (5.5,1) ;
\coordinate (u) at (1.75,2) ; \coordinate (uw) at (3.625,3) ;
\coordinate (root) at (5.8125,3.5) ;
\path (B1) edge (l)
	(B2) edge (l)
	(R1) edge (r)
	(R2) edge (r)
	(l) edge[dashed] (u)
	(r) edge (u)
	(W1) edge (w)
	(W2) edge (w)
	(w) edge (uw)
	(u) edge (uw)
	(uw) edge (root)
	(W3) edge (root);
\end{tikzpicture}
\settowidth{\nl}{$\xrightarrow{\text{\sc Make-\RBcomp}}$}
\begin{minipage}[b][2.5cm][c]{\nl}
\begin{center}
$\xrightarrow{\text{\qquad \sc Split \qquad}}$
\end{center}
\end{minipage}
\begin{tikzpicture}[scale=0.6]
\node (T2) at (0,3.5) {$T_2$};
\node[text=blue] (B1) at (0,0) {$B_1$};
\node[text=red] (B2) at (1,0) {$R_1$};
\node[text=blue] (R1) at (2.5,0) {$B_2$};
\node (R2) at (3.5,0) {$W_1$};
\node[text=red] (W1) at (5,0) {$R_2$};
\node (W2) at (6,0) {$W_2$};
\node (W3) at (8,0) {$W_3$};
\coordinate (l) at (0.5,1) ; \coordinate (r) at (3,1) ; 
\coordinate (w) at (5.5,1) ;
\coordinate (u) at (1.75,2) ; \coordinate (uw) at (3.625,3) ;
\node (uprime) at (3.625,3.4) {$\nodeinTtwo{u}$};
\coordinate (root) at (5.8125,3.5) ;
\path (B1) edge[dashed] (l)
	(B2) edge (l)
	(R1) edge[dashed] (r)
	(R2) edge (r)
	(l) edge[dashed] (u)
	(r) edge (u)
	(W1) edge[dashed] (w)
	(W2) edge (w)
	(w) edge (uw)
	(u) edge (uw)
	(uw) edge[dashed] (root)
	(W3) edge (root);
\end{tikzpicture}
\end{center}
\caption{Two illustrations of {\sc Split}($\ptn,(R,B,W)$). 
In the top example $\ptn=\{\{R_1\},$ $\{B_2,W_1\},$ $\{B_1,R_2,W_2\},$ $\{W_3\}\}$ and {\sc Split}($\ptn$) would simply refine each set of $\ptn$ by intersecting it with the three color classes. The result is that every leaf is a singleton in $\ptn'$.
\\
In the bottom example, $\ptn=\{\{B_1,R_1\}, \{B_2, W_1, R_2, W_2, W_3\}\}$. The set $A=\{B_2, W_1, R_2, W_2, W_3\}$ is \tricolored{} and contains triple $\{B_2, R_2, W_3\}$ that is \tricolored{} and \compatible{}, but not every \tricolored{} triple in $A$ is \compatible{}, e.g. $\{B_2,R_2, W_2\}$ is not \compatible{}.  In this case, the {\sc Special-Split} replaces $A$ by $\{\{B_2\}, \{R_2\},$ $\{W_1,W_2\},$ $\{W_3\}\}$.
}\label{fig:examplesplit}
\end{figure}

The next lemma states that this ensures the partition resulting after {\sc Split} is \feas{$R\cup B$}.
\begin{lemma}\label{lem:feas}
Let $\ptn'$ be the partition after {\sc Split}$(\ptn, (R, B,W))$. Then $\ptn'$ is a refinement of $\ptn$ that is \feas{$R\cup B$}.
Moreover, {\sc Special-Split} is applied to at most one component in each iteration of the Red-Blue algorithm, implying that $\ptn'$ has at most one \multicolored{} component.
\end{lemma}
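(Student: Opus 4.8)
The plan is to prove the two assertions of the lemma---that $\ptn'$ is \feas{$R\cup B$}, and that \textsc{Special-Split} runs at most once so that $\ptn'$ has at most one \multicolored{} component---by reducing the bulk of the overlap argument to Lemma~\ref{lem:ugh} and isolating a single nontrivial compatibility check. First I would dispose of the routine parts. That $\ptn'$ refines $\ptn$ is immediate, since both \textsc{Split} and \textsc{Special-Split} replace a component only by a partition of it. Because the input partition does not overlap in $V_2$ (Lemma~\ref{lem:splittable}) and refining never creates new overlaps, $\ptn'$ does not overlap in $V_2$ either, and it is \RBcomp{} by Lemma~\ref{lem:RBcomp} (refinement preserves this). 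Hence every component $A'$ has $A'\cap(R\cup B)$ compatible (a subset of a compatible set), so \feas{$R\cup B$} reduces to (a) showing $A'\cap(R\cup B\cup\{w\})$ is compatible for every $w$, which can only fail for a component containing both a white leaf and a red-or-blue leaf, i.e.\ a \multicolored{} component; and (b) showing no two components overlap in $V_1[R\cup B]$.

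Next I would settle the ``moreover'' claim, as it drives the rest. By Observation~\ref{obs:coloring} together with property~\ref{itm:trinum} of Lemma~\ref{lem:properties}, at most one \tricolored{} component enters \textsc{Split}; since \textsc{Special-Split} is only invoked on \tricolored{} components, it runs at most once. Inspecting the branches, the \textsc{Else} branch of \textsc{Split} and the \textsc{Else} branch of \textsc{Special-Split} produce only \monochrome{} pieces (in the latter $A\setminus\lbelow{\nodeinTtwo{u}}\subseteq A\cap W$ because $A\cap(R\cup B)\subseteq\lbelow{\nodeinTtwo{u}}$); only the first branch of \textsc{Special-Split} creates a \multicolored{} piece, namely $A^*:=A\setminus R=(A\cap B)\cup(A\cap W)$. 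As this branch runs at most once, $\ptn'$ has at most one \multicolored{} component $A^*$, and when it exists every other component of $\ptn'$ is \monochrome{}.

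It then remains to handle (a) for $A^*$ and to verify the hypotheses of Lemma~\ref{lem:ugh} for (b). For (a) I only need the triples $\{b_1,b_2,w\}$ with $b_1,b_2\in A\cap B$, $w\in A\cap W$. Fixing any $r\in A\cap R$, compatibility of $\{b_1,b_2,r\}\subseteq A\cap(R\cup B)$ forces $T_2$ to group $b_1,b_2$ against $r$, i.e.\ $\lca_2(b_1,b_2)\prec\lca_2(b_1,r)$, while compatibility of the \tricolored{} triple $\{b_1,r,w\}$ (the first-branch hypothesis) gives $w\not\preceq\lca_2(b_1,r)$; combining, $w\not\preceq\lca_2(b_1,b_2)$, which is exactly the $T_2$ condition, and the $T_1$ condition holds since $b_1,b_2\in\lbelow{u_\ell}$ and $w\notin\lbelow{u}$. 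For (b) I note $\ptn'$ refines the start-of-iteration partition, does not overlap in $V_2$, and is \RBcomp{}, and condition (i) of Lemma~\ref{lem:ugh} is the ``moreover'' claim. For condition (ii) I would apply the same tricolored-triple argument to a red and a blue leaf realising $m:=\lca_2(A\cap(R\cup B))$ to conclude that every white leaf of $A$ lies outside the subtree rooted at $m$; since $A\cap B$ lies within that subtree, $\lca_2(A^*)\succ m\succeq\lca_2(A\cap R)$. Thus $V_2[A\cap R]$ lies in the subtree of $\lca_2(A\cap R)\prec\lca_2(A^*)$, so $A\cap R$ covers no node $v'$ with $\lca_2(A^*)\prec v'\preceq\lca_2(R\cup B)$; any other component covering such a $v'$ is \monochrome{} and hence, by property~\ref{itm:multi1} of Lemma~\ref{lem:properties}, either a subset of $W$ or a component of the start-of-iteration partition---precisely condition (ii). Lemma~\ref{lem:ugh} then yields no overlap in $V_1[R\cup B]$.

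I expect the main obstacle to be the verification of condition (ii), and underneath it the geometric fact that the white leaves of the \textsc{Special-Split} component lie outside the subtree of $\lca_2(A\cap(R\cup B))$ in $T_2$. Translating the purely combinatorial ``every \tricolored{} triple is \compatible{}'' hypothesis into this $T_2$ statement is the crux; I would need to argue carefully that $m$ genuinely separates red from blue (so that a red and a blue leaf with $\lca_2=m$ exist), which rests on \splittable{}-ness and \RBcomp{}-ness forcing $V_2[A\cap R]$ and $V_2[A\cap B]$ to be disjoint.
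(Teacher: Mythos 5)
Your overall strategy matches the paper's: reduce the no-overlap claim to Lemma~\ref{lem:ugh} and handle the compatibility of the one surviving \multicolored{} component separately. Your explicit check that $\{b_1,b_2,w\}$ is \compatible{} is correct and actually fills in a step the paper only asserts. But there are two genuine gaps.

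First, your justification of the ``moreover'' claim does not go through. Property~\ref{itm:trinum} of Lemma~\ref{lem:properties} only says {\sc Make-Splittable} preserves the number of \tricolored{} components; it does not relate $\ptnRBcomp$ to $\ptnstart$, and {\sc Make-\RBcomp} \emph{can} split one \tricolored{} component into two \tricolored{} ones (the split-off piece $A\cap \lbelow{\nodeinTtwo{u}}$ may contain white leaves below $\nodeinTtwo{u}$ in addition to the red and blue leaves it separates). So ``at most one \tricolored{} component enters {\sc Split}'' is false in general. The correct route, which the paper takes, is property~\ref{itm:trinotop}: a \tricolored{} component that is not a top component has no \compatible{} \tricolored{} triple and hence never triggers {\sc Special-Split}, and there is at most one \tricolored{} \emph{top} component.

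Second, your verification of condition (ii) of Lemma~\ref{lem:ugh} has a hole. You dispose of components covering a node $v'$ with $\lca_2(A^*)\prec v'\preceq \lca_2(R\cup B)$ by saying each is \monochrome{} and hence, by property~\ref{itm:multi1}, a subset of $W$ or a component of $\ptnstart$. But property~\ref{itm:multi1} concerns only components created by {\sc Make-\RBcomp} and {\sc Make-Splittable}; the red and blue \monochrome{} pieces that {\sc Split} itself carves out of the \emph{other} (non-top) \multicolored{} components of $\ptnsplit$ are neither subsets of $W$ nor components of $\ptnstart$, so your dichotomy misses them. (They in fact cannot cover such a $v'$, because every non-top component created in the iteration lies below a node $\preceq \lca_2(A_0\cap(R\cup B))\prec \lca_2(A^*)$; this is essentially why the paper instead shows $A^*$ is a \emph{top component}, which yields condition (ii) for all newly created components at once.) Relatedly, ``refining never creates new overlaps'' in $V_2$ is false as a general principle --- splitting $A$ into $A\cap R$ and $A\setminus R$ can create a $V_2$-overlap between the two parts --- and here it is rescued only by the geometric fact you establish later, that the white leaves of $A$ lie outside $\lbelow{\lca_2(A\cap(R\cup B))}$.
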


\begin{proof}
The fact that $\ptn'$ does not overlap in $V_2$ follows from the fact that every component of $\ptn$ was \splittable{}. 

It is also easy to see that every component is \comp{$R \cup B\cup \{w\}$} for all $w\in \leaves$: each component is either \monochrome{} (and thus \comp{$R\cup B\cup \{w\}$} by the fact that the partition is \RBcomp\ by Lemma~\ref{lem:RBcomp}), or it is the result of a {\sc Special-Split} on a component that was already \comp{$R \cup B\cup \{w\}$} for all  $w\in \leaves$ 
before the {\sc Special-Split}.

It remains to show that no two components in $\ptn'$ overlap in $V_1[R\cup B]$.
By Lemma~\ref{lem:ugh}, it suffices to show $\ptn'$ has at most one \multicolored{} component, and that this component, if it exists, is a top component (recall Definition~\ref{def:topcomponent}).
Note that the only possible \multicolored{} components of $\ptn'$ are \bicolored{} components  created by {\sc Special-Split} on a component $A\in \ptn$ that is \tricolored{} and in which every \tricolored{} triple is \compatible{}. By property~\ref{itm:trinotop} of Lemma~\ref{lem:properties}, the only \tricolored{} components that have a  compatible \tricolored{} triple are top components, and by Observation~\ref{obs:coloring}, the partition at the start of the iteration had at most one \tricolored{} component, and thus there is also at most one \tricolored{} top component in $\ptn$. So $\ptn'$ has at most one \multicolored{} component, which is a top component, and by Lemma~\ref{lem:ugh}, this implies $\ptn'$ does not overlap in $V_1[R\cup B]$.
\end{proof}

\subsection{{\sc Find-Merge-Pair} and {\sc Merge-Components}}\label{subsec:merge}

The astute reader may have noted that the Red-Blue Algorithm sometimes increases the number of components by more than necessary to be \feas{$R\cup B$}.
For example, it follows from the arguments in the proof of the previous lemma that if there is a \tricolored{} component in which every \tricolored{} triple is \compatible{}, then not further subdividing this component would also leave a partition that is \feas{$R\cup B$}. 
{\sc Find-Merge-Pair} and {\sc Merge-Components} aim to merge two components of the partition produced at the end of {\sc Split}, so that
the partition with the merged components is still \feas{$R\cup B$}.
{\sc Find-Merge-Pair} thus looks for a pair of components that can be merged, by scanning the components of the current partition, and finding two leaves in $R\cup B$ that are in different sets of the partition now, but that were in the same component at the start of the current iteration. 

\begin{center}\fbox{\begin{minipage}[b]{.84\textwidth}
\begin{algorithmic}
 \Procedure{Find-Merge-Pair}{$\pairslist, \ptn, (R, B, W)$}
   \If{exists $x_1,x_2\in R\cup B$ such that
	   \begin{itemize}[\hspace{.9cm}]\setlength{\itemsep} {0pt}\setlength{\partopsep}{0pt}\setlength{\topsep}{0pt}%
  		\item $x_1$ and $x_2$ were in the same component at the start of the current iteration, 
		\item $x_1$ and $x_2$  are in distinct components $A_1$ and $A_2$ in $\ptn$, and 
        \item $\ptn\setminus \{A_1,A_2\}\cup \{A_1\cup A_2\}$ is \feas{$R\cup B$} 	
        \end{itemize}
	}
  	\State $\pairslist \gets \pairslist \cup \{(x_1,x_2)\}$
  \EndIf
 \EndProcedure
\end{algorithmic}
\end{minipage}}\end{center}

Although we could simply merge the components containing $x_1$ and $x_2$ for the pair found by {\sc Find-Merge-Pair}, we will not do so until the very end of the algorithm.
The reason we keep such ``superfluous'' splits is because they will increase the objective value of the dual solution we use to prove the approximation guarantee of 2 (see Section~\ref{sec:analysis}). We ``reverse'' these superfluous splits (i.e., we will merge components) at the end of the algorithm; this is reminiscent of a ``reverse delete'' in approximation algorithms for network design~\cite{GoemansW97}. %

\begin{center}\fbox{\begin{minipage}[b]{.84\textwidth}
\begin{algorithmic}
 \Procedure{Merge-Components}{$\pairslist, \ptn$}
\For{each pair $(x_1,x_2)$ in \pairslist}

	\State 	Let $A_1$ and $A_2$ be the sets in $\ptn$ containing $x_1$ and $x_2$, respectively.
	\State $\ptn \gets \ptn \setminus\{A_1, A_2\}\cup \{A_1\cup A_2\}$.
\EndFor
 \EndProcedure
\end{algorithmic}
\end{minipage}}\end{center}

The proof that we will be able to merge the components containing  the pair of leaves identified by {\sc Find-Merge-Pair} at the end of the algorithm will rely on the fact that (i) because the partition is \comp{$R\cup B\cup \{w\}$} for any $w\in \leaves$, merging the components containing the identified leaves $x_1,x_2\in R\cup B$ cannot increase the number of incompatible triples contained in a component, and (ii) because the partition is \feas{$R\cup B$}, future iterations of the algorithm will not further refine the partition induced on $R\cup B$. 
This  is the reason why we do not allow {\sc Find-Merge-Pair} to choose leaves in $W$ (and only choosing leaves in $R\cup B$ is sufficient to prove the claimed approximation guarantee).

\begin{lemma}\label{lem:merge1}
Let $(R, B, W)$ be the coloring during some iteration of the Red-Blue algorithm, let $\ptn$ be the partition at the end of the pass, and let $x,x'\in R\cup B$. If $x,x'$ are in the same component of $\ptn$, then $x$ and $x'$ are in the same component in any partition at any later point of the algorithm's execution.  
\end{lemma}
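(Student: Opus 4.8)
The plan is to show that once the partition is \feas{$R\cup B$}, no subsequent refinement ever separates two leaves of $R\cup B$ lying in a common component; since the only other operation is {\sc Merge-Components}, which merely forms unions, this yields the conclusion at every later point of the execution. First I would pin down the starting point: by Lemma~\ref{lem:feas} the partition produced by {\sc Split} is \feas{$R\cup B$}, and {\sc Find-Merge-Pair} does not alter $\ptn$ (it only appends to \pairslist). Hence the end-of-pass partition is \feas{$R\cup B$}.

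I would then argue by induction over the remaining iterations of the while-loop that at the start of every later iteration the current partition $\ptncur$ is \feas{$R\cup B$}, and that no iteration separates two leaves of $R\cup B$ that share a component. The easy half is that \feas{$R\cup B$} is preserved by refinement: each of {\sc Make-\RBcomp}, {\sc Make-Splittable} and {\sc Split} only shrinks components (equivalently, deletes edges of $T_2$), and both defining properties of \feas{$R\cup B$} are inherited by refinements---a subset of a compatible set is compatible, so \comp{$R\cup B\cup\{w\}$} survives for every $w$, and refining can only remove overlaps in $V_2\cup V_1[R\cup B]$. Since {\sc Merge-Components} runs only after the loop and merely takes unions, it can neither separate a pair nor is it needed for the induction.

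The crux is to show each refinement respects $R\cup B$. Fix a later iteration with lowest \pcs{} $u'$ and induced coloring $(R',B',W')$, and suppose $x,x'\in R\cup B$ lie in a common component $A'$ just before some split. Then $A'\cap(R\cup B)$ is compatible, and because $\ptncur$ is \comp{$R\cup B\cup\{w\}$}, every triple $\{x,x',w\}$ with $w\in A'$ is compatible; this pins the relative position of $x$ and $x'$ in $T_2$ to mirror their position in $T_1$. I would then rule out separation by the type of split: for a color intersection in {\sc Split} (or {\sc Special-Split}) I must show $x$ and $x'$ receive the same new color, and for a $T_2$-cut in {\sc Make-\RBcomp} or {\sc Make-Splittable} I must show the deleted edge---which lies above $\hat u=\lca_2(r',b')$ for some new-red $r'$ and new-blue $b'$ of the component---does not lie on the $T_2$-path between $x$ and $x'$. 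The two guaranteed compatible triples $\{x,x',r'\}$ and $\{x,x',b'\}$ locate $\lca_2(x,x')$ relative to the cut, and the no-overlap of $\ptncur$ in $V_1[R\cup B]$ forces $A'$ to be the unique component covering $\lca_1(x,x')\in V_1[R\cup B]$. Combined with the fact that $u'$ is a \emph{lowest} \pcs{} (so its children are not \pcsplural, whence $\ptncur$ does not overlap in $V_1[\lbelow{u'_\ell}]$ or $V_1[\lbelow{u'_r}]$ and is \comp{$\lbelow{u'_\ell}$} and \comp{$\lbelow{u'_r}$}), this should force $x$ and $x'$ to the same side of every cut and into the same color class.

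The real work is this last step, and within it the $T_2$-cut case is the main obstacle. Unlike the color case---where different new colors force $\lca_1(x,x')=u'$, so that one can play the position of $u'$ against the lowest-\pcs{} property---a $T_2$-cut is not controlled by a single triple, so one must use the compatibility of $\{x,x',r'\}$ and $\{x,x',b'\}$ simultaneously and track how the resolved pair of each triple constrains $\lca_2(x,x')$ relative to the cut node $\hat u=\lca_2(r',b')$. Making this bookkeeping airtight, and checking that the two branches of {\sc Special-Split} introduce no additional way to cut between $x$ and $x'$, is where I expect the delicate case analysis to concentrate.
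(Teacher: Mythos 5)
Your skeleton matches the paper's: show the end-of-pass partition is \feas{$R\cup B$}, deduce that all of $R\cup B$ receives a single color in every later iteration (so color-intersection splits never separate $x$ from $x'$), and then rule out separation by a $T_2$-cut using compatible triples through $x$ and $x'$. But the proposal stops exactly where the proof must be closed, and you say so yourself (``where I expect the delicate case analysis to concentrate''). The missing step is a single short contradiction that your bookkeeping with the two triples $\{x,x',r'\}$ and $\{x,x',b'\}$ circles around without landing. Let $A$ be the component just before the cut that first separates the pair and $\nodeinTtwo{u}$ the cut node; every non-color cut performed by {\sc Make-\RBcomp}, {\sc Make-Splittable} or {\sc Special-Split} detaches a set $A\cap\lbelow{\nodeinTtwo{u}}$ that is \multicolored{} with respect to the new coloring $(R',B',W')$. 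Relabel so that $x\in\lbelow{\nodeinTtwo{u}}$ and $x'\notin\lbelow{\nodeinTtwo{u}}$, and pick \emph{one} leaf $w\in A\cap\lbelow{\nodeinTtwo{u}}$ whose new color differs from that of $x$. Since all of $R\cup B$ has one new color, $w\notin R\cup B$, i.e.\ $w$ is white in the \emph{old} coloring; this is the fact your proposal never exploits, and it is what makes the $T_1$ side work: $\lca_1(x,x')\preceq\lca_1(R\cup B)\prec\lca_1(x,x',w)$, so because $\ptn$ is \comp{$R\cup B\cup\{w\}$} the triple must resolve as $\{x,x'\}$ in $T_2$ as well, giving $\lca_2(x,x')\prec\lca_2(x,x',w)$. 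The cut forces the opposite resolution, $\lca_2(x,w)\preceq\nodeinTtwo{u}\prec\lca_2(x,x',w)$, and only one pair of a triple can be the resolved pair --- contradiction. One well-chosen triple suffices; no branch-by-branch analysis of {\sc Special-Split} is needed.

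A second, genuine error: ``refining can only remove overlaps in $V_2\cup V_1[K]$'' is false. Splitting a \emph{single} component into two pieces that both cover a node of $V_1[K]$ creates an overlap there: take $A=\{x,x'',w,w''\}$ with $x,x''\in R$ and $w,w''\in W$ and split it into $\{x,w\}$ and $\{x'',w''\}$; both pieces cover $\lca_1(x,x'')\in V_1[R]$. So the ``easy half'' of your induction does not go through as stated. It can be repaired --- under the inductive hypothesis that the induced partition on $R\cup B$ is unchanged, all $R\cup B$-leaves of a component stay in one piece and every other piece is a subset of $W$, hence covers nothing in $V_1[\lbelow{u}]$ --- but note that the argument only ever needs the weaker consequence that no node of $V_1[R\cup B]$ is a \pcs{} at the later iteration (so that $R\cup B$ is monochromatic under $(R',B',W')$), not that the refined partition is itself \feas{$R\cup B$}.
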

\begin{proof}
	Let $(R',B',W')$ be the coloring of the leaves in some later iteration of the algorithm, and suppose for a contradiction that the iteration with coloring $(R',B',W')$ separates $x$ and $x'$  in different components.
    From a brief consideration of the algorithm, it is apparent that there must exist some $\nodeinTtwo{u}\in T_2$ such that $A\cap \lbelow{\nodeinTtwo{u}}$ is \multicolored{} with respect to the coloring $(R',B',W')$, and $A \cap \lbelow{\nodeinTtwo{u}}$ contains precisely one of $x,x'$. 
    By relabeling if needed, assume that $x \in A \cap \lbelow{\nodeinTtwo{u}}$ and $x' \in A \setminus \lbelow{\nodeinTtwo{u}}$, and let $w \in A \cap \lbelow{\nodeinTtwo{u}}$ be any leaf with a color different from $x$, and note that 
    \begin{equation}
    \lca_2(x,w)\prec \nodeinTtwo{u} \prec\lca_2(x,x',w).\label{eq:contra}
    \end{equation}

    Since $\ptn$ is \feas{$R\cup B$}, no $v\in V_1[R\cup B]$ is a \pcs, and hence all leaves in $R\cup B$, and in particular $x$ and $x'$, must have the same color in the coloring $(R',B',W')$.     Furthermore, if $w$ has a different color than $x$ and $x'$ in $(R',B',W')$, then $w\not \in R\cup B$, and thus $\lca_2(x,x')\prec \lca_2(x,x',w)$. 
    But, since $\ptn$ is \comp{$R\cup B\cup \{w\}$}, this implies that if $w$ is in the same component as $x$ and $x'$ in (a refinement of) $\ptn$, then $\lca_2(x,x')\prec \lca_2(x,x',w)$, contradicting (\ref{eq:contra}), because only one of $\lca_2(x,x')$ and $\lca_2(x,w)$ can be strictly below $\lca_2(x,x',w)$. 
    \end{proof}

\subsection{Correctness of the algorithm}\label{subsec:MAFfeasible}
\begin{theorem}
The Red-Blue Algorithm returns a feasible solution to MAF.
\end{theorem}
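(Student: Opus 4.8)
The plan is to prove that the Red-Blue Algorithm terminates and that the partition it returns is feasible, i.e., \feas{$\leaves$}. The argument has two parts: termination and feasibility of the final partition.

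\textbf{Termination.} First I would argue that the main while-loop executes only finitely many times. The key is to show that each iteration makes genuine progress that cannot be undone. Each iteration refines $\ptn$ (by Lemmas~\ref{lem:RBcomp}, \ref{lem:splittable}, and~\ref{lem:feas}, every procedure produces a refinement), and moreover establishes that $\ptn$ becomes \feas{$R\cup B$} by Lemma~\ref{lem:feas}. The crucial point is Lemma~\ref{lem:merge1}: once two leaves $x,x'\in R\cup B$ lie in the same component at the end of an iteration, they remain together forever. Combined with the fact that a \pcs\ $u$ is chosen as a lowest node so that $\lbelow{u}=R\cup B$, this means that the pair $\lca_1(R\cup B)$ (the chosen $u$) can never again be a \pcs\ in a subsequent iteration: by Lemma~\ref{lem:feas} the partition is \feas{$R\cup B$}, and Lemma~\ref{lem:merge1} guarantees that future iterations do not subdivide the leaves of $R\cup B$ in a way that reintroduces infeasibility on $\lbelow{u}$. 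Since $V_1$ is finite and the set of nodes that can serve as a \pcs\ strictly shrinks (or more carefully, progress is monotone and bounded), the while-loop terminates. I would quantify this with a potential argument, e.g., the number of leaf-pairs in $R\cup B$ that are separated can only decrease for already-processed nodes, so only finitely many iterations are possible.

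\textbf{Feasibility.} When the while-loop exits, the loop condition guarantees that $\ptn$ is feasible, i.e., there is no \pcs\ in $V_1$, which is equivalent to $\ptn$ being \feas{$\leaves$} in the following sense: no $u\in V_1$ satisfies any of conditions (a), (b), (c) of Definition~\ref{def:pcs}. I would then verify that ``no \pcs'' implies genuine MAF-feasibility, namely that $\ptn$ does not overlap (in $V_1$ and $V_2$) and that every component is compatible. The non-overlap in $V_2$ is maintained as an invariant throughout (stated just after the algorithm overview, and preserved by each procedure per Lemmas~\ref{lem:RBcomp},~\ref{lem:splittable}). Non-overlap in $V_1$ follows because if two components overlapped at some $v\in V_1$, then taking $u$ to be an appropriate ancestor covering both would make $u$ satisfy condition (b). Compatibility of each component follows from condition (a): if some component $A$ contained an incompatible triple, then $u=\lca_1$ of that triple (or the root) would satisfy (a).

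\textbf{The effect of {\sc Merge-Components}.} The main obstacle, and the step I would treat most carefully, is the final {\sc Merge-Components} call: merging pairs from \pairslist\ after the loop could in principle destroy feasibility. Here I would invoke that each pair $(x_1,x_2)$ was added to \pairslist\ only when $\ptn\setminus\{A_1,A_2\}\cup\{A_1\cup A_2\}$ was \feas{$R\cup B$} \emph{at the time of insertion}, and argue that this feasibility is preserved up to the end. The two facts flagged in the discussion before Lemma~\ref{lem:merge1} are what I would lean on: (i) because the partition is \comp{$R\cup B\cup\{w\}$} for all $w$, merging along a pair in $R\cup B$ cannot create an incompatible triple; and (ii) because the partition stays \feas{$R\cup B$}, subsequent iterations never refine the induced partition on $R\cup B$ (Lemma~\ref{lem:merge1}), so the two leaves $x_1,x_2$ remain available as representatives of mergeable components. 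I would show by induction over the pairs processed in {\sc Merge-Components} that each merge preserves feasibility, so that the final output remains \feas{$\leaves$}. The delicate part is confirming that merges corresponding to different pairs do not interfere with one another and that the non-overlap and compatibility conditions survive all merges simultaneously; this is where I expect the bulk of the careful case analysis to lie.
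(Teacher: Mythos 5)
Your overall strategy matches the paper's: the while-loop exit condition together with the invariants from Lemmas~\ref{lem:RBcomp}--\ref{lem:feas} disposes of everything except {\sc Merge-Components}, and the correctness of the merges is argued by induction over the pairs in \pairslist{} using Lemma~\ref{lem:merge1} and \comp{$R\cup B\cup\{w\}$}ness. Two of the points you dwell on can be dispatched more cheaply. Termination follows simply because each iteration strictly refines $\ptn$ (a \pcs{} $u$ means $\ptn$ is not \feas{$\lbelow{u}$}, while the iteration ends \feas{$R\cup B$}), and a partition of a finite set can be strictly refined only finitely often. The non-interference of different merges, which you flag as delicate, is handled by noting that the final partition is independent of the order in which pairs are processed, so one may process them in reverse order of insertion and reduce each inductive step to a single merge against a partition already known to be feasible.

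The one genuine gap is in your fact (i). Being \comp{$R\cup B\cup\{w\}$} for every $w$ controls only triples with at most one leaf outside $R\cup B$. When the pair $(x_1,x_2)$ was added, one of the two components, say $A_2$, may be \multicolored{} and contain white leaves; by the time {\sc Merge-Components} runs, later iterations may have shrunk it to some $A_2'\subseteq A_2$ with $A_2'\cap(R\cup B)=A_2\cap(R\cup B)$ but a different white part. To conclude that $A_1\cup A_2'$ is \compatible{} you must also handle triples $\{x,w,w'\}$ with $x\in A_1$ and $w,w'\in A_2'\cap W$, and these are not constrained by \comp{$R\cup B\cup\{w\}$}ness for any single $w$. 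The paper closes this by fixing $\blueleaf\in A_2'\cap B$ and showing $\lca_i(\blueleaf,w)=\lca_i(x,w)$ for $i=1,2$, so that $\{x,w,w'\}$ is \compatible{} if and only if $\{\blueleaf,w,w'\}$ is, the latter holding because $A_2'$ is itself \compatible{}. Without some such argument your induction does not go through; the rest of your plan is sound.
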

\begin{proof}
Let $k$ be the number of pairs in \pairslist. We prove the theorem by induction on $k$. If $k=0$, then the algorithm returns the partition obtained at the end of the while-loop, which is feasible by the fact that otherwise $\lca_1(\leaves)$ would be a \pcs.

If $k >0$, observe that the final partition is the same irrespective of the order in which the pairs in \pairslist{} are considered. We  may thus assume without loss of generality that they are considered in the reverse order in which they were added to \pairslist. Let $\ptn'$ be the partition after the components have been merged for all pairs on \pairslist, except the pair $(x_1,x_2)$ that was added to \pairslist{} first. Let $\ptn$ be the partition at the moment when $(x_1,x_2)$ was added to \pairslist, and let $R, B, W$ be the three color sets at that moment. Observe that $\ptn'$ is a refinement of $\ptn$,
and that, by Lemma~\ref{lem:merge1}, $\ptn$ and $\ptn'$ induce the same partition of $R\cup B$.

Let $A_1, A_2$ be the components in $\ptn$ containing $x_1,x_2$ respectively. By the choice of $x_1,x_2$, $(A_1\cup A_2)$ is $R\cup B\cup \{w\}$-compatible for any $w\in \leaves$, and does not overlap any component of $\ptn\setminus\{A_1,A_2\}$.

If $A_1, A_2$  are \monochrome{}, they both contain leaves in $R\cup B$ only, and thus by Lemma~\ref{lem:merge1}, $\ptn'$ contains components $A_1$ and $A_2$ as well. Furthermore, in this case, the set $A_1\cup A_2$ is a subset of $R\cup B$ and thus the fact that it is $R\cup B\cup \{w\}$-compatible for any $w\in \leaves$ implies it is \compatible{}. The fact that $A_1\cup A_2$ does not overlap any set $A\in \ptn\setminus\{A_1,A_2\}$ implies it also does not overlap any set $A'\in \ptn'\setminus\{A_1,A_2\}$, since $\ptn'$ is a refinement of $\ptn$.

If $A_1$ and $A_2$ are not both \monochrome{}, observe that only one of $A_1, A_2$ is \bicolored{} and contains leaves in $B\cup W$, since those are the only type of \multicolored{} components after {\sc Split}, and $\ptn$ does not overlap in $V_1[R\cup B]$ so it can only have one \multicolored{} component.
Suppose without loss of generality that $A_1$ is \monochrome{} and $A_2$ contains leaves in $B\cup W$. 
By Lemma~\ref{lem:merge1}, $\ptn'$ contains component $A_1$ and a component $A_2'\subseteq A_2$, where $A_2'\cap (R\cup B) = A_2\cap (R\cup B)$.

We need to show that $A_1\cup A_2'$ is \compatible{} and does not overlap any component in $\ptn'\setminus\{A_1,A_2'\}$.
For the latter, suppose in order to derive a contradiction that $A_1\cup A_2'$ overlaps $A'\in \ptn'\setminus\{A_1,A_2'\}$. 
Observe that the only nodes in $V[A_1\cup A_2']$ that are not in $V[A_1]\cup V[A_2']$ are in $V_2\cup V_1[R\cup B]$, so the overlap must be on a node $v\in V_2\cup V_1[R\cup B]$.
Since $\ptn'$ is a refinement of $\ptn$, there must exist $A\in \ptn$ such that $A'\subset A$, and thus $A_1\cup A_2'$ overlaps $A$ in $v$ as well. But then $A_1\cup A_2$ also overlaps $A$ in $v$ contradicting that $\ptn\setminus\{A_1,A_2\}\cup \{A_1\cup A_2\}$  is \feas{$R\cup B$}.

To show that $A_1\cup A_2'$ is \compatible{}, note that $A_2'$ is \compatible{}, and that $A_1\cup A_2'\subset A_1\cup A_2$ is \comp{$R\cup B\cup \{w\}$} for any $w\in \leaves$. 
So to show that $A_1\cup A_2'$ is \compatible{}, it suffices to consider $x,w,w'\in A_1\cup A_2'$ with $x\in A_1$ and $w,w'\in A_2'\cap W$.

Fix any $\blueleaf{} \in A_2'\cap B$.
Note that $\lca_i( \blueleaf{}, w ) = \lca_i( x,\blueleaf{},w ) =\lca_i( x, w )$ for $i=1,2$, because $\lca_i( x, \blueleaf{} ) \prec \lca_i( x, \blueleaf{}, w )$ because $A_1\cup A_2$ is \comp{$R\cup B\cup\{w\}$}.
Therefore $\{x,w,w'\}$ is \compatible{} exactly when $\{\blueleaf{},w,w'\}$ is \compatible{}. We conclude that $A_1\cup A_2'$ is \compatible{} because $A_2'$ is \compatible{}.
\end{proof}

\section{Proof of the approximation guarantee}\label{sec:analysis}
We showed in the previous section that the Red-Blue algorithm returns a feasible solution $\ptn$.
In order to prove that our algorithm achieves an approximation guarantee of 2, we will use linear programming duality. 

\subsection{The linear programming relaxation}
Introduce a variable $x_L$ for every compatible set $L\in \compat$, where in an integral solution, $x_L = 1$ indicates that the tree with leaf set $L$ forms part of the solution to MAF.
The constraints ensure that in an integral solution, $\{ L : x_L = 1\}$ is a partition, and that $V[L] \cap V[L'] = \emptyset$ for two distinct sets $L, L'$ with $x_L = x_{L'} = 1$.
The objective encodes the size of the partition minus 1.
\begin{equation}\tag{LP}\label{eq:lp} 
\begin{array}{lll}
\minimize &\sum_{L\in \compat} x_L - 1,\\
\mbox{s.t.}& \sum_{L:v\in L} x_L = 1 & \forall v \in \leaves,\\
&\sum_{L:v\in V[L]} x_L \le 1 & \forall v \in V\setminus \leaves,\\
&x_L \geq 0 &\forall L\in \compat.
\end{array}
\end{equation}

In fact, it will be convenient for our analysis to expand the first set of constraints to contain a constraint for every (not necessarily compatible) set of leaves $A$, stating that every such set must be intersected by at least one tree in the chosen MAF solution. 
All these constraints are clearly already implied by the constraints for $A$ a singleton, already present in \eqref{eq:lp}, but they provide us a more expressive dual.
\begin{equation}\tag{LP$'$}\label{eq:lpbig}
    \begin{array}{lll}
\minimize &\sum_{L\in \compat} x_L - 1,\\
\mbox{s.t.}& \sum_{L:A\cap L\neq \emptyset} x_L \ge 1 & \forall A \subseteq \leaves, A \neq \emptyset\\
&\sum_{L:v\in V[L]} x_L \le 1 & \forall v \in V\setminus \leaves,\\
&x_L \ge 0 &\forall L\in \compat.
\end{array}
\end{equation}
The dual of \eqref{eq:lpbig} is 
\begin{equation}\tag{D$'$}\label{LP:dual} 
    \begin{array}{lll}
\max &\sum_{v \in V \setminus \leaves} y_v + \sum_{A\subseteq \leaves}  z_A - 1,\\
\mbox{s.t. }&\sum_{v\in V[L] \setminus \leaves} y_v +\sum_{A:A\cap L\neq \emptyset}z_A \le 1 & \forall L\in \compat,\\
&y_v \le 0& \forall v\in V\setminus \leaves,\\
&z_A \ge 0 &\forall A\subseteq\leaves.
\end{array}
\end{equation}
We will refer to the left-hand side of the first family of constraints, i.e., $\sum_{v\in V[L] \setminus \leaves} y_v +\sum_{A:A\cap L\neq \emptyset}z_A$, as the \emph{load} on set $L$, and denote it by $\load_{(y,z)}(L)$.
By weak duality, we have that the objective value of any feasible dual solution provides a lower bound on the objective value of any feasible solution to \eqref{eq:lp}, and hence also on the optimal value of any feasible solution to MAF. 
Hence, in order to prove that an agreement forest that has $|\ptn|$ components is a 2-approximation, it suffices to find a feasible dual solution with objective value $\frac12(|\ptn|-1)$, i.e., for every new component created by the algorithm, the dual objective value should increase by $\frac12$ (on average).

\subsection{The dual solution}\label{subsec:dual}
The dual solution maintained is as follows. 
Throughout the main loop of the algorithm, $z_A=1$ if and only if $A$ is a component in $\ptn$. In the last part of the algorithm, when we merge components according to \pairslist{}, we do not update the dual solution; these operations affect the primal solution (i.e., $\ptn$) only.

Initially, $y_v = 0 $ for all $v\in V_1\cup V_2$. 
At the start of each iteration, we decrease $y_{u}$ by $1$, where $u=\lca_1(R\cup B)$. %
Whenever in the algorithm we choose a component $A$ and a node $\nodeinTtwo{u}\in V_2[A]$, and separate the component $A$  into $A\cap \lbelow{\nodeinTtwo{u}}$ and $A\setminus \lbelow{\nodeinTtwo{u}}$, we decrease $y_{\nodeinTtwo{u}}$ by $1$. To be precise this happens in {\sc Make-\RBcomp, Make-\splittable} and in one case in {\sc Special-Split} (where we actually further refine $A\cup\lbelow{\nodeinTtwo{u}}$). 
The lines where such nodes are chosen are indicated by $\star$ in the description of the algorithm  and the procedures it contains. 

\begin{lemma}
The dual solution maintained by the algorithm is feasible.
\end{lemma}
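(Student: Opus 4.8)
The plan is to verify the single dual constraint $\load_{(y,z)}(L) \le 1$ for every compatible set $L \in \compat$, since the sign constraints $y_v \le 0$ and $z_A \ge 0$ are immediate from the construction ($z_A \in \{0,1\}$ and each $y_v$ is only ever decreased). Recalling that $\load_{(y,z)}(L) = \sum_{v\in V[L]\setminus\leaves} y_v + \sum_{A:A\cap L\neq\emptyset} z_A$, I would split the two sums and bound them together. The $z$-contribution counts the number of current components of $\ptn$ that intersect $L$; the $y$-contribution is a nonpositive quantity recording, for each node $v \in V[L]$, how many times $v$ was chosen as a $\star$-node (either $u=\lca_1(R\cup B)$ in $V_1$, or some $\nodeinTtwo{u}\in V_2$ in the refinement procedures).

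The heart of the argument is to show that these two effects cancel so as to keep the total at most $1$. First I would fix an arbitrary compatible set $L$ and track the load as an \emph{invariant} over the execution of the algorithm, rather than computing it statically: initially $\ptn=\{\leaves\}$, so exactly one component meets $L$ and all $y_v=0$, giving load $1$. I would then argue that no single step of the algorithm can increase $\load_{(y,z)}(L)$. There are two kinds of steps that touch the dual. A refinement step splits a component $A$ into $A\cap\lbelow{\nodeinTtwo{u}}$ and $A\setminus\lbelow{\nodeinTtwo{u}}$ and decrements $y_{\nodeinTtwo{u}}$; this raises the $z$-sum by at most $1$ (one new component), but if both new pieces intersect $L$, then $L$ must contain leaves on both sides of $\nodeinTtwo{u}$, forcing $\nodeinTtwo{u}\in V_2[L]\subseteq V[L]$, so the simultaneous decrement of $y_{\nodeinTtwo{u}}$ by $1$ exactly offsets the $+1$. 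The decrement of $y_u$ for $u=\lca_1(R\cup B)$ at the start of an iteration only decreases the load, so it is harmless; I expect to use it (together with the $V_1$ part of the constraint) to absorb the splits performed by {\sc Special-Split} or in cases where a $V_1$-node rather than a $V_2$-node is the relevant cut.

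The main obstacle, and the step deserving the most care, is the bookkeeping in {\sc Special-Split}: in its first branch it splits $A$ into $A\cap R$ and $A\setminus R$ \emph{without} decrementing any $y_{\nodeinTtwo{u}}$, and in its second branch it breaks $A$ into \emph{four} pieces while decrementing only a single $y_{\nodeinTtwo{u}}$. Here the naive ``one new component offset by one $y$-decrement'' accounting fails, and I would instead lean on the structural facts that $\nodeinTtwo{u}=\lca_2(A\cap(R\cup B))$ and that $L$ is compatible: compatibility of $L$ limits how many of the four pieces $A\setminus\lbelow{\nodeinTtwo{u}}$, $A'\cap R$, $A'\cap B$, $A'\cap W$ can simultaneously meet $L$ while also forcing the relevant internal nodes (including $\nodeinTtwo{u}$ and $u=\lca_1(R\cup B)$, whose $y$ was already decremented at the top of the iteration) to lie in $V[L]$. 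The delicate point is to confirm that the number of these four pieces meeting $L$, minus the number of $\star$-nodes lying in $V[L]$ that were charged for this split, is at most $1$; I expect the compatibility condition in the definition together with the fact that $\nodeinTtwo{u}$ is an ancestor in $T_2$ of all of $A\cap(R\cup B)$ to be precisely what makes this inequality tight. Once the per-step non-increase is established in all cases, feasibility follows immediately since the load starts at $1$ and never increases.
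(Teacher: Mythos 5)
Your overall strategy is the same as the paper's: treat $\load_{(y,z)}(L)\le 1$ as an invariant that starts at $1$, offset each two-way refinement in $V_2$ by the decrement of $y_{\nodeinTtwo{u}}$ (which is legitimate because $\nodeinTtwo{u}\in V_2[L]$ whenever both pieces meet $L$), and use the decrement of $y_{\lca_1(R\cup B)}$ at the top of the iteration as a one-unit credit to absorb the extra component created by {\sc Split}. Your handling of the four-way {\sc Special-Split} is also on the right track: the paper likewise uses that $\nodeinTtwo{u}=\lca_2(A\cap(R\cup B))$ forces every tricolored triple below $\nodeinTtwo{u}$ to be incompatible, so a compatible $L$ can meet at most two of the three colored pieces of $A\cap\lbelow{\nodeinTtwo{u}}$.

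There is, however, a genuine gap. The credit from $y_{\lca_1(R\cup B)}$ is worth exactly $1$ per iteration, but {\sc Split} iterates over \emph{all} multicolored components, and each component $A$ with $L\cap A$ multicolored contributes $+1$ to the load on $L$ with no accompanying $y$-decrement (a plain split of a bicolored $A$ into $A\cap R$ and $A\cap B$, for instance, touches no $\star$-node at all). Since the partition may contain two bicolored components (Observation~\ref{obs:coloring}), your accounting as stated could lose $2$ units against a single unit of credit. The paper closes this with a separate claim you have not identified: if $L$ is compatible and $A,A'$ do not overlap in $V_2$, then $L\cap A$ and $L\cap A'$ cannot both be multicolored (proved by exhibiting an incompatible triple inside $L$ from the $\lca_2$ relations). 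Without this cross-component statement the per-iteration balance does not close. Relatedly, your opening framing that ``no single step of the algorithm can increase $\load_{(y,z)}(L)$'' is literally false for the {\sc Split} steps; the argument must be amortized over the whole iteration, with the iteration-start decrement applying only to multicolored $L$ (for monochrome $L$ you instead need that {\sc Split} adds nothing, which again rests on the missing claim).
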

\begin{proof}
We prove the lemma by induction on the number of iterations.
Initially, $z_A=0$ for all $A\neq \leaves$ and $z_\leaves=1$ and hence every compatible set $L$ has a load of 1.

At the start of an iteration, we decrease $y_{\lca_1(R\cup B)}$ by $1$, thus decreasing the load by  1 on any \multicolored{} compatible set $L$. 
We show that the remainder of the iteration increases the load by at most 1 on a \multicolored{} compatible set and that it does not increase the load on any \monochrome{} compatible set.

First, observe that  {\sc Make-\RBcomp} and {\sc Make-Splittable} do not increase the load on any set: Separating $A$ into $A\cap \lbelow{\nodeinTtwo{u}}$ and $A\setminus \lbelow{\nodeinTtwo{u}}$ increases the load on sets $L$ that intersect both $A\cap \lbelow{\nodeinTtwo{u}}$ and $A\setminus \lbelow{\nodeinTtwo{u}}$, since $z_A$ gets decreased from 1 to 0, and $z_{A\cap\lbelow{\nodeinTtwo{u}}}$ and $z_{A\setminus\lbelow{\nodeinTtwo{u}}}$ increase from 0 to 1. However, in this case $\nodeinTtwo{u}\in V[L]$, and thus decreasing $y_{\nodeinTtwo{u}}$ by 1 ensures that the load on $L$ does not increase.

To analyze the effect of {\sc Split}, we use the following two claims.
\begin{claim}
In the procedure {\sc Split}$(\ptn, (R, B, W))$
the load on any compatible set $L$ is  increased by at most the number of components $A\in \ptn$ such that $L\cap A$ is \multicolored{}.
\end{claim}
\begin{proof_claim}
If the load on $L$ is increased because
{\sc Split} splits a \bicolored{} component $A$ into two \monochrome{} components, then $L$ must intersect both new components, so $L\cap A$ is \bicolored{} (and thus \multicolored{}).

Consider the case where the load on $L$ is increased because a \tricolored{} component $A$ is split into $A\cap R$, $A\cap B$ and $A\cap W$. This split happens when {\em all} \tricolored{} triples in $A$ are in\compatible{}. Therefore $L\cap A$ cannot be \tricolored{}, and the load is increased by 1. And again $L\cap A$ is \multicolored{}.

Suppose the load on $L$ is increased because {\sc Special-Split}($A, \ptn, (R, B, W)$) is executed for a component~$A$. We consider the two cases. Either $A$ is split into two components, one of which contains all red leaves in $A$. The load on a set $L$  thus increases by 1 if $L\cap A$ is \multicolored{} and $L\cap A\cap R\neq\emptyset$ and by 0 otherwise.
If $A$ is split into four components; we think of this as first splitting $A$ into $A\cap \lbelow{\nodeinTtwo{u}}$ and $A\setminus \lbelow{\nodeinTtwo{u}}$, and then splitting $A\cap \lbelow{\nodeinTtwo{u}}$ by intersecting with $R, B$ and $W$. Since $y_{\nodeinTtwo{u}}$ is decreased by 1, splitting $A$ into $A\cap \lbelow{\nodeinTtwo{u}}$ and $A\setminus \lbelow{\nodeinTtwo{u}}$ does not affect the load on any set $L$.
Splitting $A\cap \lbelow{\nodeinTtwo{u}}$ by intersecting with $R, B, W$ increases the load on $L$ by 1 if $L\cap A\cap \lbelow{\nodeinTtwo{u}}$ is \bicolored{} and by 2 if it is \tricolored{}; note however that the latter is impossible,  since $\nodeinTtwo{u}=\lca_2(A\cap (R\cup B))$,  so any \tricolored{} triple in $A\cap \lbelow{\nodeinTtwo{u}}$  must be in\compatible{}. So the load on $L$ again increases by at most 1 if $A\cap L$ is \multicolored{}.
\end{proof_claim}
\begin{claim}
If $L$ is compatible, and $A$ and $A'$ do not overlap in $V_2$, then $L\cap A$ and $L\cap A'$ cannot both be \multicolored{}.
\end{claim}
\begin{proof_claim}
Since $V_2[A]$ and $V_2[A']$ are disjoint, it must be the case that 
$\lca_2(x,y)\prec \lca_2(x,y,x')$ for all $x,y\in A$ and $x'\in A'$, or $\lca_2(x',y')\prec \lca_2(x,x',y')$  for all $x',y'\in A'$ and $x\in A$ (or both).
Hence, if $L\cap A$ and $L\cap A'$ are both \multicolored{} sets, then there exists $x,y,x',y'\in L$ where $x,y$ have different colors, $x',y'$ have different colors, $\lca_2(x,y)\prec \lca_2(x,y,x')$, and $\lca_2(x,y)\prec \lca_2(x,y,y')$.
We claim this implies $\{x,y,x',y'\}$ is incompatible.

Clearly one of $x,y$ has the same color as one of $x',y'$.
Suppose first that either red or blue is a shared color.
Without loss of generality, we may assume that $x$ and $x'$ are both red; $y$ is then either blue or white. $x$ and $x'$ being red implies $\lca_1(x,x')\prec \lca_1(x,y,x')$, which, since $\lca_2(x,y)\prec \lca_2(x,y,x')$, shows that $\{x,x',y\}$ is an in\compatible{} triple.

So suppose that white is the only shared color, and that $x$ and $x'$ are both white.
Then either $y$ is red and $y'$ is blue, or vice versa. This implies  $\lca_1(y,y')\prec \lca_1(x,y,y')$, and so, since $\lca_2(x,y)\prec \lca_2(x,y,y')$, this implies  $\{x,y,y'\}$ is an in\compatible{} triple.
\end{proof_claim}

It follows immediately from the two claims that {\sc Split} increases the load by at most 1 on any \multicolored{} compatible set and that it does not increase the load on any \monochrome{} set, which completes the proof of the lemma.
\end{proof}

\subsection{The primal and dual objective values}\label{subsec:obj}

Let $\ptn$, \pairslist{} be the partition and \pairslist{} at the end of an iteration, and let $D=\sum_{v\in V\setminus \leaves} y_v + |\ptn|-1$ be the objective value of the dual solution at this time. In this section, we show that our algorithm maintains the invariant that
\begin{equation}
2D\ge \left(|\ptn|-1-|\pairslist|\right).\label{eq:balance}
\end{equation}
Observe that the approximation guarantee immediately follows from this inequality, since the objective value of the algorithm's solution is $\ptn-1-|\pairslist|$ (where $\ptn$, \pairslist{} are the partition and \pairslist{} at the end of the final iteration), and by weak duality $D$ gives a lower bound on the optimal value of the MAF instance.

To prove that the algorithm maintains the invariant, we will show that a given iteration increases the left-hand side of (\ref{eq:balance}) by at least as much as the right-hand side. 
We let $\Delta D$ be the change in the dual objective during the iteration and $\Delta P$ be the increase in the number of components less the number of pairs added to \pairslist{} (either 0 or 1) during the current iteration.

Since at the start of the algorithm, the partition consists of exactly one component, and $y_v=0$ for all $v\in V\setminus \leaves$, (\ref{eq:balance}) holds before the first iteration. So to show (\ref{eq:balance}), it suffices to show that 
\begin{equation}
2\Delta D \ge \Delta P\label{eq:balance2}
\end{equation}
for any iteration.

\bigskip

In what follows, we use the following to refer to the state of the partition at various points in the current iteration: 
$\ptnstart$ at the start; $\ptnRBcomp$ after {\sc Make-\RBcomp}; $\ptnsplit$ after {\sc Make-Splittable}; and $\ptnfeas$ after {\sc Split}.

We begin by showing that the coloring $(R,B,W)$ and the partition $\ptnstart$  satisfies the conditions of one of three cases. 
\begin{lemma}\label{lem:coloring}
Given an infeasible partition $\ptnstart$ that does not overlap in $T_2$, let $u\in V_1$ be a  lowest \pcs, and let $u_\ell$ and $u_r$ be $u$'s children in $T_1$.
Let $R=\lbelow{u_r}, B=\lbelow{u_\ell}$, and $W=\leaves \setminus (R\cup B)$. Then $\ptnstart$  is \comp{$R$} and \comp{$B$} and satisfies exactly one of the following three additional properties:
\begin{description}
\item[{\bf Case 1.}] 
$\ptnstart$ has exactly one \multicolored{} component, say $A_0$, where $A_0$ is \tricolored{}, not \RBcomp{}, and there exists $\whiteleaf\in A_0\setminus \lbelow{\lca_2(A_0\cap (R\cup B))}$.
\item[{\bf Case 2.}] $\ptnstart$ has exactly two \multicolored{} components, say $A_B, A_R$, where $A_B\cap R=\emptyset$ and $A_R\cap B=\emptyset$.
\item[{\bf Case 3.}] $\ptnstart$ has exactly one \multicolored{} component, say $A_0$, where $A_0$ is \tricolored{}, \RBcomp{} and $A_0$ contains no \compatible{} \tricolored{} triple. 
\end{description}
\end{lemma}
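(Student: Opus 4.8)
The plan is to combine the coarse structural information of Observation~\ref{obs:coloring} with repeated use of the hypothesis that $u$ is a \emph{lowest} \pcs, i.e.\ that neither child $u_\ell$ nor $u_r$ satisfies any of conditions (a)--(c) of Definition~\ref{def:pcs}. The claim that $\ptnstart$ is \comp{$R$} and \comp{$B$} is immediate, since the failure of condition~(a) at $u_r$ (resp.\ $u_\ell$) is precisely that $\ptnstart$ is \comp{$\lbelow{u_r}$} (resp.\ \comp{$\lbelow{u_\ell}$}), and $\lbelow{u_r}=R$, $\lbelow{u_\ell}=B$. I would next rule out that $\ptnstart$ has no \multicolored{} component: if every component were \monochrome{}, a direct check shows none of (a), (b), (c) can hold at $u$, contradicting that $u$ is a \pcs. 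Together with Observation~\ref{obs:coloring} this leaves two regimes: (A) a unique \tricolored{} component $A_0$ and no \bicolored{} one, or (B) one or two \bicolored{} components and no \tricolored{} one.

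Before splitting into cases I would record two facts about a \multicolored{} component $A$, both consequences of all reds lying below $u_r$ and all blues below $u_\ell$ in $T_1$. First, $(A\cap R)\cup\{w\}$ is compatible iff $w$ ``attaches above the red clade'' in $T_2$, i.e.\ $\lca_2(A\cap R)\prec\lca_2((A\cap R)\cup\{w\})$, and symmetrically for $B$. Second, $A\cap(R\cup B)$ is compatible iff $\lca_2(A\cap R)$ and $\lca_2(A\cap B)$ are incomparable in $T_2$; and in that case, writing $\nodeinTtwo{u}=\lca_2(A\cap(R\cup B))$, the set $(A\cap(R\cup B))\cup\{w\}$ is compatible iff $w\not\preceq\nodeinTtwo{u}$ iff every \tricolored{} triple of $A$ through $w$ is \compatible{}.

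In regime (A) I split on whether $A_0$ is \comp{$R\cup B$}. If it is, then $\ptnstart$ is \comp{$R\cup B$} (every other component is \monochrome), so (a) fails; (b) fails because $A_0$ is the only component covering $u$; hence (c) holds, and its only possible witness is $A_0$ itself. This gives that $(A_0\cap(R\cup B))\cup\{w\}$ is incompatible for every white $w\in A_0$, which by the second fact says $A_0$ has no \compatible{} \tricolored{} triple --- exactly Case~3. If $A_0$ is not \comp{$R\cup B$}, I am in Case~1 provided I can exhibit the high white leaf $\whiteleaf\in A_0\setminus\lbelow{\lca_2(A_0\cap(R\cup B))}$. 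Now $\lca_2(A_0\cap R)$ and $\lca_2(A_0\cap B)$ are comparable; let $\nodeinTtwo{u}$ be the higher of the two, so $\nodeinTtwo{u}=\lca_2(A_0\cap(R\cup B))$, say $\nodeinTtwo{u}=\lca_2(A_0\cap R)$. Failure of (c) at $u_r$ yields a leaf $w^*\in A_0\setminus R$ with $(A_0\cap R)\cup\{w^*\}$ compatible; by the first fact $w^*\not\preceq\nodeinTtwo{u}$, and since every blue of $A_0$ lies at or below $\nodeinTtwo{u}$, the leaf $w^*$ must be white. This is the step I expect to be the main obstacle, being where ``lowest'' is exploited most delicately.

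In regime (B) I would first show that no \bicolored{} component meets both $R$ and $B$. If such a component is not \comp{$R\cup B$}, then by comparability of its red/blue LCAs and the first fact, condition~(c) already fires at $u_r$ or $u_\ell$. If it is \comp{$R\cup B$}, it cannot make $u$ a \pcs{} on its own, so by Observation~\ref{obs:coloring} it is accompanied by a second \bicolored{} component; but then both cover $u_r$, or both cover $u_\ell$, producing an overlap in $V_1[R]$ or $V_1[B]$, i.e.\ (b) at a child. Either way minimality of $u$ is contradicted. Hence every \bicolored{} component meets only $R$ and $W$, or only $B$ and $W$; a single such component again cannot make $u$ a \pcs{} (a (c)-witness for $u$ would be one for its child), so there are exactly two (Observation~\ref{obs:coloring} caps the number at two), and by the same overlap argument they must be of different types --- one meeting $R,W$ and one meeting $B,W$, overlapping at $u$, which is Case~2. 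Finally the three cases are mutually exclusive, since Cases~1 and~3 have a unique \tricolored{} component distinguished by whether $A_0\cap(R\cup B)$ is compatible, whereas Case~2 has none; so exactly one holds.
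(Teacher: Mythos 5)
Your proof is correct, and the substantive ingredients are the same as the paper's: both arguments live off the facts that $u$'s children are not \pcsplural{} (so no overlap in $V_1[R]\cup V_1[B]$ and no condition-(c) witness at a child), the characterization of compatibility of $(A\cap R)\cup\{w\}$ and of $A\cap(R\cup B)$ via comparability of $\lca_2$'s in $T_2$, and Observation~\ref{obs:coloring}. The difference is the decomposition: you case on the \emph{structure} of the multicolored components (no multicolored / one tricolored, split by whether it is \RBcomp{} / only bicolored ones), whereas the paper cases on \emph{which condition} (a), (b) or (c) of Definition~\ref{def:pcs} holds at $u$, handling (b) first, then (a), then (c)$\wedge\neg$(a). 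The paper's organization buys the explicit correspondence (a)$\leftrightarrow$Case~1, (b)$\leftrightarrow$Case~2, (c)$\leftrightarrow$Case~3 that is asserted in the remark following the lemma statement; your organization instead makes exhaustiveness and mutual exclusivity of the three cases immediate from the structural dichotomy, at the cost of having to separately rule out the ``no multicolored component'' and ``single bicolored component'' configurations (which you do correctly, by checking that none of (a)--(c) can then fire at $u$). The one step where the two proofs produce the same object by mirror-image reasoning is the white leaf $\whiteleaf$ in Case~1: you obtain it directly from the \emph{failure} of (c) at $u_r$ (or $u_\ell$), while the paper assumes $A_0\subseteq\lbelow{\lca_2(A_0\cap(R\cup B))}$ and derives that (c) would then hold at $\lca_1(A_0\cap R)$, a contradiction; these are contrapositives of each other and equally valid.
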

We will see in the proof below that Case 1, 2 and 3 correspond to a lowest \pcs{} satisfying (a), (b) and (c) respectively in Definition~\ref{def:pcs}. We refer the reader to Figure~\ref{fig:example} for an illustration of the three cases.
\begin{proof}
    Observe that if $\ptnstart$ is infeasible, then the root of $T_1$, i.e., $\lca_1(\leaves)$ is a \pcs, and that any $v\in \leaves$ is not a \pcs. Hence, $u$ is well-defined and $R$ and $B$ are non-empty.
Note that $\ptnstart$ is \comp{$R$} and \comp{$B$}, since $u$'s children are not \pcs. 

We will show that if $u$ satisfies condition (a) in the definition of \pcs, then the conditions of  Case 1 are satisfied, if (b) holds, the conditions of Case 2 are satisfied, and if (c) holds, but not (a), then the conditions of Case 3 are satisfied. %

We start with (b). Observe that, because $\ptnstart$ is \comp{$R$} and \comp{$B$}, there must be at least two \multicolored{} components if (b) holds. 
If there are two \multicolored{} components, both containing, say, red leaves, then they overlap in $u_r=\lca_1(R)$, which implies $u_r$  is a \pcs, contradicting the choice of $u$. Similarly, there is at most one \multicolored{} component containing blue leaves. Hence, the conditions of Case 2 are satisfied.

If (b) does not hold, then there is at most one \multicolored{} component; the conditions in (a) and (c) both imply there must be at least one (and thus there is exactly one), which we will call $A_0$. We let $R_0=R\cap A, B_0=B\cap A$ and $\nodeinTtwo{u}=\lca_2(R_0\cup B_0)$ (where we stress that $\nodeinTtwo{u}$ is a node in $V_2$, whereas $u$ is a node in $V_1$).

If (a) holds, then $A_0$ is not \RBcomp{}, and thus $R_0\neq \emptyset, B_0\neq\emptyset$. %
Assume, in order to derive a contradiction, that $A_0\subseteq \lbelow{\nodeinTtwo{u}}$.
Observe that, because $A_0$ is not \RBcomp{}, $\lca_2(R_0)=\nodeinTtwo{u}$ or $\lca_2(B_0) = \nodeinTtwo{u}$. Suppose the former without loss of generality. But then $\lca_1(R_0)$ is a \pcs{} satisfying (c) which is a descendant of $u$ thus contradicting the choice of $u$: $A_0\setminus \lbelow{\lca_1(R_0)}= A_0\setminus R_0\supseteq B_0 \neq\emptyset$, and $R_0\cup \{w\}$ is incompatible for any $w\in A_0\setminus R_0$. 
 
Suppose now (c) holds, but not (a), i.e., $\ptnstart$ is \RBcomp{}. Thus $A_0$ must be \RBcomp{} and $A_0\setminus (R_0\cup B_0)\neq\emptyset$. Assume without loss of generality that $R_0\neq\emptyset$, and note that $\lca_1(R_0)$ is a descendant of $u$, and that, if $B_0=\emptyset$, then (c) holds for $\lca_1(R_0)$, contradicting the choice of $u$. Hence, $A_0$ is \tricolored{}. Since  $A_0$ is \RBcomp{}, $\lca_2(R_0)$ and $\lca_2(B_0)$ must be descendants of the distinct children of $\nodeinTtwo{u}$, or the children itself. Furthermore, the fact that $\ptnstart$ is not \comp{$R\cup B \cup \{w\}$} for \emph{any} $w\in A_0\setminus (R\cup B)$ implies that all white leaves are descendants of $\nodeinTtwo{u}$ as well, and thus any \tricolored{} triple of leaves in $A_0$ is in\compatible{}. Thus, if (c) holds but not (a), the conditions for Case 3 are satisfied.
\end{proof}

Recall that the coloring is defined only at the start of the iteration. The lemma ensures that the partitions during the iteration always have either one (in Case 1 and 3) or two (in Case 2) top components.

For Cases 2 and 3, the analysis is quite simple.

\begin{proposition}\label{prop:23}
Let the initial partition $\ptnstart$ and coloring $(R, B, W)$ satisfy the conditions of Case 2 or 3 in Lemma~\ref{lem:coloring}.
Then
$2\Delta D \ge \Delta P$.
\end{proposition}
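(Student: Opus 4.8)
The plan is to track the two contributions to $\Delta D = \Delta(\sum_v y_v) + \Delta|\ptn|$ separately, using $\Delta|\ptn|=\Delta(\sum_A z_A)$. The decisive bookkeeping observation is that every split carried out inside {\sc Make-\RBcomp} or {\sc Make-Splittable} simultaneously creates exactly one new component (contributing $+1$ to $\Delta|\ptn|$) and decrements one $y_{\nodeinTtwo{u}}$ by $1$ (contributing $-1$ to $\Delta(\sum_v y_v)$), so each such split is dual-neutral. The only remaining $y$-decrement is the single decrement of $y_{\lca_1(R\cup B)}$ at the start of the iteration, and --- the point on which everything hinges --- in Cases 2 and 3 the procedure {\sc Split} performs no $\star$-decrement, because {\sc Special-Split} is never invoked. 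Letting $s_3$ denote the number of new components created by {\sc Split}, this gives the clean identity $\Delta D = s_3 - 1$.

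Next I would determine $s_3$ from the structure of the partition $\ptnsplit$ that enters {\sc Split}, using Lemma~\ref{lem:coloring}, Lemma~\ref{lem:splittablewelldefined} and part~\ref{itm:trinum} of Lemma~\ref{lem:properties}. In both cases $\ptnstart$ is already \RBcomp{} (in Case 2 each \multicolored{} component avoids one of $R,B$; in Case 3 the unique \multicolored{} component is \RBcomp{} by hypothesis), so {\sc Make-\RBcomp} is vacuous and the number $s_1$ of its splits is $0$. By Lemma~\ref{lem:splittablewelldefined}, each of the $s_2$ splits in {\sc Make-Splittable} replaces a \bicolored{} component by two \bicolored{} ones, or a \tricolored{} one by a \bicolored{} and a \tricolored{} one; either way it adds exactly one \bicolored{} component and preserves the number of \tricolored{} ones. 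Hence $\ptnsplit$ has $2+s_2$ \bicolored{} components and no \tricolored{} one in Case 2, and $s_2$ \bicolored{} components together with a single \tricolored{} one in Case 3.

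I would then read off the effect of {\sc Split}: each \bicolored{} component is cut into two \monochrome{} pieces ($+1$), and the lone \tricolored{} component in Case 3 --- being a subset of $A_0$ and hence, by the Case~3 hypothesis, free of any \compatible{} \tricolored{} triple --- is cut into its three color classes ($+2$) rather than triggering {\sc Special-Split}. In both cases this gives $s_3 = s_2 + 2$, so $\Delta D = s_3 - 1 = s_2 + 1$ while $\Delta|\ptn| = s_1 + s_2 + s_3 = 2s_2 + 2$. Therefore $2\Delta D = 2s_2 + 2 = \Delta|\ptn|$. Since $\Delta P = \Delta|\ptn| - p$ where $p\in\{0,1\}$ is the number of pairs {\sc Find-Merge-Pair} adds, and $p\ge 0$, I conclude $2\Delta D = \Delta|\ptn| \ge \Delta P$.

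The only real obstacle is justifying that {\sc Special-Split} is never called in these cases, which is exactly what makes $\Delta D = s_3-1$ hold on the nose; a single $\star$-decrement inside {\sc Split} would spoil the identity. In Case 2 this is immediate because no \tricolored{} component is ever produced (splitting \bicolored{} components yields \bicolored{} components). In Case 3 I would argue that the unique \tricolored{} component surviving {\sc Make-Splittable} is a subset of the original $A_0$, and since a subset of a set with no \compatible{} \tricolored{} triple inherits that property, the test guarding {\sc Special-Split} fails and the ordinary three-way cut is applied. Everything else is routine counting.
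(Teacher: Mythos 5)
Your proposal is correct and follows essentially the same route as the paper's proof: both arguments show {\sc Make-\RBcomp} is vacuous, that each {\sc Make-Splittable} split is dual-neutral and adds one \bicolored{} component while preserving the \tricolored{} count, that {\sc Special-Split} is never triggered (your direct ``subset of $A_0$ inherits the no-compatible-\tricolored{}-triple property'' argument is a clean substitute for the paper's appeal to property~\ref{itm:trinotop} of Lemma~\ref{lem:properties}), and hence that $|\ptnfeas|-|\ptnsplit| = |\ptnsplit|-|\ptnstart|+2$ and $\Delta D = |\ptnfeas|-|\ptnsplit|-1$, giving $2\Delta D = \Delta|\ptn| \ge \Delta P$. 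No gaps.
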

\begin{proof}
We first make two observations that apply in Case 2 and 3: (i) $\ptnstart$ is already \RBcomp{}, so $\ptnRBcomp=\ptnstart$, and (ii) {\sc Split}($\ptnsplit,(R,B,W)$) will not perform any {\sc Special-Split}, by property~\ref{itm:trinotop} in Lemma~\ref{lem:properties} and  because the top component has no \tricolored{} triple that is \compatible{} (since we are in Case 2 or 3).

These two observations imply that 
\begin{equation}|\ptnfeas| - |\ptnsplit| = |\ptnsplit|-|\ptnstart|+2.\label{eq:primal}\end{equation}
To see this, note that, since no {\sc Special-Split} is performed, $|\ptnfeas|-|\ptnsplit|$ is equal to the number of \bicolored{} components in $\ptnsplit$ plus twice the number of \tricolored{} components in $\ptnsplit$.
Since $\ptnRBcomp=\ptnstart$, and using  properties~\ref{itm:multi1} and~\ref{itm:trinum} of Lemma~\ref{lem:properties}, $\ptnsplit$ has $|\ptnsplit|-|\ptnstart|$ more \multicolored{} components than $\ptnstart$, and the same number of \tricolored{} components as $\ptnstart$. So in Case 2, $\ptnsplit$ has $|\ptnsplit|-|\ptnstart| +2$ \bicolored{} components and zero \tricolored{} components, and in Case 3, $\ptnsplit$ has $|\ptnsplit|-|\ptnstart|$ \bicolored{} components plus one \tricolored{} component, thus indeed (\ref{eq:primal}) holds. 

In addition,  we note that
\begin{equation}
\Delta D =|\ptnfeas|-|\ptnsplit|-1.\label{eq:dual}
\end{equation}
To see this, note that at the start of the iteration, the dual objective value is reduced by $1$ when $y_{u}$ is decreased by $1$ for $u=\lca_1(R\cup B)$.
{\sc Make-\splittable} does not change the dual objective value, because, even though $|\ptn|$ increases by 1 every time the number of components increases by 1, $\sum_v y_v$ decreases by 1 as well. 
Finally, since {\sc Split} will not perform any {\sc Special-Split}, the increase in the dual objective value due to {\sc Split} is equal to the increase in the number of components due to {\sc Split}, which is $|\ptnfeas|-|\ptnsplit|$.

Note that the size of \pairslist{} may increase but will never decrease, 
and thus 
\begin{align*}
\Delta P &\le |\ptnfeas|-|\ptnsplit|+|\ptnsplit|-|\ptnstart|&\\
&= 2\left(|\ptnfeas|-|\ptnsplit|\right) - 2 &&\mbox{by (\ref{eq:primal})}\\
&= 2\Delta D &&\mbox{by (\ref{eq:dual})}.
\end{align*}
\end{proof}

We now prove a similar proposition for Case 1, the proof of which is more involved.

\begin{proposition}\label{prop:1}
Suppose the initial partition $\ptnstart$ and coloring $(R, B, W)$ satisfy the conditions of Case 1 in Lemma~\ref{lem:coloring}.
Then
$2\Delta D \ge \Delta P$.
\end{proposition}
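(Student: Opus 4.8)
The plan is to track, exactly as in the proof of Proposition~\ref{prop:23}, how each procedure in the iteration changes the dual objective and the component count, and then to show that the extra splits peculiar to Case~1 are paid for either by \tricolored{} components already present in $\ptnsplit$ or by a pair recorded on \pairslist{} by {\sc Find-Merge-Pair}. Recall it suffices to establish $2\Delta D\ge\Delta P$, i.e.\ (\ref{eq:balance2}).

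First I would do the bookkeeping. Write $M:=|\ptnsplit|-|\ptnstart|$ for the number of refinements performed by {\sc Make-\RBcomp} and {\sc Make-Splittable}, and $\sigma:=|\ptnfeas|-|\ptnsplit|$ for the number of components created by {\sc Split}. As in Proposition~\ref{prop:23}, the initial decrease of $y_{\lca_1(R\cup B)}$ contributes $-1$ to $\Delta D$, while each of the $M$ refinements contributes $0$ (one new component, one unit of $y$ removed at the chosen $\nodeinTtwo{u}$). Inside {\sc Split}, every operation except the four-way {\sc Special-Split} leaves $\sum_v y_v$ unchanged, and the four-way {\sc Special-Split} removes exactly one unit of $y$ at $\lca_2(A\cap(R\cup B))$. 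Hence, letting $q\in\{0,1\}$ indicate whether the four-way {\sc Special-Split} is executed, $\Delta D=-1+\sigma-q$. On the primal side the merges are deferred to {\sc Merge-Components}, so the component count grows by exactly $M+\sigma$ during the iteration while $|\pairslist|$ grows by $p\in\{0,1\}$; thus $\Delta P=M+\sigma-p$. Substituting, the target $2\Delta D\ge\Delta P$ is equivalent to $\sigma\ge M+2q-p+2$.

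Next I would evaluate $\sigma$ from the structure of $\ptnsplit$. By Lemma~\ref{lem:coloring} (Case~1) and Observation~\ref{obs:coloring} there is a single \multicolored{} component $A_0$ in $\ptnstart$, which is \tricolored{} and not \RBcomp{}; since $A_0$ is not \RBcomp{} it is genuinely subdivided by {\sc Make-\RBcomp}, so $A_0\notin\ptnsplit$. By points~\ref{itm:multi0} and~\ref{itm:multi1} of Lemma~\ref{lem:properties} every new component of $\ptnsplit$ is \multicolored{}, so if $n_b,n_t$ denote the numbers of \bicolored{} and \tricolored{} components of $\ptnsplit$ then $n_b+n_t=M+1$. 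Because $\whiteleaf\in A_0$ is not a descendant of $\lca_2(A_0\cap(R\cup B))$, point~\ref{itm:wtop} of Lemma~\ref{lem:properties} places $\whiteleaf$ in a top component $A^*$; being new, $A^*$ is \multicolored{} and contains the white leaf $\whiteleaf$, hence $A^*$ is \bicolored{} or \tricolored{}. Counting the yield of {\sc Split} and using point~\ref{itm:trinotop} of Lemma~\ref{lem:properties} (only $A^*$ can admit a {\sc Special-Split}), I get $\sigma=M+1+n_t$ when $A^*$ is not {\sc Special-Split}, $\sigma=M+n_t$ after a two-way {\sc Special-Split} ($q=0$), and $\sigma=M+2+n_t$ after a four-way {\sc Special-Split} ($q=1$). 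Plugging these into $\sigma\ge M+2q-p+2$, the claim reduces to $n_t+p\ge1$ when $A^*$ is \bicolored{}, and to $n_t+p\ge2$ when a {\sc Special-Split} occurs. Since a {\sc Special-Split} forces $A^*$ to be \tricolored{} (so $n_t\ge1$), and an ordinary \tricolored{} top already gives $n_t\ge1$, the only outstanding cases are a \bicolored{} top with $n_t=0$ and a {\sc Special-Split} with $n_t=1$; in both I must produce a merge pair, i.e.\ show $p=1$.

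It remains to exhibit the merge pair, and this is the heart of the argument. Set $R_0:=A_0\cap R$ and $B_0:=A_0\cap B$. Since $A_0$ is not \RBcomp{}, $R_0\cup B_0$ contains an in\compatible{} triple; as the reds lie under $u_r$ and the blues under $u_\ell$ in $T_1$, such a triple consists of two leaves of one colour (say two reds $r,r'$) and one of the other, arranged so that in $T_2$ the third leaf separates $r$ from $r'$. Consequently {\sc Make-\RBcomp} separates $r$ from $r'$, and after {\sc Split} they lie in distinct \monochrome{} red components. Merging those two red components keeps the partition \comp{$R\cup B\cup\{w\}$} for every $w$ (their union is a subset of $R$, hence its intersection with $R\cup B\cup\{w\}$ is \compatible{}), so the only thing to check is that the merge creates no overlap in $V_2\cup V_1[R\cup B]$; by Lemma~\ref{lem:feas} no other component covers a node of $V_1[R]$ once $\ptnfeas$ is \feas{$R\cup B$}, so the real issue is overlap in $V_2$. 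Here I would invoke the hypothesis $\whiteleaf\notin\lbelow{\lca_2(R_0\cup B_0)}$: it forces the top part of $A_0$ that carries $\whiteleaf$ to lie above $\lca_2(R_0\cup B_0)$ in $T_2$ and be split off, so that the $T_2$-path joining $r$ and $r'$ is not crossed by the component of $\whiteleaf$ nor by any other piece of $A_0$. The main obstacle is exactly this last verification: showing rigorously that no other \multicolored{} piece of $A_0$ (these are the only components that could obstruct the merge) meets the $r$--$r'$ path in $T_2$. I expect to handle it by choosing $r,r'$ as an extremal same-colour pair relative to $\lca_2(R_0\cup B_0)$ and invoking the \splittable{}/\RBcomp{} structure guaranteed by Lemmas~\ref{lem:properties} and~\ref{lem:feas}. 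Granting this, $(r,r')$ is a valid pair, $p=1$, the outstanding cases close, and $\sigma\ge M+2q-p+2$ holds throughout Case~1, which is the desired $2\Delta D\ge\Delta P$.
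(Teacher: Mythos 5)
Your bookkeeping is sound and matches the paper's: your $\Delta D=-1+\sigma-q$ and $\Delta P=M+\sigma-p$ are exactly the paper's equations (\ref{eq:dual1}) and the surrounding accounting, and your reduction of the claim to ``produce a pair for \pairslist{} when there is no non-top \tricolored{} component'' is precisely the paper's reduction to the case $t=0$. The gap is in the one step you yourself flag as the main obstacle, and it is a real gap, not a routine verification. Your plan is to fix an in\compatible{} triple in $R_0\cup B_0$, obtain two same-colour leaves $r,r'$ that end up in distinct \monochrome{} red components, and merge those two components. For that merge to be \feas{$R\cup B$} the two components must be able to \emph{reach} a common node of $T_2$ in the sense of Lemma~\ref{lem:possiblemerges}, i.e.\ every intermediate node of the $T_2$-path between them must be uncovered by \emph{every} component of $\ptnfinal$. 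Your parenthetical claim that only \multicolored{} pieces of $A_0$ could obstruct this is false: after {\sc Split} essentially all components are \monochrome{}, and a \monochrome{} \emph{white} fragment of $A_0$ (or another red fragment) sitting on that path is a genuine obstruction --- merging across it creates an overlap in $V_2$. This situation can actually occur, so no choice of an ``extremal'' pair $r,r'$ will make the obstruction disappear; the correct response is to abandon that particular pair and exhibit a \emph{different} one.

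This is exactly what the paper's proof does, and it is where most of the work lies. The paper takes the last node $\nodeinTtwo{u}$ cut by {\sc Make-\RBcomp} on the top component, notes that (when $t=0$) the piece cut off is red--blue \bicolored{} so that after {\sc Split} its red and blue halves both reach the uncovered node $\nodeinTtwo{u}$, and then runs a case analysis up the path from $\nodeinTtwo{u}$ to $\lca_2(A_0)$: if nothing red or blue covers that path, a red--blue merge exists (case (c) of Lemma~\ref{lem:possiblemerges}); if some red component $A_R$ covers a node $\nodeinTtwo{v}$ on it but a white fragment $A_W$ blocks the way, the paper shows $A_W$ must have come from a red--white \bicolored{} non-top component $A''$ of $\ptnsplit$ whose red part reaches $\nodeinTtwo{v}$, yielding a two-red merge with $A_R$ (case (b)). None of this structure is present in your sketch, and the hypothesis $\whiteleaf\notin\lbelow{\lca_2(R_0\cup B_0)}$ alone does not rule out white fragments below $\lca_2(R_0\cup B_0)$ lying on the relevant path. (A smaller inaccuracy: Lemma~\ref{lem:feas} does not by itself guarantee the merged component avoids overlap in $V_1[R]$; you need the argument of Lemma~\ref{lem:ugh} that such an overlap would force one in $V_2$.) So the proposal is incomplete precisely at the step that constitutes the heart of the proposition.
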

\begin{proof}
In Case 1, we start with one \tricolored{} component $A_0$, which is not \RBcomp{}. Let $\whiteleaf$ be a white leaf in $A_0$ that is not a descendant of $\lca_2(A_0\cap (R\cup B))$, which exists by the definition of  Case 1.
By property~\ref{itm:wtop} in Lemma~\ref{lem:properties}, $\whiteleaf$ is also contained in the top component of $\ptnfeas$, and by property~\ref{itm:multi1} in Lemma~\ref{lem:properties}, the top component is \multicolored{}. Therefore, the top component of $\ptnsplit$ is either \bicolored{}, or it is \tricolored{} and a  {\sc Special-Split} is performed on the top component.

Let $\chi$ be an indicator variable that is 1 if the top component in $\ptnsplit$
is \tricolored{} and has a \tricolored{} triple that is in\compatible{}. 
Let $t$ be the number of \tricolored{} components in $\ptnsplit$ that are not top components.
We claim that
\begin{equation}
|\ptnfeas|-|\ptnsplit| = |\ptnsplit|-|\ptnstart|+1+2\chi + t.
\label{eq:primal1}
\end{equation}
Indeed, if $\chi =0$, the top component is divided into two components by {\sc Split}, and if $\chi=1$ it is subdivided into four components. Thus splitting the top component increases the number of components by $1+2\chi$. 
By property~\ref{itm:multi1} of Lemma~\ref{lem:properties}, $\ptnsplit$ has $|\ptnsplit|-|\ptnstart|$ \multicolored{} components that are not top components, and by property~\ref{itm:trinotop}, each of the \tricolored{} components that are not top components do not require a {\sc Special-Split} and are thus subdivided into three components by {\sc Split}. Hence, splitting the components that are not top components increases the number of components by $|\ptnsplit|-|\ptnstart|+t$.

Next, we analyze the increase in the dual objective. We claim that
\begin{equation}
\Delta D = |\ptnfeas|-|\ptnsplit| - 1 -\chi.
\label{eq:dual1}
\end{equation}To see this, note that the dual objective is decreased by 1 when we decrease $y_{\lca_1(R\cup B)}$ by $1$ at the start of the iteration. The dual objective is not affected by {\sc Make-\RBcomp} and {\sc Make-Splittable}. Finally, if $\chi = 0$, the increase in the dual objective due to {\sc Split} is equal to the increase in the number of components $|\ptnfeas|-|\ptnsplit|$. If $\chi=1$, the same holds, but {\sc Special-Split} on the top component also decreases $y_{\nodeinTtwo{u}_0}$ by 1.

So we get that \begin{align*}
|\ptnfeas|-|\ptnstart| &= |\ptnfeas|-|\ptnsplit| + |\ptnsplit|-|\ptnstart|\\
&= 2\left(|\ptnfeas|-|\ptnsplit|\right) -1-2\chi- t  &&\mbox{ by (\ref{eq:primal1})}\\
&= 2\Delta D + 1 - t  &&\mbox{ by (\ref{eq:dual1}).}
\end{align*}

Recall that $\Delta P$ is equal to $|\ptnfeas|-|\ptnstart|$ minus the number of pairs added to \pairslist{} in the current iteration. 
Hence, to conclude that $\Delta P \le 2\Delta D$, we need to show that if $t=0$, then a pair is added to $\pairslist{}$ by {\sc Find-Merge-Pair}.

We will say that a component $A$ is able to \emph{reach} $\nodeinTtwo{u}$ if $\nodeinTtwo{u}\in V_2[A]$ or if $\lca_2(A)\prec \nodeinTtwo{u}$ and all intermediate nodes on the path from $\lca_2(A)$ to $\nodeinTtwo{u}$ are not covered by any component in $\ptnfeas$.
The following lemma (which is actually valid in general, and not only for Case 1) enumerates precisely the situations when a merge is possible.
\begin{lemma}\label{lem:possiblemerges}
    Let $A_0 \in \ptnstart$, and let $\ptncur$ denote the set of components in $\ptnfeas$ that are contained in $A_0$.
    Then there exist a pair of elements in $A_0$ that can be added to \pairslist{} if and only if at least one of the following is true:
\begin{enumerate}[(a)]
\item
    $\ptncur$ contains a \bicolored{} component.
\item
    There is a node $\nodeinTtwo{u}\in V_2$ that can be reached by two red components or two blue components in $\ptncur$.
\item
    There is a node $\nodeinTtwo{u}\in V_2$ that can be reached by a red and a blue component in $\ptncur$, but is not covered by these components. 
    Furthermore, the node $\nodeinTtwo{u}$ must satisfy that the nodes on the path from $\nodeinTtwo{u}$ to $\lca_2(A_0)$ are not covered by any red or blue component in $\ptncur$.  
\end{enumerate}
\end{lemma}
\begin{proof}
    By the definitions of {\sc Split} and {\sc Special-Split}, a \multicolored{} component in $\ptnfeas$ must be a top component with blue and white leaves created by applying {\sc Special-Split} to a \tricolored{} component.
    By Lemma~\ref{lem:feas}, there is at most one such \multicolored{} component; if it exists, call it $A^*$.
\begin{enumerate}[(a)]
    \item If $A^*$ exists, then let $A\cup A^*$ be the \tricolored{} component from which {\sc Special-Split} formed a red component $A$ and the \bicolored{} component $A^*$. Then we can merge $A$ and $A^*$ to obtain a new partition that is \feas{$R\cup B$}: it is clear that undoing the {\sc Special-Split} yields a partition that does not overlap any other component of $\ptn$ in $V_2$. The new component $A\cup A^*$ is \comp{$R\cup B\cup \{w\}$} since $A\cup A^*$ is \RBcomp{} and every \tricolored{} triple in $A\cup A^*$ is compatible.
Since the new unique \bicolored{} component is a top component, by Lemma~\ref{lem:ugh},  the new partition also does not overlap in $V_1[R\cup B]$.

	\item If $\ptncur$ does not contain a \bicolored{} component $A^*$, suppose $A, A' \in \ptncur$ are distinct red components in $\ptncur$ so that $A$ and $A'$ can both reach the same node $\nodeinTtwo{u}$ in $V_2$. Then merging $A$ and $A'$ gives a new partition that does not overlap in $V_2$, and which has no \multicolored{} components.  Since $A_0 \cap R$ is compatible, so is $A \cup A'$.  By Lemma~\ref{lem:ugh}  the new partition does not overlap in $V_1[R\cup B]$. Hence, merging $A$ and $A'$ gives a new partition that is \feas{$R\cup B$}.
    
        The same applies if $A$ and $A'$ are both blue components in $\ptncur$.

    \item If $\ptncur$ does not contain a \bicolored{} component $A^*$, suppose there exist $A, A' \in \ptncur$ with $A$ red and $A'$ blue such that 			\begin{inparaenum}[(i)] 
    \item there exists $\nodeinTtwo{u}\in V_2 \setminus (V_2[A] \cup V_2[A'])$ that can be reached by both $A$ and $A'$; and
    \item the nodes on the path from $\nodeinTtwo{u}$ to $\lca_2(A_0)$ are not in $V_2[A'']$ for any $A''\not \subseteq W$.
        \end{inparaenum}      
	Then merging $A$ and $A'$ gives a new partition that does not overlap in $V_2$ and the new component $A\cup A'$, is \RBcomp{} by (i). Thus the new partition is \comp{$R\cup B\cup \{w\}$} for any $w\in \leaves$. 
        The unique  \bicolored{} component $A\cup A'$ in this new partition satisfies that any node on the path from $\nodeinTtwo{u}=\lca_2(A\cup A')$ to $\lca_2(A_0)$ is not covered by a component that is not white. The components of the partition that overlap a node on the path from $\lca_2(A_0)$ to $\lca_2(R\cup B)$ were not changed in the current iteration. Therefore, by Lemma~\ref{lem:ugh}, the new partition does not overlap in $V_1[R\cup B]$. 
     Hence, merging $A$ and $A'$ gives a new partition that is \feas{$R\cup B$}.    
\end{enumerate}
We note that the above three cases encompass all possible merge opportunities within $\ptncur$.
If two  components cannot reach the same node $\nodeinTtwo{u}\in V_2$, then merging them gives a partition that overlaps in $V_2$. If a red and blue component $A$ and $A'$ can reach the same node $\nodeinTtwo{u}\in V_2$ and this node is covered by either $A$ or $A'$, then $A\cup A'$ is not \RBcomp. 
And if a red and blue component $A$ and $A'$ can reach a node $\nodeinTtwo{u}\in V_2$ that is not in $V_2[A] \cup V_2[A']$, but some node on the path from $\nodeinTtwo{u}$ to $\lca_2(A_0)$ is covered by a component $A''\in \ptncur$ that is red or blue, then $A\cup A'$ will overlap $A''$ in $V_1[R]$ or $V_1[B]$. To see this, assume $A''$ is red (the blue case is analogous) and let $\nodeinTtwo{v}$ be the node in $V_2[A'']$ closest to $\nodeinTtwo{u}$ on the path from $\nodeinTtwo{u}$ to $\lca_2(A_0)$. Then $\nodeinTtwo{v}=\lca_2(A\cup (A''\cap\lbelow{\nodeinTtwo{v}}) \prec \lca_2(A'')$, and since $A\cup A''$ are compatible in $R$, we should also have $\lca_1(A\cup (A''\cap\lbelow{\nodeinTtwo{v}}) \prec \lca_1(A'')$. Thus $A''$ and $A$ overlap on  a node on the path from $\lca_1(A)$ to $\lca_1(R\cup B)$.
\end{proof}

We are now ready to complete the proof of Proposition~\ref{prop:1}, by showing that if $t=0$, then at least one of (a), (b) and (c) in Lemma~\ref{lem:possiblemerges} holds for $\ptnfeas$.

Suppose (a) does not hold, i.e., $\ptnfeas$ has only \monochrome{} components. 
Let $\nodeinTtwo{u}$ be the last node chosen in {\sc Make-\RBcomp} to subdivide the top component. The existence of $\nodeinTtwo{u}$ follows since $\ptnRBcomp \neq \ptnstart$ as we observed above.
Let $A\subset \lbelow{\nodeinTtwo{u}}$ be the non-top component added to the partition at this point. Since $t=0$, we have that $A$ is \bicolored{}  by property~\ref{itm:multi1} in Lemma~\ref{lem:properties}. {\sc Split} will  split  $A$ into $A\cap R$ and $A\cap B$, and by property~\ref{itm:binotop} in Lemma~\ref{lem:properties}, node $\nodeinTtwo{u}$ itself is not covered by any component of $\ptnfeas$ and it can be reached by red component $A\cap R$ and blue component $A\cap B$. So if there is no node on the path in $T_2$ from $\nodeinTtwo{u}$ to $\lca_2(A_0)$  that is covered by a red or blue component in $\ptnfeas$, then $A\cap R$ and $A\cap B$ give a pair that can be added to \pairslist{} according to (c). 

So suppose that  there is a node on the path from $\nodeinTtwo{u}$ to $\lca_2(A_0)$ that is covered by a (say) red component $A_R$ in $\ptnfeas$; without loss of generality $\nodeinTtwo{v}$ is the node closest to $\nodeinTtwo{u}$ such that $\nodeinTtwo{v}\in V_2[A_R]$ for some red component $A_R$. If $A\cap R$ can reach $\nodeinTtwo{v}$ then $\nodeinTtwo{v}$ can  be reached by two red components, so we can add a pair  to \pairslist{} according to (b).

Otherwise, let $\nodeinTtwo{w}$ be the node closest to $\nodeinTtwo{v}$ on the path from $\nodeinTtwo{u}$ to $\nodeinTtwo{v}$ that is covered by a white component $A_W$. By definition of $\nodeinTtwo{w}$, all nodes on the path from $\nodeinTtwo{w}$ to $\nodeinTtwo{v}$ are not covered by any component in $\ptnfeas$. 
Observe that 
\begin{itemize}
\item
$A_W$ and $A_R$ must have been part of the top component of $\ptnRBcomp$, by definition of $\nodeinTtwo{u}$. 
\item
$A_W$ and $A_R$ cannot have been part of the top component of $\ptnsplit$: as we observed above, the top component of $\ptnsplit$ contains a white leaf $\whiteleaf$ that is not a descendant of $\lca_2(A_0\cap (R\cup B))$, so $A_W\cup \{\whiteleaf\}$ overlaps $\nodeinTtwo{v}$, and since $A_R$ also overlaps $\nodeinTtwo{v}$, the top component of $\ptnsplit$ would not be \splittable{} if it contained $A_R\cup A_W$, contradicting Lemma~\ref{lem:splittable}.
\item
$A_W$ was part of non-top component $A''$ in $\ptnsplit$, which contained red and white leaves:
by the second observation, $A_W$ must have been part of a non-top component $A''$, which was \multicolored{} by property~\ref{itm:multi1} of Lemma~\ref{lem:properties}. Since $A''$ and $A_R$ were part of the same component of $\ptnRBcomp$ by the first observation, $A''\cup A_R$ must be \RBcomp{} by Lemma~\ref{lem:RBcomp}. This implies $A''$ does not contain any blue leaves, since otherwise such a blue leaf $\blueleaf{}$, and a red leaf $\redleaf{}\in A_R\cap \lbelow{\nodeinTtwo{v}}$ and a red leaf $\altredleaf{}\in A_R\setminus \lbelow{\nodeinTtwo{v}}$ would have $\lca_2(\blueleaf{},\redleaf{}) = \nodeinTtwo{v} \prec \lca_2(\blueleaf{}, \redleaf{}, \altredleaf{})$, and thus the triple would be in\compatible{}, contradicting that $A''\cup A_R$ is \RBcomp{}. 
\end{itemize}
Thus, $A''= A_W \cup (A''\cap R)$, and by property~\ref{itm:binotop} in Lemma~\ref{lem:properties}, the component $A''\cap R$ in $\ptnfeas$ can reach $\lca_2(A'')$. Since $\nodeinTtwo{w}\prec \lca_2(A'')$,  $\lca_2(A'')$ is on the path from $\nodeinTtwo{w}$ to $\nodeinTtwo{v}$. But then $A''\cap R$ can reach $\nodeinTtwo{v}$,  so $A_R$ and $A''\cap R$ give a pair to be added to \pairslist{} according to (b).
\end{proof}

\section{A compact formulation of the LP}\label{sec:compact}

Here we give a compact formulation for \eqref{eq:lp}.
This shows that it can be optimized efficiently.
While this is not needed in our algorithm, it is possible that an LP-rounding based algorithm could achieve a better approximation guarantee, in which case this formulation will be of use.
Moreover, the LP explicitly encodes the structure of compatible sets in a way that \eqref{eq:lp} does not; we believe this may provide additional structural insights in the future.

\medskip

We remark that \eqref{eq:lp} can also be shown to be polynomially solvable by providing a separation oracle for the dual. 
The dual of \eqref{eq:lp} is similar to \eqref{eq:dual}, the dual of \eqref{eq:lpbig}, except that $z$ is indexed only by singletons and not arbitrary subsets of $\leaves$. 
This dual has a polynomial number of variables, but an exponential number of constraints. 
By the equivalence of separation and optimization, it suffices to provide a separation oracle for this dual. 
The following problem subsumes this separation problem: given some (positive or negative) weights $y$ on the nodes of $V$, find a compatible subset $L \in \compat$ which maximizes $\sum_{v \in V[L]} y_v$.
This is a weighted variant of the \emph{maximum agreement subtree problem}. 
    Similar to the usual (unweighted) version~\cite{Steel93}, this can be solved in polynomial time via dynamic programming.

\bigskip
\newcommand{\W}{Z}
\newcommand{\arcsL}{U_1}
\newcommand{\arcsR}{U_2}

Assume for convenience that $\leaves = \{1,2,\ldots, n\}$.
We will deviate from the notational conventions in the previous sections, and use $i$ and $j$ to denote leaves, and we will use $t\in\{1,2\}$ to denote the indices of the two input trees. 

Consider a set $L \subseteq \leaves$, and the tree $T_t[L]$ it induces in $T_t$ for $t \in \{1,2\}$.
We can identify each internal node $u$ of $T_t$ that has two children in $T_t[L]$ by a pair consisting of the smallest leaf in $L$ in the subtree below each of the two children. If these two leaves are $i$ and $j$, where $i\leq j$, we will label $u$ with $(i,j)$. Note that $u = \lca_t(i,j)$. We extend this labeling to the leaves in $L$, and label $i\in L$ with the pair $(i,i)$. 

The set $L$ is compatible precisely if the set of labels of the labelled nodes of $T_1[L]$ and $T_2[L]$ are the same, and the ancestry relationship between the labelled nodes are also the same.
Observe that if an internal node $u$ in $T_t[L]$ is labelled with $(i_1,i_2)$, then it must have two descendants $u_1$ and $u_2$ where $u_1$ is labelled $(i_1,j)$ and $u_2$ is labelled $(i_2,k)$ for some $j,k\in L$.

Using this observation, we can use the labelling to describe compatibility constraints.
To do this, we begin by constructing a directed acyclic graph $D=(\W,\arcsL \cup \arcsR)$ as follows.
The node set $\W$ consists of all pairs $(i_1,i_2) \in \leaves^2$ for which $i_1 \leq i_2$.
With a slight abuse of notation, we define $\lca_t(r)$ for $r=(i_1,i_2)\in \W$ as $\lca_t(i_1,i_2)$. 
Given two nodes $r = (i_1, i_2)$ and $s = (j_1, j_2)$ in $\W$:
\begin{itemize}
    \item if $\lca_t(s) \prec \lca_t(r)$ for all $t\in \{1,2\}$ and $i_1 = j_1$, then $(r,s) \in \arcsL$;
    \item if $\lca_t(s) \prec \lca_t(r)$ for all $t\in \{1,2\}$ and $i_2 = j_1$, then $(r,s) \in \arcsR$.
\end{itemize}
Suppose $(r,r_\Lc)\in \arcsL, (r,r_\Rc)\in \arcsR$. Observe that then $\lbelow{\lca_t(r_\Lc)}$ and $\lbelow{\lca_t(r_\Rc)}$ are disjoint subsets of $\lbelow{\lca_t(r)}$ for $t\in \{1,2\}$. This implies that $r_\Lc$ and $r_\Rc$ cannot both have a directed path to the same node \modified{$s=(j_1,j_2)$}, since that would imply that $j_1,j_2\in \lbelow{\lca_t(r_\Lc)}\cap \lbelow{\lca_t(r_\Rc)}$.

Define $\W_\leaves = \{ (i,i) : i \in \leaves\}$.
Let $\mathcal{F}$ denote the set of out-arborescences in $D$ with leaf set contained in $\W_\leaves$ and where each internal node has one outgoing arc in $\arcsL$ and one outgoing arc in $\arcsR$.
Then the above implies that $L \in \compat$ if and only if there is an $F(L) \in \mathcal{F}$ with leaf set corresponding to $L$. (If $|L| = 1$, $F(L)$ is empty.)
Let $\chi_F \in \{0,1\}^{\arcsL \cup \arcsR}$ be the characteristic vector of the arc set of $F$, for any $F \in \mathcal{F}$. Let $\coneT$ denote the cone generated by $\{\chi_F : F \in \mathcal{F}\}$, i.e., $y\in \coneT$, if and only if there exists $x_L\ge 0$ for $L\in {\cal C}$ such that $y=\sum_{L \in \compat: |L| \geq 2} x_L \chi_{F(L)}$.

We begin by giving a description of $\coneT$. 
For $r \in \W$, let $\delta^+(r)$ denote the arcs in $D$ leaving $r$, and $\delta^-(r)$ the arcs entering $r$.
For $S \subseteq \arcsL \cup \arcsR$, let $y(S) = \sum_{a \in S} y_a$.

\begin{lemma}\label{lem:cone}
    \begin{align*}
        \coneT = \{ y \in \R_+^{\arcsL \cup \arcsR} : 
            y(\delta^+(r) \cap \arcsL) &= y(\delta^+(r) \cap \arcsR) \qquad \forall r \in \W  \setminus \W_\leaves \\
    y(\delta^+(r) \cap \arcsL) &\geq y(\delta^-(r)) \qquad \qquad \;\;\forall r \in \W \setminus \W_\leaves \}.
\end{align*}
\end{lemma}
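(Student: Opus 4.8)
The plan is to prove the two set inclusions separately. Let me denote by $P$ the polyhedron on the right-hand side, defined by the two families of linear constraints, and show $\coneT = P$.

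For the inclusion $\coneT \subseteq P$, it suffices by linearity (since $P$ is a cone containing the origin) to verify that $\chi_F \in P$ for each $F \in \mathcal{F}$; the cone generated by points in $P$ stays in $P$. So fix an out-arborescence $F \in \mathcal{F}$. Nonnegativity is immediate. For the equality constraint at an internal node $r \in \W \setminus \W_\leaves$: by the definition of $\mathcal{F}$, every internal node of $F$ has exactly one outgoing arc in $\arcsL$ and exactly one in $\arcsR$, while every node not in $F$ has none. Hence $\chi_F(\delta^+(r) \cap \arcsL)$ and $\chi_F(\delta^+(r) \cap \arcsR)$ are both $1$ if $r$ is an internal node of $F$ and both $0$ otherwise; in either case they are equal. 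For the inequality $y(\delta^+(r)\cap \arcsL) \ge y(\delta^-(r))$: since $F$ is an out-arborescence, each node has in-degree at most $1$, so $\chi_F(\delta^-(r)) \le 1$. If $\chi_F(\delta^-(r)) = 1$ then $r$ is a non-root node of $F$; a leaf of $\mathcal{F}$ lies in $\W_\leaves$ and we only quantify over $r \notin \W_\leaves$, so $r$ is an internal node and has outgoing $\arcsL$-degree $1$, giving $1 \ge 1$. If $\chi_F(\delta^-(r)) = 0$ the right side is $0$ and the inequality holds trivially.

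For the reverse inclusion $P \subseteq \coneT$, I would argue by induction on the support, decomposing an arbitrary $y \in P$ into a nonnegative combination of arborescence vectors $\chi_{F}$. The natural approach is a \emph{flow decomposition / peeling} argument: pick a node $r$ of maximal $\lca$-height (i.e., a topologically earliest node in $D$) whose outgoing arcs carry positive value, use the equality constraint to see its $\arcsL$-flow and $\arcsR$-flow are equal, and the key disjointness observation proved just before the lemma statement --- that the two children $r_\Lc, r_\Rc$ cannot both reach a common descendant --- to extract a well-defined out-arborescence $F$ rooted at $r$. One sends $\lambda = \min$ of the relevant arc values along this arborescence, sets $x_{L(F)} \mathrel{+}= \lambda$, subtracts $\lambda\,\chi_F$ from $y$, and verifies the residual $y - \lambda \chi_F$ still lies in $P$ (the equality constraints are preserved because we subtract a matched $\arcsL$/$\arcsR$ pair at every internal node, and the inequality constraints are preserved by the same bookkeeping). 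This strictly decreases the support or total weight, so the process terminates and expresses $y$ as $\sum_L x_L \chi_{F(L)}$ with $x_L \ge 0$.

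The main obstacle is the reverse inclusion, specifically showing that the greedy extraction always yields a genuine element of $\mathcal{F}$ --- a tree rather than a general subgraph --- and that the residual stays feasible. The disjointness fact (that $r_\Lc$ and $r_\Rc$ have no common descendant reachable in $D$) is exactly what guarantees the extracted subgraph is an out-arborescence and not a DAG with re-merging paths, so I would lean on it heavily: it ensures that once I branch left and right at $r$, the two subtrees use disjoint leaf labels and hence disjoint node sets, so no node is visited twice. The delicate bookkeeping is to choose $\lambda$ and the arborescence $F$ so that subtracting $\lambda \chi_F$ keeps $y(\delta^+(r)\cap\arcsL) = y(\delta^+(r)\cap\arcsR)$ at every internal node and never drives the inequality $y(\delta^+(r)\cap\arcsL) \ge y(\delta^-(r))$ negative; this is where I expect most of the real work, and I would carry it out by first establishing that the inequality constraints let me route incoming flow at $r$ entirely into outgoing flow, then selecting $F$ so that $\lambda$ equals the minimum arc value encountered, guaranteeing at least one arc hits zero and the support strictly shrinks.
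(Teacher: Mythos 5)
Your proposal is correct and follows essentially the same route as the paper's proof: verify that each generator $\chi_F$ satisfies the constraints for one inclusion, and for the other, induct on the support of $y$, pick a source node with zero inflow but positive outflow, extract an out-arborescence in the support using the left/right disjointness observation together with the inequality constraints, and peel off a maximal multiple so the support strictly shrinks. The only (minor) difference is that you explicitly flag the need to check the residual stays feasible, a step the paper leaves implicit.
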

\begin{proof}
    Let $Y$ denote the cone described by the right hand side of the claimed equality.
    It is clear that $\chi_F \in Y$ for any $F \in \mathcal{F}$, and hence that $\coneT \subseteq Y$.
    It remains to show that $Y \subseteq \coneT$.

    Suppose $y \in Y$; we prove that $y \in \coneT$, proceeding by induction on the size of the support of $y$.
    The claim trivially holds if $y=0$. So suppose $y\neq 0$.
    Choose $r=(i_1,i_2) \in \W$ such that $y(\delta^-(r)) = 0$ but $y(\delta^+(r)) > 0$.
    We now find an arborescence $F \in \mathcal{F}$ rooted at $r$ and contained in the support of $y$.
    This is trivial if $i_1 = i_2$; if not, we proceed as follows.
    Choose any $(r,r_\Lc) \in \arcsL \cap \delta^+(r)$ and $(r, r_\Rc) \in \arcsR \cap \delta^+(r)$ that are both in the support of $y$.
    Arguing inductively, we obtain arborescences $F_\Lc$ and $F_R$ in the support of $y$ rooted at $r_\Lc$ and $r_\Rc$ respectively.
    We have already noted that there is no node that both $r_\Lc$ and $r_\Rc$ can reach; thus $F_\Lc$ and $F_R$ are disjoint.
    We obtain $F$ by combining $F_\Lc$, $F_R$ and the arcs from $r$.

    Now set $y'= y - \epsilon \chi_F$, where $\epsilon$ is chosen maximally so that $y' \geq 0$.
    So $y'$ has smaller support, and so by induction, $y' \in \coneT$.
    Hence $y = y' + \epsilon \chi_F$ is too.
\end{proof}

\newcommand{\rootof}[1]{\rt(#1)}
\newcommand{\cLPtag}{$\text{LP}^\star$}

Using Lemma~\ref{lem:cone}, we now describe our compact formulation. 
For $t \in \{1,2\}$ and $v \in V_t$, let $\lca^{-1}(v) = \{ r \in \W : \lca_t(r) = v \}$.
\begin{align}
    \min \qquad \sum_{r \in \W \setminus \W_\leaves} &\bigl(y(\delta^+(r) \cap \arcsL) - y(\delta^-(r))\bigr) \;+\; \sum_{i \in \leaves} x_{\{i\}} \;-\;1 \hspace{-6cm} \tag{\cLPtag-1}\label{eq:compobj}\\
    \text{s.t.} 
	\qquad \qquad \qquad \qquad y(\delta^+(r) \cap \arcsL) &= y(\delta^+(r) \cap \arcsR) &&\forall r \in \W \tag{\cLPtag-2}\label{eq:conea}\\
    y(\delta^+(r) \cap \arcsL) &\geq y (\delta^-(r)) &&\forall r \in \W \tag{\cLPtag-3}\label{eq:coneb}\\
    x_{\{i\}} &= 1 - y(\delta^-(i,i)) &&\forall i \in \W  \tag{\cLPtag-4}\label{eq:leafsat}\\
    \sum_{r \in \lca^{-1}(v)} y(\delta^+(r) \cap \arcsL) &\leq 1 && \forall v \in V \setminus \leaves \tag{\cLPtag-5} \label{eq:overlapsat}\\
    x_{\{i\}} &\geq 0 && \forall i \in \leaves \notag\\
    y_a  &\geq 0 && \forall a \in \arcsL \cup \arcsR \notag
\end{align}

\begin{lemma}
    This LP is equivalent to \eqref{eq:lp}.
\end{lemma}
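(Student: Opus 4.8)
The plan is to prove that the two LPs have the same optimal value by moving between the variables $(x_L)_{L\in\compat}$ of \eqref{eq:lp} and the variables $\bigl(y,(x_{\{i\}})_i\bigr)$ of the compact LP through the substitution $y=\sum_{L\in\compat,\,|L|\ge 2} x_L\,\chi_{F(L)}$, where $F(L)\in\mathcal F$ is the arborescence associated with $L$. By Lemma~\ref{lem:cone}, a vector $y$ lies in $\coneT$ (equivalently, satisfies \eqref{eq:conea} and \eqref{eq:coneb}) exactly when it admits such a decomposition, so the substitution maps onto the feasible $y$'s in both directions. First I would record the bookkeeping identity that makes the objectives agree: for a single set $L$ with $|L|\ge 2$, the arborescence $F(L)$ has $|L|-1$ internal nodes (each with one outgoing $\arcsL$-arc) and exactly $|L|-2$ arcs whose head is internal, so $\chi_{F(L)}$ contributes $(|L|-1)-(|L|-2)=1$ to $\sum_{r\in\W\setminus\W_\leaves}\bigl(y(\delta^+(r)\cap\arcsL)-y(\delta^-(r))\bigr)$. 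Hence this sum equals $\sum_{|L|\ge 2}x_L$ for any decomposition, and the objective \eqref{eq:compobj} equals $\sum_{L}x_L-1$, the objective of \eqref{eq:lp}. I would also note the leaf identity $y(\delta^-(i,i))=\sum_{L\ni i,\,|L|\ge2}x_L$, since $(i,i)$ is a leaf of each $F(L)$ containing $i$.

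For the inequality $\mathrm{opt}_c\le\mathrm{opt}_o$ (writing $\mathrm{opt}_c,\mathrm{opt}_o$ for the two optimal values), I would send any feasible $x$ of \eqref{eq:lp} to $\bigl(y,(x_{\{i\}})\bigr)$ by the substitution above and verify compact feasibility. Constraints \eqref{eq:conea} and \eqref{eq:coneb} hold because $y\in\coneT$; \eqref{eq:leafsat} is the leaf equality $\sum_{L\ni i}x_L=1$ rewritten via the leaf identity; and for \eqref{eq:overlapsat} one checks that $\sum_{r\in\lca^{-1}(v)}y(\delta^+(r)\cap\arcsL)=\sum_{L:\,v\text{ a branching node of }T_t[L]}x_L$, which is at most $\sum_{L:\,v\in V[L]}x_L\le 1$. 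Since the objectives agree, the compact optimum is no larger.

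The subtle direction is $\mathrm{opt}_c\ge\mathrm{opt}_o$, and this is where the real work lies. The point is that \eqref{eq:overlapsat} bounds weight only at nodes that are \emph{branching} nodes of $T_t[L]$ (the $r\in\lca^{-1}(v)$ with $\lca_t(r)=v$), whereas \eqref{eq:lp} bounds weight at every node of $V[L]$, including those through which $T_t[L]$ merely passes. Consequently the decomposition of a compact-feasible $y$ may place total weight exceeding $1$ on such a pass-through node, so the induced $x$ need not be feasible for \eqref{eq:lp}; a pointwise correspondence is therefore unavailable and one must argue at the level of optima. I would establish the inequality by LP duality: starting from an optimal dual solution $(\mu,\nu)$ of \eqref{eq:lp} (with $\mu_v\le 0$ on internal nodes and $\nu_i$ on leaves, and per-set constraints $\sum_{i\in L}\nu_i+\sum_{v\in V[L]\setminus\leaves}\mu_v\le 1$), I would construct a feasible dual solution of the compact LP of the same value. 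The key step is to distribute each node potential $\mu_v$ onto the arcs of $D$: each node of $V[L]$ is charged to the unique arc of $F(L)$ on whose $T_t$-path it lies (branching nodes being the endpoints), so that summing the resulting per-arc compact-dual constraints along $F(L)$ reproduces the single per-set constraint of the dual of \eqref{eq:lp}. Weak duality for the compact LP then yields $\mathrm{opt}_c\ge\sum_i\nu_i+\sum_v\mu_v-1=\mathrm{opt}_o$, and with the previous paragraph the two optima coincide.

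The main obstacle is precisely this per-arc charging in the reverse inequality: one must check that the pass-through potentials can be assigned to arcs consistently (each internal node lying on a single arc-path of $F(L)$, simultaneously in $T_1$ and $T_2$) and that the resulting arc values meet the sign and structural requirements of the compact dual. Everything else—the objective identity, the membership $y\in\coneT$ via Lemma~\ref{lem:cone}, and the forward inequality—is routine.
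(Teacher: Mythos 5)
Your forward direction, and the bookkeeping behind it, are correct and coincide with what the paper does: the paper's proof is exactly the substitution $y=\sum_{L\in\compat,|L|\ge 2}x_L\chi_{F(L)}$ applied in both directions, using Lemma~\ref{lem:cone}. More importantly, you have put your finger on precisely the point that the paper's proof elides: under this substitution, the left-hand side of \eqref{eq:overlapsat} at $v\in V_t$ equals $\sum_L x_L$ summed only over those $L$ for which $v$ is a \emph{branching} node of $T_t[L]$ (i.e.\ $v=\lca_t(r)$ for an internal node $r$ of $F(L)$), whereas the constraint of \eqref{eq:lp} sums over all $L$ with $v\in V[L]$, including sets whose induced subtree merely passes through $v$. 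The paper asserts without comment that \eqref{eq:overlapsat} ``becomes'' the node constraint of \eqref{eq:lp}; it does not, and your identification of this as the crux is correct.

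Where your plan breaks down is the proposed duality repair: the inequality $\mathrm{opt}_c\ge\mathrm{opt}_o$ that it is meant to establish is in fact \emph{false} for the compact LP as written, so no charging scheme can succeed. Take $T_1=((1,2),(3,4))$ and $T_2=((1,3),(2,4))$. Here $\compat$ consists of all singletons and all pairs (every triple is incompatible), and summing the constraints of \eqref{eq:lp} at the four nodes $\lca_1(1,2),\lca_1(3,4),\lca_2(1,3),\lca_2(2,4)$ gives $3\sum_{|L|=2}x_L\le 4$, hence an optimal value of $5/3$ for \eqref{eq:lp}. On the other hand, $y=\tfrac12\bigl(\chi_{F(\{1,2\})}+\chi_{F(\{3,4\})}+\chi_{F(\{1,3\})}+\chi_{F(\{2,4\})}\bigr)$ with all $x_{\{i\}}=0$ satisfies \eqref{eq:conea}--\eqref{eq:overlapsat}: for instance at $v=\lca_2(1,3)$ constraint \eqref{eq:overlapsat} reads only $y(\delta^+((1,3))\cap\arcsL)=\tfrac12\le 1$, even though $\{1,2\}$, $\{3,4\}$ and $\{1,3\}$ all cover $v$ and place total mass $\tfrac32$ there in \eqref{eq:lp}. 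Its objective value is $1<5/3$. This is also why the charging step has no home: the variable $y_{(r,s)}$ simply does not appear in any constraint indexed by a node $v$ with $\lca_t(s)\preceq v\prec\lca_t(r)$ other than via $\lca^{-1}(v)$, so the potential $\mu_v$ of a pass-through node cannot be absorbed by the compact dual. The lemma can be rescued only by strengthening \eqref{eq:overlapsat}, e.g.\ to $\sum_{r\in\lca^{-1}(v)}\bigl(y(\delta^+(r)\cap\arcsL)-y(\delta^-(r))\bigr)+\sum_{(r,s):\,\lca_t(s)\preceq v\prec\lca_t(r)}y_{(r,s)}\le 1$, which does equal $\sum_{L:v\in V[L]}x_L$ under the decomposition; with that modification the direct substitution argument goes through in both directions and no duality is needed.
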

\begin{proof}
By Lemma~\ref{lem:cone}, \eqref{eq:conea} and \eqref{eq:coneb} ensure $y\in \coneT$ and thus
we can expand $y = \sum_{L \in \compat: |L| \geq 2} x_L \chi_{F(L)}$ for $x \geq 0$.
Then taking $x_L$ for $|L| = 1$ as defined by \eqref{eq:leafsat}, we ensure that $\sum_{L \in \compat: i \in L} x_L = 1$ for all $i \in \leaves$.
Constraint \eqref{eq:overlapsat} becomes $\sum_{L \in \compat: v \in L} x_L \leq 1$ for all non-leaves $v$.
And for $r \in \W \setminus \W_\leaves$, 
\[ y(\delta^+(r) \cap \arcsL) - y(\delta^-(r)) = \sum_{L \in \compat: \rootof{F(L)} = r} x_L, \]
where $\rootof{F}$ denotes the root of $F$.
So the objective becomes $\sum_{L \in \compat} x_L \;-\;1$.
Thus, this formulation is equivalent to \eqref{eq:lp}.
\end{proof}

\paragraph{Acknowledgements.}
    We acknowledge the support of the Tinbergen Institute and
    the Hausdorff Research Institute for Mathematics, where portions of this research was pursued.

N.O. was supported in part by NWO Veni grant 639.071.307.
F.S. was supported in part by NSF grants CCF-1526067 and CCF-1522054.
A.Z. was supported in part by grant  \#359525 from the Simons Foundation.

\bibliography{maf}

\appendix
\section{The running time}\label{sec:running}

It is quite clear from the definition of the Red-Blue Algorithm that it runs in polynomial time.
In this section we show that it can be implementated to run in $O(n^2)$ time, where $n$ denotes the number of leaves.
(We work in the random access machine model of computation, and assume a word size of $\Omega(\log n)$.)

We note that our presentation is focused on showing the bound on the running time as straightforwardly as possible, and there are some places where a more careful implementation is more efficient. However, we have not been able to find an implementation with an overall running time of $o(n^2)$. %

\newcommand{\cp}[1]{\nodeinTtwo{p}_{#1}} %
\newcommand{\bsz}{s}
\newcommand{\csz}{s} 
\newcommand{\lbl}{\text{m}}

We assume $\ptn$ is a given partition (not overlapping in $V_2$), that is stored such that we can query the size of any component in constant time, and that for each node $\nodeinTtwo{u}\in V_2$, we can query $A_{\nodeinTtwo{u}}$, the component in $\ptn$ that covers $\nodeinTtwo{u}$ (which will be equal to $\emptyset$ if $\ptn$ does not cover $\nodeinTtwo{u}$), and $\csz(\nodeinTtwo{u}) = |\lbelow{\nodeinTtwo{u}}\cap A_{\nodeinTtwo{u}}|$.
Note that we can determine this information by a bottom-up pass of $T_2$ in $O(n)$ time. We will recompute it whenever we refine $\ptn$; since there can only be at most $n-1$ refinement operations, the total time to maintain this information is $O(n^2)$.

By Harel~\cite{Harel80} (see also~\cite{HarelTarjan, LCARevisited}), we furthermore may assume that the computation of $\lca_i(u,v)$ for given nodes $u,v\in V_i$ takes constant time (after a linear preprocessing time). 
It immediately follows from this that we can determine whether or not $u \preceq v$ in tree $T_i$ in constant time as well.

We will show that the time between subsequent refinements of $\ptn$ is $O(n)$. This bounds the time of the main loop of the algorithm by $O(n^2)$. The only remaining part of the algorithm is the {\sc Merge-Components} step, which will perform at most $n-1$ merges, each of which can clearly be done in $O(n)$ time.

\subsubsection*{\it Finding a lowest \pcs{}}
We make a single pass through $T_1$, in bottom-up order (starting from the leaves), until we find a \pcs{}. We will spend constant time per node, thus showing that the time to find a lowest \pcs{} is $O(n)$.

For each node $u\in V_1$ that we have already considered, $A_u$ references the component $A \in \ptn$ which covers $u$, with $A_u = \emptyset$ if there are no such components. (If there are multiple such components, $u$ is a \pcs.) Furthermore, $\cp{u}$ is equal to $\lca_2(A_u \cap \lbelow{u})$, and $\bsz(u)$ is the size of $A_u \cap \lbelow{u}$.
Observe that for any $x \in \leaves$, we know $A_x$, the component containing $x$, and $\bsz(x) = 1$ and $\cp{x} = x$.

Given a non-leaf node $u \in V_1$, with children $u_1$ and $u_2$ that have already been considered, we can determine whether $u$ is a \pcs{}, and if not determine $A_u, \cp{u}$ and $\bsz(u)$, in constant time:
If either (or both) of $A_{u_1}$ and $A_{u_2}$ do not cover $u$ (which can be determined by checking if $A_{u_i} = \emptyset$ or $\bsz(u_i) = |A_{u_i}|$),
    set all the values according to which child (if any) does cover $u$, and end the consideration of node $u$.
    So assume from now on that both do cover $u$.
    
    If $A_{u_1} \neq A_{u_2}$, then $u$ satisfies the second condition of a \pcs{}, and we are done.
        Otherwise, $A_u = A_{u_1} = A_{u_2}$.
    Set $\cp{u} = \lca_2(\cp{u_1}, \cp{u_2})$ and  $\bsz(u) = \bsz(u_1) + \bsz(u_2)$.
        If $\cp{u_1} \nprec \cp{u}$ or $\cp{u_2} \nprec \cp{u}$, then $\lbelow{u}$ is incompatible, and $u$ satisfies the first condition of a \pcs{}.
    If $\csz(\cp{u}) = |A_u|$ and $\bsz(u) < |A_u|$ then $u$ satisfies the third condition for being a \pcs{}: by $\bsz(u) < |A_u|$, we know $A_u\setminus\lbelow{u}\neq\emptyset$. For any $w\in A_u\setminus\lbelow{u}$,  $\lca_1( A_u \cap \lbelow{u} ) \preceq u \preceq \lca_1( A_u \cap \lbelow{u} \cup \{ w \} )$, while by $\csz(\cp{u})=|A_u|$ we know that $\lca_2( A_u \cap \lbelow{u} ) = \lca_2( A_u )$. So $A_u\cap \lbelow{u}\cup \{w\}$ is in\compatible{} for any $w\in A_u\setminus \lbelow{u}$.
Otherwise, $u$ is not a \pcs{}, and we  finish our consideration of $u$.

\medskip

Once we have determined the coloring $(R, B, W)$, we compute $|A\cap C|$ for each component $A\in \ptn$ and $C\in \{R, B, W\}$. We also compute three additional labels for each node $\nodeinTtwo{u}\in V_2$: $\csz_C(\nodeinTtwo{u}) = |A_{\nodeinTtwo{u}} \cap \lbelow{\nodeinTtwo{u}}\cap C|$ for $C\in \{R, B, W\}$.
This information can be determined by a bottom-up traversal of $T_2$ in $O(n)$ time. We assume this information is updated whenever the partition is refined.

\subsubsection*{\sc Make-\RBcomp}
Consider the nodes of $T_2$ in bottom-up order, until we find a node $\nodeinTtwo{u}$ such that both $\csz_B(\nodeinTtwo{u})>1$ and $\csz_R(\nodeinTtwo{u})>1$. Since $\nodeinTtwo{u}$ is a lowest such node, if $|A_{\nodeinTtwo{u}}\cap R| = \csz_R(\nodeinTtwo{u})$ and $|A_{\nodeinTtwo{u}}\cap B|= \csz_B(\nodeinTtwo{u})$, then $A_{\nodeinTtwo{u}}$ is \RBcomp{}; otherwise $\nodeinTtwo{u}$ is precisely as indicated in {\sc Make-\RBcomp}.

\subsubsection*{\sc Make-Splittable}
We again consider the nodes in $T_2$ in bottom-up order.
For any node $\nodeinTtwo{u}$ with $A_{\nodeinTtwo{u}} \neq \emptyset$, using $\csz_C(\nodeinTtwo{u})$ for $C\in \{R, B, W\}$, we can check in $O(1)$ time whether $A_{\nodeinTtwo{u}} \cap \lbelow{\nodeinTtwo{u}}$ is \bicolored, and that for any $C \in \{R,B,W\}$ with $A_{\nodeinTtwo{u}} \cap C \neq \emptyset$, that $\csz_C(\nodeinTtwo{u}) < |A_{\nodeinTtwo{u}} \cap C|$ (and hence $(A_{\nodeinTtwo{u}} \setminus \lbelow{\nodeinTtwo{u}}) \cap C \neq \emptyset$).

\subsubsection*{\it \bf\sc Split}
Note that a regular split of a component $A$ can be done in $O(n)$ time, by simply checking the color of each leaf in $A$ and partitioning $A$ accordingly.
We now show how to check if $A$ needs a {\sc Special-Split} (and if so which of the two possible refinements is
applied) by considering the nodes in $V_2[A]$ in bottom-up order.

If $A$ is \tricolored{}, then the fact that $A$ is \RBcomp{} and \splittable{} implies that there exist $\nodeinTtwo{u}_R$ and $\nodeinTtwo{u}_B$ that are covered by $A$ and for which $A\cap \lbelow{\nodeinTtwo{u}_R}= A\cap R$ and $A\cap \lbelow{\nodeinTtwo{u}_B}=A\cap B$. Using a bottom-up traversal of $V_2$ will find $\nodeinTtwo{u}_R$ and $\nodeinTtwo{u}_B$ (they are the first nodes $\nodeinTtwo{v}$ encountered such that $A_{\nodeinTtwo{v}}=A$ and $\csz_C(\nodeinTtwo{v})=|A\cap C|$ for $C=R$ and $B$ respectively).

Given $\nodeinTtwo{u}_R$ and $\nodeinTtwo{u}_B$, we check if a {\sc Special-Split} is required, by considering $\nodeinTtwo{u} = \lca_2(A\cap (R\cup B))=\lca_2(\nodeinTtwo{u}_R,\nodeinTtwo{u}_B)$;  a {\sc Special-Split} is required exactly if $\csz(\lca_2(\nodeinTtwo{u}_R,\nodeinTtwo{u}_B))<|A|$, since in that case any $x_W\in A\setminus\lbelow{\lca_2(\nodeinTtwo{u}_R,\nodeinTtwo{u}_B)}$ forms a \compatible{} triple with any $x_R\in A\cap R, x_B\in A\cap B$. If, in addition, $\csz_W(\lca_2(\nodeinTtwo{u}_R,\nodeinTtwo{u}_B))=0$, we know that every \tricolored{} triple in $A$ is \compatible.

\subsubsection*{\it \sc Find-Merge-Pair}
We need to determine if there exist two components (both
intersecting $R \cup B$) that can be merged, in time $O(n)$. If such components are found, then we can take a non-white leaf in each component and add this pair to \pairslist{}.
Recall Lemma~\ref{lem:possiblemerges}, which enumerates all possible situations where a potential merge may exist.
\begin{enumerate}[(a)]
\item
\emph{$\ptn$ has a \bicolored{} component. }
This component must have been created by an application of {\sc Special-Split}, splitting some component $A \cup A^* \in \ptnsplit$ into $A$ and $A^*$.
        As discussed in the proof of Lemma~\ref{lem:possiblemerges}, simply undoing this split is a valid merge.
        Since {\sc Special-Split} is invoked at most once per iteration, we can simply add a pair  to \pairslist{} during {\sc Special-Split}.
\item
\emph{There is a node $\hat u\in V_2$ that can be reached by  two red or two blue components that were part of the same component at the start of the current iteration.} By Lemma~\ref{lem:coloring} (and property~\ref{itm:multi0} of Lemma~\ref{lem:properties}), any two components that are not white that were created in the current iteration must have been part of the same partition at the start  of the iteration. We may assume that we can check for each component in constant time whether it was created in the current iteration.

        We work bottom-up in $T_2$, and set  $\reachu_{\nodeinTtwo{u}}$ to be the set of red and blue components that were created in the current iteration, and that can reach $\nodeinTtwo{u}$ for every $\nodeinTtwo{u}\in V_2$.
    If $\reachu_{\nodeinTtwo{u}}$ contains two components of the same color, these two components can be merged, and we terminate.

Note that if $\reachu_{\nodeinTtwo{u}}$ ever contains three components, we will have found a merge and the algorithm will terminate.  This ensures that we can compute this for $\nodeinTtwo{u}$ in constant time, given the values of $A_{\nodeinTtwo{u}}$ and $\reachu_{\nodeinTtwo{u}_1}$ and $\reachu_{\nodeinTtwo{u}_2}$ for the children of $\nodeinTtwo{u}$.
\item
\emph{There is a node $\nodeinTtwo{u}\in V_2$ that can be reached by a red and a blue component that were part of the same component $A_0$ at the start of the current iteration, but is not covered by these components. 
    Furthermore, the node $\nodeinTtwo{u}$ must satisfy that the nodes on the path from $\nodeinTtwo{u}$ to $\lca_2(A_0)$ are not covered by any red or blue component.} Note that by Lemma~\ref{lem:coloring}, $A_0$ is the only component that was modified in the current iteration, so for this second condition we can simply check that no node on the path from $\nodeinTtwo{u}$ to the root of $V_2$ is covered by a red or blue component that was created in the current iteration.
    
    If we did not find two components of the same color that can be merged, we have found $\reachu_{\nodeinTtwo{u}}$ for every $\nodeinTtwo{u}\in V_2$, where $|\reachu_{\nodeinTtwo{u}}|\le 2$. We now work top-down in $T_2$. If we encounter a node $\nodeinTtwo{u}$ that is covered by a red or blue component that was created in the current iteration, we stop and do not consider the descendants of $\nodeinTtwo{u}$ (since for any such a descendant, $\nodeinTtwo{u}$ is on its path to $\lca_2(A_0)$).
            If we encounter a node $\nodeinTtwo{u}$ such that $|\reachu_{\nodeinTtwo{u}}|=2$ and $\nodeinTtwo{u}$ is not covered by any component, the two components in $\reachu_{\nodeinTtwo{u}}$ can be merged,             and we may terminate.
\end{enumerate}

\section{Integrality gap lower bounds}\label{sec:intgap}

\newcommand{\eps}{\ensuremath{\varepsilon}}
We show a lower bound on the integrality gap of $\tfrac{16}5$ for the integer linear program formulation of Wu~\cite{Wu09}.
Recall that a solution to MAF can be viewed as the leaf sets of the trees in a forest, obtained by deleting edges from the input trees. The formulation has binary variables $x_e$ for every edge $e \in T_1$, indicating whether $e$ is deleted from $T_1$. 
We use $P(i, j)$ to denote the set of edges in $T_1$ on the path between leaves $i$ and $j$, and $P_2(i, j)$ to denote the  set of edges in $T_2$ on the path between leaves $i$ and $j$. 
Wu's integer linear program~\cite{Wu09} is given by:

\newlength{\xlength}
\settowidth{\xlength}{$\displaystyle\sum_{e \in P( i,j )} x_e + \sum_{e \in P( k,\ell )} x_e \geq 1$}
\newlength{\xalength}
\settowidth{\xalength}{$\minimize \quad$}
\newlength{\xxlength}
\setlength{\xxlength}{\textwidth}
\addtolength{\xxlength}{-\xlength}
\addtolength{\xxlength}{-\xalength}
\addtolength{\xxlength}{-1em}

\begin{align*}
\minimize \quad& \sum_e x_e \\
\text{s.t.}\quad & \sum_{e \in P( i,j ) \cup P( i,k ) \cup P( j,k ) } x_e \geq 1 && \mbox{for all incompatible triples $i,j,k$} \\
& \sum_{e \in P( i,j )} x_e + \sum_{e \in P( k,\ell )} x_e \geq 1 && \parbox{\xxlength}{ for all two pairs $(i,j)$ and $(k,\ell)$ for which $P(i,j) \cap P(k,\ell) = \emptyset$, and $P_2(i,j) \cap P_2(k,\ell) \neq \emptyset$}
\end{align*}
The first family of constraints ensures that at least one edge of the paths between $i$ and $j$, $i$~and $k$, and $j$~and $k$ has to be deleted for each inconsistent triple $i$, $j$ and $k$. The second family of constraints ensures that at least one edge is deleted for every pair of paths between $i$ and $j$, and $k$ and $\ell$ that are disjoint in $T_1$, but for which the corresponding paths in $T_2$ are not disjoint. 

\begin{lemma}
The integrality gap of the integer linear program of  Wu~\cite{Wu09} is at least $\tfrac{16}5$.
\end{lemma}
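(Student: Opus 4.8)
The plan is to prove the lower bound by exhibiting one explicit instance $(T_1,T_2)$ on a common leaf set, together with (i) a \emph{feasible fractional} solution of objective value $5$, and (ii) a proof that \emph{every integral feasible} solution has objective value at least $16$. Since the integrality gap is the quotient of the integral and fractional optima, any such pair of bounds yields $\mathrm{IP}/\mathrm{LP}\ge 16/5$ immediately. Because Wu's program~\cite{Wu09} is a correct integer formulation of MAF, its integral optima coincide with MAF values, so part (ii) is the same as bounding the MAF value of the instance from below; I will keep both viewpoints available, using the combinatorial (MAF) one where it is cleaner.

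First I would design the two trees to be as symmetric as possible, so that a solution that respects that symmetry — in the easiest case a \emph{uniform} assignment $x_e=\alpha$ for a single constant $\alpha$ — is feasible. Feasibility then reduces to two verifications. For the triple family, I must check that for every incompatible triple $i,j,k$ the union $P(i,j)\cup P(i,k)\cup P(j,k)$ contains at least $1/\alpha$ edges of $T_1$. For the pair family, I must check that for every pair of leaf-pairs $(i,j),(k,\ell)$ with $P(i,j)\cap P(k,\ell)=\emptyset$ but $P_2(i,j)\cap P_2(k,\ell)\neq\emptyset$, the two paths $P(i,j)$ and $P(k,\ell)$ together contain at least $1/\alpha$ edges. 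Choosing $\alpha$ as the reciprocal of the smallest support occurring across both families makes the point feasible, and the fractional objective is then $\alpha\,|E(T_1)|$; I would tune the instance so that this value equals exactly $5$ (no blow-up or clearing of denominators should be needed, since $5$ is already integral).

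Second, I would lower bound the integral optimum by showing that no subset of fewer than $16$ edges of $T_1$ can satisfy all the constraints, equivalently that every agreement forest of the instance has at least $17$ parts so that its MAF value is at least $16$. The decisive point is that this argument must use integrality: an argument that merely exhibits many constraints with pairwise disjoint supports would bound the fractional optimum equally well and could never produce a gap. Instead I would exploit that an integral deletion that breaks one incompatible triple typically fails to break the other triples whose path-unions share no common edge, so that a single cut cannot be ``amortized'' across several constraints the way fractional weight can; tracing this through the instance forces the count up to $16$. For a small explicit instance this lower bound can also be confirmed by exhaustive case analysis or by running an exact MAF computation. Combining $\mathrm{LP}\le 5$ with $\mathrm{IP}\ge 16$ then gives the claimed integrality gap of at least $\tfrac{16}{5}$.

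The main obstacle I anticipate is the \emph{second} (pair) family of constraints. These are the numerous and potentially short-support constraints that determine $\alpha$, and verifying that the single uniform (or symmetric) fractional solution simultaneously satisfies all of them requires carefully tracking how disjoint $T_1$-paths can become overlapping in $T_2$; getting this interaction to cooperate with the symmetry is the crux of making the fractional value as small as $5$. The dual difficulty, on the integral side, is ruling out a clever deletion set that reuses each cut to discharge many constraints at once, which is exactly where the integrality-based separation argument above has to be made airtight.
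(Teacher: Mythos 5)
Your proposal is a plan rather than a proof: it contains no actual instance, no verified fractional solution, and no argument for the integral lower bound beyond the assertion that ``tracing this through the instance forces the count up to $16$.'' All three of these are precisely the content of the lemma, so as it stands there is nothing to check for correctness. Beyond that, the framing itself is at odds with what can be achieved. The paper does not exhibit a single instance with $\mathrm{IP}\ge 16$ and $\mathrm{LP}\le 5$; it constructs an \emph{asymptotic family} (complete binary trees on $n=2^k$ leaves, where the leaf labelled by a binary string in $T_1$ is labelled by the \emph{reversed} string in $T_2$) for which $\mathrm{IP}\ge n-2\sqrt{n}+1$ while the LP admits a solution of value $\tfrac{5}{16}n$, and the ratio only tends to $\tfrac{16}{5}$ as $n\to\infty$. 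Your plan of pinning down one finite instance with those exact values has no candidate construction behind it, and nothing in the paper suggests one exists.

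The two substantive ideas you would need, and which are missing, are the following. For the integral side, the paper argues that any compatible component $A$ has $2(|A|-1)$ ``bifurcating'' internal nodes split between the two trees, matched by a bijection under which the $T_1$-label of a node is a common prefix and the $T_2$-label is a reversed common suffix of the leaves below it; hence at least one of each matched pair has depth less than $k/2$, so $A$ covers at least $|A|-1$ of the only $2\sqrt{n}-2$ such shallow nodes. Non-overlap then forces at least $n-2\sqrt{n}+2$ components. This is a global counting argument about the structure of agreement forests, not the local ``one cut cannot be amortized across triples'' heuristic you describe, which by itself bounds nothing. For the fractional side, the feasible point is \emph{not} uniform: it places $\tfrac14$ on the $n$ leaf edges and $\tfrac18$ on the $n/2$ edges one level higher (and $0$ elsewhere), giving value $\tfrac{5}{16}n$; a uniform assignment would not produce the right constant. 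Verifying feasibility of this point against both constraint families is a concrete computation you have not attempted.
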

\begin{proof}
Let $n=2^k$ for some $k$ even. %
We label each internal node in both $T_1$ and $T_2$ with a binary string: the roots get the empty string as label, and given an internal node $u$ its left child gets $u$'s label with a ``0'' appended, and its right child gets $u$'s label with a ``1'' appended.
In $T_1$, the leaves are labeled in the same way as the internal nodes, with a binary string of length $k$. In $T_2$, the binary string is {\em reversed} to give the label of the leaf. For example, the leftmost leaf (of both trees) has label $00\cdots 0$, and the leaf to the right of it has label $0\cdots 01$ in $T_1$, and $10\cdots 0$ in $T_2$.

Consider the internal nodes whose label is a string of length strictly less than $k/2$; there are exactly $2^{k/2}-1=\sqrt{n}-1$ such nodes in each tree. We claim that any component $A$ must cover at least $|A|-1$ of these internal nodes. To see this, consider the set of internal nodes in $V_t$ that have two children in the subtree of $T_t$ induced by $A$ for $t=1,2$. We will call such nodes  \emph{bifurcating}. Observe that there are $2(|A|-1)$ such nodes. Furthermore, since $A$ is \compatible{}, there is a 1-1 mapping $f$ from the bifurcating nodes in $V_1$ to the bifurcating nodes in $V_2$, where, $\lbelow{u}\cap A=\lbelow{f(u)}\cap A$. Now, the label for a bifurcating node $u\in V_1$ is the maximum length prefix that the binary strings for the leaves in $\lbelow{u}\cap A$ have in common, and the label for $f(u)$ is the reverse of the maximum length suffix the leaves in $\lbelow{u}\cap A$ have in common. Hence, at least one of $u$ and $f(u)$'s labels has length less than $k/2$.

The fact that any component $A$ must cover at least $|A|-1$ of the $2\sqrt{n}-2$ internal nodes with labels of length less than $k/2$ implies that any partition that does not overlap must have at least $n-2\sqrt{n}+2$ components. Thus the optimal value of the integer program is at least $n-2\sqrt{n}+1$.

On the other hand, the LP relaxation of the integer program has a feasible solution with objective value $\frac{5}{16}n$: set a value of $\tfrac 14$ on the edges to each leaf in the tree (i.e., from an internal node with a label of length $k-1$ to a node with a label length $k$), and a value of $\tfrac 18$ on all edges between nodes with labels of length $k-2$ to nodes with labels of length $k-1$.
This implies a lower bound of $\lim_{n\to\infty}\frac{n-2\sqrt{n}+1}{\frac{5}{16}n}=\tfrac{16}5$ on the integrality gap.
\end{proof}

As remarked in the introduction, the largest integrality gap for our formulation that we are aware of is $5/4$.
The instance is described in Figure~\ref{fig:intgap}.

\begin{figure}
\begin{center}
    \begin{tikzpicture}[scale=0.9]
\node (T1) at (0,3) {$T_1$};
\node (1) at (0,0) {$1$};
\node (2) at (1,0) {$2$};
\node (3) at (2,0) {$3$};
\node (4) at (3,0) {$4$};
\node (5) at (4,0) {$5$};
\node (6) at (5,0) {$6$};
\node (7) at (6,0) {$7$};
\node (8) at (7,0) {$8$};
\coordinate (12) at (0.5,1);
\coordinate (23) at (1.2,1.5); 
\coordinate (34) at (1.9,1.9); 
\coordinate (45) at (2.6,2.3); 
\coordinate (56) at (3.3,2.6); 
\coordinate (67) at (4,2.85); 
\coordinate (78) at (4.8,3.1); 
\path[c1t] (1) edge (12)
	(2) edge (12)
	(12) edge (23)
	(3) edge (23)
    ;
\path[c3t]
    (4) edge (34)
    (6) edge (56)
    (34) edge (45)
    (45) edge (56)
    (7) edge (67)
    (56) edge (67)
;
\path[c2] (1) edge (12)
	(12) edge (23)
    (23) edge (34)
    (5) edge (45)
    (34) edge (45)
    (45) edge (56)
    (56) edge (67)
    (8) edge (78)
    (67) edge (78)
;
\end{tikzpicture}
\qquad
\begin{tikzpicture}[scale=0.9]
\node (T2) at (0,3) {$T_2$};
\node (1) at (0,0) {$1$};
\node (5) at (1,0) {$5$};
\node (8) at (2,0) {$8$};
\node (2) at (3,0) {$2$};
\node (7) at (4,0) {$7$};
\node (3) at (5,0) {$3$};
\node (6) at (6,0) {$6$};
\node (4) at (7,0) {$4$};
\coordinate (15) at (.5,1);
\coordinate (58) at (1.2,1.5);
\coordinate (27) at (3.5,1);
\coordinate (28) at (1.9,1.9);
\coordinate (36) at (5.5,1);
\coordinate (46) at (6.1,1.5);
\coordinate (r) at (3.7,2.7);
\path[c2t]
(1) edge (15)
(5) edge (15)
(8) edge (58)
(15) edge (58)
;
\path[c3t]
(7) edge (27)
(6) edge (36)
(4) edge (46)
(27) edge (28)
(r) edge (46)
(r) edge (28)
(36) edge (46)
;
\path[c1] 
(1) edge (15)
(15) edge (58)
(2) edge (27)
(27) edge (28)
(58) edge (28)
(36) edge (46)
(3) edge (36)
(r) edge (46)
(r) edge (28)
;
\end{tikzpicture}
\end{center}
\caption{Example with integrality gap of $\tfrac 54$ for the new ILP introduced in Section~\ref{sec:analysis}. An optimal solution to the LP-relaxation is indicated by the colors: the compatible sets corresponding to each of the components $\{1,2,3\}, \{1,5,8\}$ and $\{4,6,7\}$ (indicated by the colors red, blue and green respectively) have an $x$-value of $\tfrac 12$, as well as every singleton leaf set, except leaf $1$; all other $x$-values are $0$. The objective value of this solution is $\tfrac 12(3+7)-1 = 4$. An optimal solution to the ILP has $6$ components, i.e., an objective value of $5$.}\label{fig:intgap}
\end{figure}
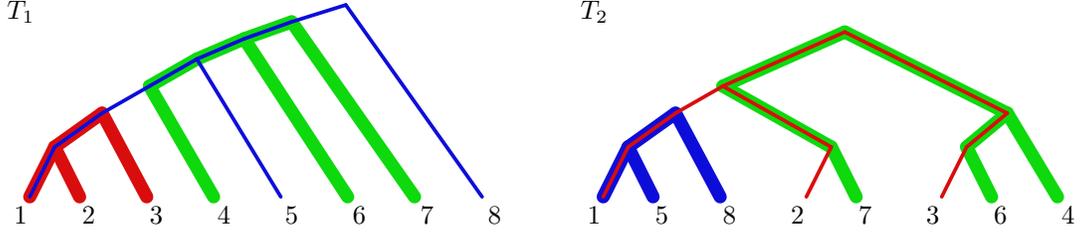

\end{document}